\newtheorem{theorem}{Theorem}
\newtheorem{lemma}{Lemma}
\newtheorem{prop}{Proposition}
\newenvironment{thrm}{\par{\bf Theorem \rm}\em}{}
\newenvironment{proof}{\par\indent{\emph{Proof.}}}{\hfill$\square$\vspace{1.4mm}\par\noindent}
\newcommand{\X}{\mathbf{x}}
\newcommand{\E}{\boldsymbol{\eta}}
\newcommand{\Y}{\mathbf{y}}
\newcommand{\XI}{\boldsymbol{\xi}}
\newcommand{\Oj}{\mathbf{O}}
\newcommand{\bZ}{\mathbf{z}}
\newcommand{\cN}{\mathcal{N}}
\newcommand{\cD}{\mathcal{D}}
\newcommand{\N}{\boldsymbol{n}}
\newcommand{\bcN}{\mathbf{N}}
\newcommand{\boxedeqn}[1]{%
\[\fbox{%
\addtolength{\linewidth}{-2\fboxsep}%
\addtolength{\linewidth}{-2\fboxrule}%
\begin{minipage}{\linewidth}%
\begin{equation}#1\end{equation}%
\end{minipage}%
}\]%
}
\numberwithin{equation}{section}
\title{Green's kernels for transmission problems  in bodies with small inclusions}
\author{V. Maz'ya\footnote{Department of Mathematical Sciences, University of Liverpool, Liverpool L69 3BX, U.K., and Department of Mathematics, Link\"oping University, SE-581 83 Link\"oping, Sweden.}, A. Movchan\footnote{Department of Mathematical Sciences, University of Liverpool, Liverpool L69 3BX, U.K.}, M. Nieves$^\dagger$\\
\\ \\
\emph{In memory of V. B. Lidskii}}
\date{}
\begin{document}

\maketitle
\begin{abstract}
The uniform asymptotic approximation of Green's kernel for the transmission problem of antiplane shear  is obtained for domains with small inclusions. 
The remainder estimates are provided. Numerical simulations are  presented to illustrate the effectiveness of the approach.
\end{abstract}

\section{Introduction}\label{transintro}
Our goal is to obtain a uniform asymptotic approximation of Green's function for a transmission problem  
of antiplane shear 
 in  a domain with small inclusions.

Exact solutions to singularly perturbed problems corresponding to bodies with defects are often unavailable. For complicated geometries
involved in problems of this kind, i.e. domains with multiple small perforations, numerical algorithms may become incapable of reaching the required accuracy. Also, when the right-hand sides 
of such boundary value problems  are singular, numerical procedures can  suffer from the same deficiencies. In this case, asymptotic 
solutions to these problems are desirable.

The approximation of Green's kernels for regularly perturbed problems, for 
the Laplace operator and the biharmonic operator, was first studied by Hadamard in \cite{Hadv1}.
More recently, the question of uniformity for the approximations of Green's kernels for boundary value problems in domains with singularly and regularly perturbed boundaries, was addressed in \cite{8}.
The uniform approximations in \cite{8} were derived using the method of compound asymptotic expansions.

The paper \cite{1} contains the rigorous proofs and remainder estimates for uniform approximations of Green's kernels, given in \cite{8},  for $-\Delta$ in an $n$-dimensional domain $(n\ge 2)$, with a single small rigid inclusion. 
Uniform asymptotic formulae of Green's functions for  mixed problems of antiplane shear in domains with a small hole or a crack, are discussed in detail in \cite{MazMovmix}. 
In \cite{MMN1}, an extension of the theory developed in \cite{1} is made to the case of Green's tensors of vector elasticity, for an elastic body with a small inclusion. This was followed by \cite{MMN2}, where uniform asymptotics of Green's kernels for planar and three-dimensional elasticity in bodies with multiple rigid inclusions are given.
The paper \cite{MMN2} also includes analysis of Green's kernels and  numerical simulations for   antiplane shear and plane strain.
  
In the present paper, the new feature of the problem tackled 
is that on the small inclusions we prescribe transmission conditions (the continuity of tractions and displacements). The inclusions are assumed to be occupied by materials which are different from that of the ambient medium. Compared to previous expositions into the uniform approximation of Green's kernels in \cite{1, MazMovmix, MMN1}, where the kernels are approximated in the bodies containing small holes 
we also must approximate the Green's kernel inside the inclusions. The analysis also brings additional boundary layers when the point force is placed inside the inclusion.

 Below, we illustrate one of the main results in this article, for the case when the domain has a single inclusion.
Let $\omega_\varepsilon$ be a small planar inclusion, occupied by a material of shear modulus $\mu_I$, containing the origin $\Oj$ and  which is perfectly bonded to the rest of the matrix $\Omega_\varepsilon \subset \mathbb{R}^2$ whose shear modulus is $\mu_O$. Here, $\varepsilon$ is a small positive parameter characterising the normalized size of the inclusion. 
Consider the antiplane shear Green's function $N_\varepsilon$ for the transmission problem 
 inside the perturbed domain $\Omega_\varepsilon \cup \omega_\varepsilon$. We also use $N^{(\Omega)}$, as the Neumann function in the unperturbed domain (without the inclusion) and $R^{(\Omega)}$ as its regular part. We denote by $\cN$  the Green's function for the transmission problem,  in the unbounded domain with  the scaled inclusion containing the origin. Also let us define the  vector function  $\cD=(\cD_1, \cD_2)^T$, where the components $\cD_j$, $j=1,2$ are the dipole fields for the scaled inclusion in the unbounded domain. 
 
As one of the  results we  state

\vspace{0.1in}\begin{thrm}
The  approximation of Green's function  for the 
transmission problem  of antiplane shear in $\Omega_\varepsilon\cup \omega_\varepsilon\subset \mathbb{R}^2$, 
is given by
\boxedeqn{\label{introtransmr}\begin{array}{c}
\displaystyle{N_\varepsilon(\X, \Y)=N^{(\Omega)}(\X, \Y)+ \cN(\varepsilon^{-1}\X, \varepsilon^{-1}\Y)+(2\pi\mu_O)^{-1} \log(\varepsilon^{-1}|\X-\Y|)}\\ \\
\displaystyle{+\varepsilon \cD(\varepsilon^{-1}\X) \cdot \nabla_\X R^{(\Omega)}(\Oj, \Y)+ \varepsilon\cD(\varepsilon^{-1}\Y) \cdot \nabla_\Y R^{(\Omega)}(\X, \Oj)+O(\varepsilon^2)}
\end{array}}
uniformly for $\X, \Y\in \Omega_\varepsilon \cup \omega_\varepsilon$.
\end{thrm}

\vspace{0.1in}The structure of the article is as follows. 
In Section \ref{transnotmult}, we introduce the main notations that will be adopted throughout the text and define  Green's function for  the domain containing several inclusions. Section \ref{modfield} contains the description of  model fields used to construct the uniform asymptotic approximations of the Green's function. In Section \ref{transunbest}, we  state and prove an estimate for solutions to model transmission problems in an unbounded domain with an inclusion. Solutions to transmission problems in a domain with several small inclusions are studied in   Section \ref{transestmultiplinc}.
 Results of these sections will aid us in deriving the remainder estimate present in the generalization of  approximation (\ref{introtransmr}) to the case of multiple inclusions.
Asymptotic properties of the boundary layer fields, involving the regular part of Green's function for the transmission problem,
in the unbounded domain with an inclusion
 are investigated in Sections \ref{transregest}. 
Then,  we  consider the uniform approximation of  Green's function for the transmission problem  in the domain with  small inclusions in Section \ref{transmultot}, and give the formal algorithm together with the remainder estimates.
 Finally, in  Section \ref{antiplnumsim},
we demonstrate the effectiveness of our approach and present the numerical simulations comparing the asymptotic formula in Section \ref{transmultot} with the finite element calculations in COMSOL.

The asymptotic formulae obtained in the sequel are readily applicable to numerical simulations. As an example, Figure  \ref{figtrans} shows the comparison between an asymptotic approximation  and COMSOL computation for the modulus of the gradient of the  regular part of Green's function for the
transmission problem in a domain with a circular inclusion, for the case when the point force is applied  outside 
 the inclusion in a planar body. This Figure represents the regular part of the displacement field produced by a point force inside a Cast Iron disk containing an Aluminum inclusion:   Fig. \ref{figtrans}a, shows  the computations obtained through the formula  (\ref{introtransmr}) when $\Y \in \Omega_\varepsilon$, while Fig. \ref{figtrans}b corresponds to the numerical finite element solution produced  in COMSOL. The surface plots shown are very similar.

\begin{figure}[htbp]
\begin{minipage}[b]{0.5\linewidth}
\centering
\includegraphics[width=\textwidth]{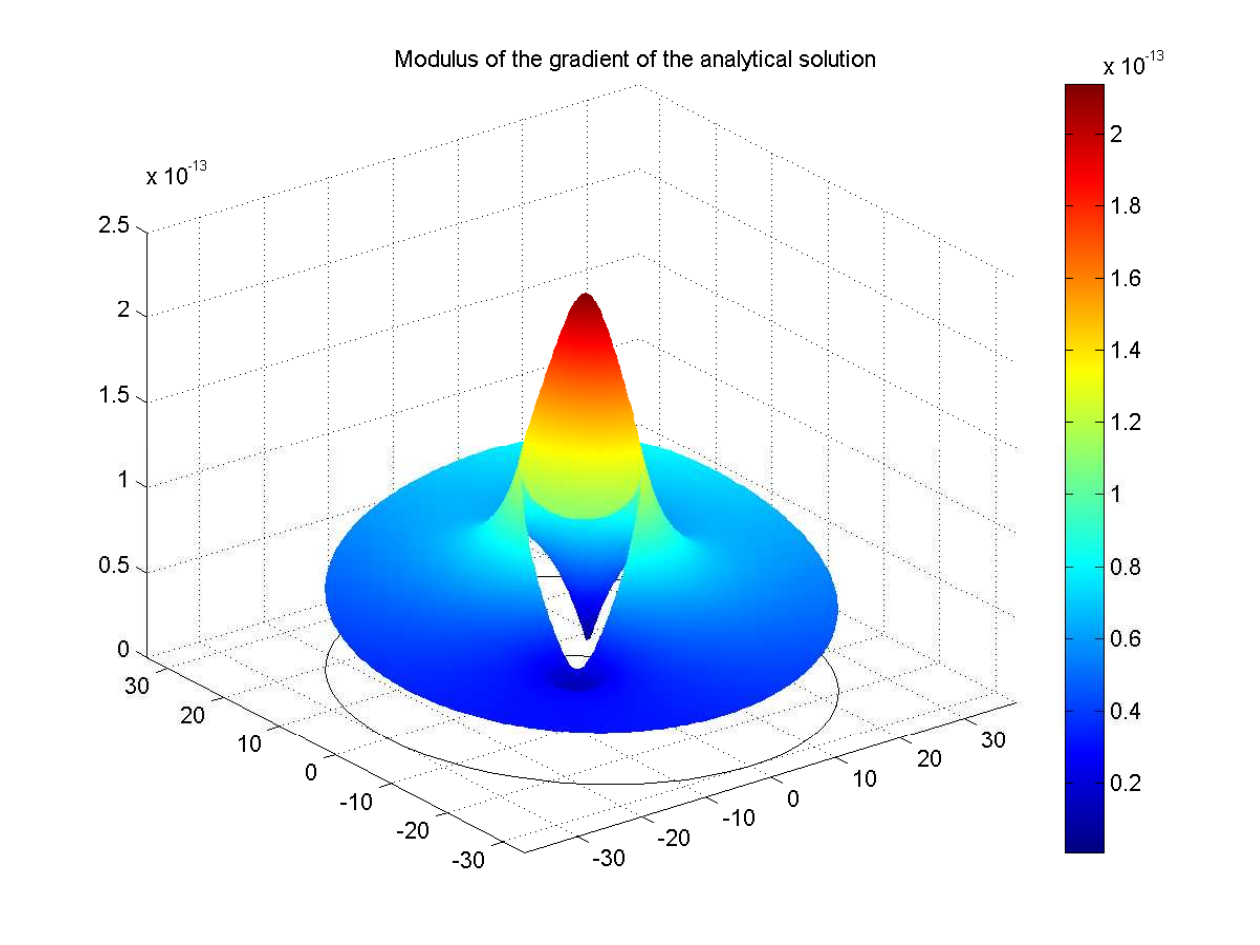}
a)
\end{minipage}
\begin{minipage}[b]{0.5\linewidth}
\centering
\includegraphics[width=\textwidth]{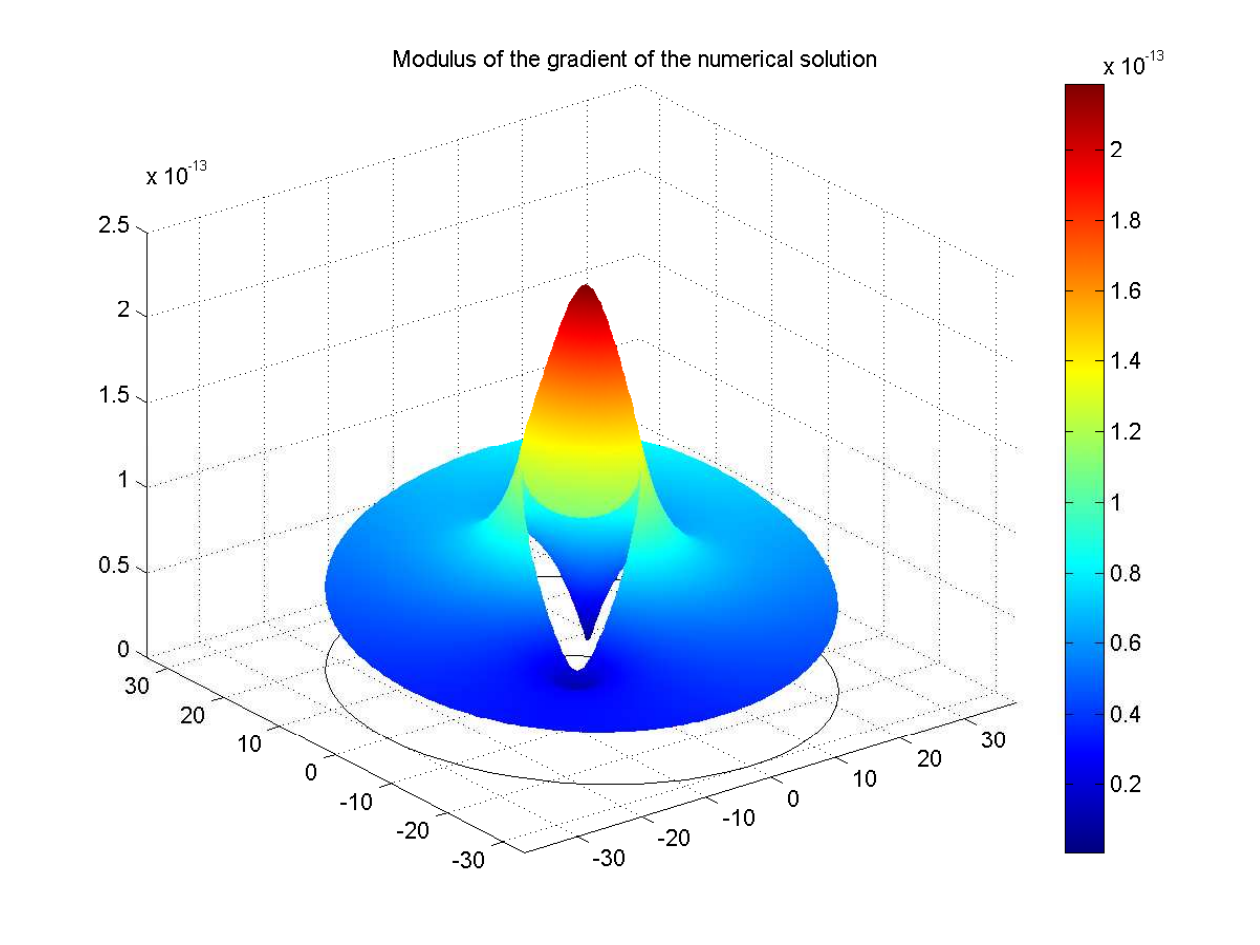}
b)
\end{minipage}
\caption{a) Modulus of the gradient of the regular part of the Green's function, computed with  the aid of the asymptotic formula (\ref{introtransmr}) when $\Y \in \Omega_\varepsilon$, b) A finite element computation (produced in COMSOL) for the modulus of the gradient of the regular part of the Green's function for the transmission problem in a domain with an inclusion. Here, we consider a circular cylinder of radius 30m containing a circular inclusion of radius 7m, the shear modulus of the inclusion is $\mu_I=2.6316 \times 10^{10} \text{Nm}^{-2}$ (an Aluminum inclusion), where the shear modulus for the rest of the matrix is $\mu_O=5.6\times 10^{10} \text{Nm}^{-2}$. The position of the unit point force is $\Y =(10\text{m}, 10\text{m})$, which is quite close to the boundary of the inclusion.}
\label{figtrans}
\end{figure}


\section{Main Notations}\label{transnotmult}
Let $\Omega$ be a subset of $\mathbb{R}^2$, with smooth boundary $\partial \Omega$ and compact closure $\bar{\Omega}$. Also let   $\omega^{(j)} \subset \mathbb{R}^2$, have  smooth boundary $\partial \omega^{(j)}$ and compact closure $\bar{\omega}^{(j)}$, whose complement  in the infinite plane is  $C \bar{\omega}^{(j)}=\mathbb{R}^2 \backslash \bar{\omega}^{(j)}$, $j=1, \dots, M$. The sets  $\omega^{(j)}$, $j=1, \dots, M$, are assumed to contain the origin $\Oj$, and the maximum distance between $\Oj$ and $\partial \omega^{(j)}$ is 1. Let $\omega^{(j)}_\varepsilon$, be a subset of $\Omega$, with centre $\Oj^{(j)}$,  $1\le j\le M$. We assume that the minimum distance between $\Oj^{(j)}$ and $\Oj^{(k)}$, $k \ne j$, $k=1, \dots, M$ and the minimum distance between $\Oj^{(j)}$  and  $\partial \Omega$ is 1.
We relate the domain $\omega_\varepsilon^{(j)}$ to $\omega^{(j)}$ via $\omega_\varepsilon^{(j)}=\{ \X: \varepsilon^{-1} (\X-\Oj^{(j)}) \in \omega^{(j)}\}$, $j=1, \dots, M$. The perturbed geometry is defined by $\Omega_\varepsilon=\Omega \backslash \overline{\bigcup_j \omega_\varepsilon^{(j)}}$, and we say that this domain is occupied by a material with shear modulus $\mu_O$ and the domain $\omega^{(j)}_\varepsilon$ is occupied by a material with shear modulus $\mu_{I_j}$, where $\mu_O, \mu_{I_j}>0$, $1 \le j\le  M$. In the subsequent  sections, along with $\X$ and $\Y$ we will also use the scaled variables $\XI_j=\varepsilon^{-1} (\X-\Oj^{(j)})$, $\E_j=\varepsilon^{-1}(\Y-\Oj^{(j)})$.

By $\chi_T$ we mean the characteristic function of the set $T$, that is
\begin{equation*}\label{charfn}
\chi_T(\X)= \left\{ \begin{array}{ll}
1\;, \quad \text{ if } \X \in T\;,\\
0\;, \quad \text{ otherwise}\;.\end{array}\right.
\end{equation*}
Our goal is to obtain uniform asymptotics for the Green's function $N_\varepsilon$, of the transmission problem in $ \bigcup_j \omega_\varepsilon^{(j)}\cup \Omega_\varepsilon$, which is a solution of
\begin{equation}\label{transnotmultintro1}
\mu_O \Delta_\X N_\varepsilon(\X, \Y)+\delta(\X-\Y)=0\;, \quad \X \in \Omega_\varepsilon\;, \Y \in  \bigcup_l \omega_\varepsilon^{(l)} \cup \Omega_\varepsilon\;, 
\end{equation} 
\begin{equation}\label{transnotmultintro2}
\mu_{I_j} \Delta_\X N_\varepsilon(\X, \Y)+\delta(\X-\Y)=0\;, \quad \X \in \omega_\varepsilon^{(j)}\;, j=1, \dots, M, \Y \in \bigcup_l \omega_\varepsilon^{(l)} \cup \Omega_\varepsilon\;.
\end{equation} 
The normal derivative of $N_\varepsilon$ on the exterior boundary satisfies 
\begin{equation}\label{transnotmultintro3}
 \mu_O \frac{\partial N_\varepsilon}{\partial n_\X}(\X, \Y)=-\frac{1}{|\partial \Omega|}\;, \quad  \X \in \partial \Omega\;, \Y \in \bigcup_l \omega_\varepsilon^{(l)} \cup \Omega_\varepsilon\;,
\end{equation}
where  $|\partial \Omega|$ is the one-dimensional measure of the set $\partial \Omega$; $\partial/\partial n_\X=\N \cdot \nabla_\X$ is the normal derivative, where $\N$ is the unit outward  normal.

Assuming  that the small inclusions $\omega^{(j)}_\varepsilon$, $j=1, \dots, M$ are  perfectly bonded to the matrix $\Omega_\varepsilon$, we write  transmission conditions across the interface $\partial \omega_\varepsilon^{(j)}$ in the form
\begin{equation}\label{transnotmultintro4} \mu_O  \frac{\partial N_\varepsilon}{\partial n_{\X}}(\X, \Y)\big|_{\X \in \partial \omega^{(j)+}_\varepsilon}=\mu_{I_j}  \frac{\partial N_\varepsilon}{\partial n_{\X}}(\X, \Y)\big|_{\X \in \partial \omega^{(j)-}_\varepsilon}\;, 
\end{equation}
\begin{equation*}\label{transnotmultintro5}  
N_\varepsilon(\X, \Y)\big|_{\X \in \partial \omega^{(j)+}_\varepsilon}= N_\varepsilon(\X, \Y)\big|_{\X \in \partial \omega^{(j)-}_\varepsilon}\;,
\end{equation*}
for $j=1, \dots, M$ and  $\Y \in \bigcup_l \omega_\varepsilon^{(l)} \cup \Omega_\varepsilon$; the notation $\partial \omega^{(j)\pm}_\varepsilon$  indicates the exterior or interior boundary of the set $\omega^{(j)}_\varepsilon$.

The symmetry of $N_\varepsilon$, i.e. 
\[ N_\varepsilon(\X, \Y)=N_\varepsilon(\Y, \X)\;,\]
is  guaranteed by the condition
\begin{equation}\label{transnotmultintro7}
\int_{\partial \Omega} N_\varepsilon(\X, \Y)\, dS_\X=0\;.
\end{equation}
The proof of symmetry of  $N_\varepsilon$, for this problem, is addressed below.

\vspace{0.1in}\noindent\bf The symmetry of the Green's function $N_\varepsilon$ \rm
  
 \begin{prop}\label{propsym} Green's function $N_\varepsilon$ for the transmission problem in $\bigcup_{l=1}^M \omega^{(l)}_\varepsilon \cup \Omega_\varepsilon$  satisfies
 \begin{equation}\label{symprop}
 N_\varepsilon(\X, \Y)=N_\varepsilon(\Y, \X)\;.
 \end{equation} 
 \end{prop}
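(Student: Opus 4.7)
The plan is to apply Green's second identity to the two Green's functions $N_\varepsilon(\X,\Y)$ and $N_\varepsilon(\X,\Z)$ regarded as functions of $\X$, with the weight given by the piecewise constant shear modulus. Fix two distinct points $\Y, \Z \in \bigcup_l \omega_\varepsilon^{(l)} \cup \Omega_\varepsilon$ and set $u(\X)=N_\varepsilon(\X,\Z)$, $v(\X)=N_\varepsilon(\X,\Y)$. I would first excise small balls around $\Y$ and $\Z$ to avoid the logarithmic singularities, apply Green's identity in $\Omega_\varepsilon$ and in each $\omega_\varepsilon^{(j)}$ separately, and then let the radii shrink; by the standard argument this just reproduces the right-hand side $-u(\Y)\chi_{\Omega_\varepsilon^{\mathrm{tot}}}(\Y)+v(\Z)\chi_{\Omega_\varepsilon^{\mathrm{tot}}}(\Z)$, which sums (over the subdomains) to $N_\varepsilon(\Z,\Y)-N_\varepsilon(\Y,\Z)$ once we weight each identity by the correct $\mu$ and use equations (\ref{transnotmultintro1})--(\ref{transnotmultintro2}).

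The key step is to show that the total boundary contribution vanishes. Denoting by $\N^{(j)}$ the unit outward normal of $\omega_\varepsilon^{(j)}$, the contributions on $\partial\omega_\varepsilon^{(j)}$ from $\Omega_\varepsilon$ (with outward normal $-\N^{(j)}$) and from $\omega_\varepsilon^{(j)}$ (with outward normal $+\N^{(j)}$) combine into
\[
\int_{\partial\omega_\varepsilon^{(j)}}\Bigl\{\bigl(\mu_{I_j}u|_{-}-\mu_O u|_{+}\bigr)\partial_{n^{(j)}_\X} v\ -\ \bigl(\mu_{I_j}v|_{-}-\mu_O v|_{+}\bigr)\partial_{n^{(j)}_\X} u\Bigr\}\,dS_\X,
\]
where I still have to collect the $\pm$ signs on the normal derivatives. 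Using the two transmission conditions (\ref{transnotmultintro4}) and its displacement counterpart for both $u$ and $v$, namely $u|_{+}=u|_{-}$, $v|_{+}=v|_{-}$, and $\mu_O\partial_{n_\X}u|_{+}=\mu_{I_j}\partial_{n_\X}u|_{-}$ together with the same for $v$, each of the two brackets collapses to zero. So every internal interface contributes nothing.

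On the external boundary $\partial\Omega$ only the $\Omega_\varepsilon$-identity contributes, producing
\[
\int_{\partial\Omega}\bigl\{u\,\mu_O\partial_{n_\X}v - v\,\mu_O\partial_{n_\X}u\bigr\}\,dS_\X.
\]
Condition (\ref{transnotmultintro3}) replaces $\mu_O\partial_{n_\X}v$ and $\mu_O\partial_{n_\X}u$ both by $-1/|\partial\Omega|$, leaving $|\partial\Omega|^{-1}\int_{\partial\Omega}(v-u)\,dS_\X$. The normalization (\ref{transnotmultintro7}) forces each integral $\int_{\partial\Omega} N_\varepsilon(\X,\Y)\,dS_\X$ and $\int_{\partial\Omega} N_\varepsilon(\X,\Z)\,dS_\X$ to vanish, so this contribution is zero as well.

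Combining, the total boundary contribution is zero and the volume side gives $N_\varepsilon(\Z,\Y)-N_\varepsilon(\Y,\Z)=0$, which is (\ref{symprop}). The one delicate point to be careful about is the bookkeeping of the inward/outward orientations on $\partial\omega_\varepsilon^{(j)}$ so that the transmission conditions really do cancel both bracketed expressions; once that is done the rest is a straightforward integration by parts.
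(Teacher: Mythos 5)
Your proposal is correct and is essentially the paper's own argument: Green's identity weighted by the piecewise-constant shear modulus in $\Omega_\varepsilon$ and each $\omega_\varepsilon^{(j)}$, interface terms annihilated by the transmission conditions, and the $\partial\Omega$ term annihilated by the constant Neumann data together with the normalization (\ref{transnotmultintro7}). The only cosmetic differences are that you excise small balls instead of using the delta function distributionally, and your interface bracket should be regrouped as $u$ times the jump of $\mu\,\partial_{n_\X}v$ minus $v$ times the jump of $\mu\,\partial_{n_\X}u$ (using the continuity of $u$ and $v$ across $\partial\omega_\varepsilon^{(j)}$), since it is those flux jumps, not the jumps of $\mu u$ and $\mu v$, that the transmission condition (\ref{transnotmultintro4}) makes vanish.
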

 
 \begin{proof} Let
 \begin{equation}\label{2.6a}
 \begin{array}{ll} N^{(O)}_\varepsilon(\X, \Y)=N_\varepsilon(\X, \Y)\;, &\quad \text{ for } \X \in \Omega_\varepsilon, \Y \in \bigcup_{l=1}^M \omega^{(l)}_\varepsilon\cup\Omega_\varepsilon\;,\\
 N^{(I_j)}_\varepsilon(\X, \Y) =N_\varepsilon(\X, \Y)\;,& \quad \text{ for }\X \in \omega^{(j)}_\varepsilon, \Y \in \bigcup_{l=1}^M \omega^{(l)}_\varepsilon\cup\Omega_\varepsilon\;.
 \end{array}
 \end{equation}
 We apply Green's formula for the functions $N^{(O)}_\varepsilon$ and $N^{(I_j)}_\varepsilon$, $j=1, \dots, M$:
 \begin{eqnarray*}
 &&\mu_O \int_{\Omega_\varepsilon}\Big\{ N^{(O)}_\varepsilon(\bZ, \X)\Delta_{\bZ} N^{(O)}_\varepsilon(\bZ, \Y)-N^{(O)}_\varepsilon(\bZ, \Y) \Delta_{\bZ}N^{(O)}_\varepsilon(\bZ, \X)\Big\}\, d\bZ\\
 &&+\sum_{l=1}^M\mu_{I_l} \int_{\omega_\varepsilon^{(l)}}\Big\{ N^{(I_l)}_\varepsilon(\bZ, \X)\Delta_{\bZ} N^{(I_l)}_\varepsilon(\bZ, \Y)-N^{(I_l)}_\varepsilon(\bZ, \Y) \Delta_{\bZ}N^{(I_l)}_\varepsilon(\bZ, \X)\Big\}\, d\bZ\\
 &=&\mu_O \int_{\partial\Omega}\Big\{ N^{(O)}_\varepsilon(\bZ, \X)\frac{\partial  N^{(O)}_\varepsilon}{\partial n_{\bZ}}(\bZ, \Y)-N^{(O)}_\varepsilon(\bZ, \Y) \frac{\partial  N^{(O)}_\varepsilon}{\partial n_{\bZ}}(\bZ, \X)\Big\}\, dS_{\bZ}\\ 
&&+\sum_{l=1}^M\int_{\partial \omega_\varepsilon^{(l)}}\left\{ N^{(O)}_\varepsilon(\bZ, \X) \left[\mu_{I_l} \frac{\partial N^{(I_l)}_\varepsilon}{\partial n_{\bZ}}(\bZ, \Y)-\mu_O \frac{\partial N^{(O)}_\varepsilon}{\partial n_{\bZ}}(\bZ, \Y)\right] \right.\\
&&\left.-N^{(I_l)}_\varepsilon(\bZ, \Y) \left[\mu_{I_l} \frac{\partial N^{(I_l)}_\varepsilon}{\partial n_{\bZ}}(\bZ, \X)-\mu_O \frac{\partial N^{(O)}_\varepsilon}{\partial n_{\bZ}}(\bZ, \X)\right]\right\}dS_{\bZ}\;,
 \end{eqnarray*}
 where  the right-hand side is zero as a result of the transmission conditions (\ref{transnotmultintro4}), the exterior boundary condition (\ref{transnotmultintro3}) and the normalization (\ref{transnotmultintro7}). Thus
\begin{eqnarray}
 0&=&\mu_O \int_{\Omega_\varepsilon}\Big\{ N^{(O)}_\varepsilon(\bZ, \X)\Delta_{\bZ} N^{(O)}_\varepsilon(\bZ, \Y)-N^{(O)}_\varepsilon(\bZ, \Y) \Delta_{\bZ}N^{(O)}_\varepsilon(\bZ, \X)\Big\}\, d\bZ\nonumber\\
 &&+\sum_{l=1}^M\mu_{I_l} \int_{\omega_\varepsilon^{(l)}}\Big\{ N^{(I_l)}_\varepsilon(\bZ, \X)\Delta_{\bZ} N^{(I_l)}_\varepsilon(\bZ, \Y)-N^{(I_l)}_\varepsilon(\bZ, \Y) \Delta_{\bZ}N^{(I_l)}_\varepsilon(\bZ, \X)\Big\}\, d\bZ\;.\label{sym1}
\end{eqnarray}
The next step involves using the governing equations (\ref{transnotmultintro1}) and (\ref{transnotmultintro2}) along with the above definitions of $N^{(O)}_\varepsilon$ and $N^{(I_j)}_\varepsilon$, $j=1, \dots, M$.
When $\X ,\Y \in \Omega_\varepsilon$, (\ref{sym1}) gives
\[ N^{(O)}_\varepsilon(\X, \Y)=N^{(O)}_\varepsilon(\Y, \X)\]
whereas if $\X \in \Omega_\varepsilon$, $\Y \in \omega^{(j)}_\varepsilon$
\[ N^{(O)}_\varepsilon(\X, \Y)=N^{(I_j)}_\varepsilon(\Y, \X)\;,\quad  j=1,\dots, M\;.\]
Similarly, for $\X \in \omega^{(j)}_\varepsilon$, $\Y \in \Omega_\varepsilon$, we deduce
\[ N^{(I_j)}_\varepsilon(\X, \Y)=N^{(O)}_\varepsilon(\Y, \X)\;,\quad j=1, \dots, M\;,\]
and if $\X, \Y \in \omega_\varepsilon^{(j)}$
 \[ N^{(I_j)}_\varepsilon(\X, \Y)=N^{(I_j)}_\varepsilon(\Y, \X)\;, \quad j=1, \dots, M\;. \]
 Finally, with $\X \in \omega^{(j)}_\varepsilon$, $\Y \in \omega_\varepsilon^{(k)}$, $k \ne j$, we have
 \[N_\varepsilon^{(I_j)}(\X, \Y)=N_\varepsilon^{(I_k)}(\Y, \X),\quad 1 \le j, k\le M,\quad k \ne j\;.\]
 The above relations together with (\ref{2.6a}) lead to (\ref{symprop}).
\end{proof}

\section{Special solutions  in model domains}\label{modfield}
The asymptotic algorithm uses special fields defined in model domains including the unperturbed set and the 
exterior of a scaled inclusion.

\begin{enumerate}[1.]
\vspace{0.1in}\item The regular part $R^{(\Omega)}$ of the Neumann  function $N^{(\Omega)}$  in  $\Omega$ is defined as a solution of 
\begin{equation*}\label{trans1}
 \mu_O \Delta_\X R^{(\Omega)}(\X, \Y)=0\;,\quad \X\in \Omega\;,
\end{equation*}
\begin{equation}\label{trans2}
 \mu_O \frac{\partial R^{(\Omega)}}{\partial n_\X}(\X, \Y)=-\frac{\partial }{\partial n_\X}((2\pi)^{-1} \log |\X-\Y|)+\frac{1}{|\partial \Omega|}\;, \quad  \X \in \partial \Omega\;,
\end{equation}
where $\Y \in \Omega$.

To guarantee the symmetry of $R^{(\Omega)}$ we impose the orthogonality condition
\[\int_{\partial \Omega} R^{(\Omega)}(\X, \Y)\,d S_\X=-\frac{1}{2\pi\mu_O} \int_{\partial \Omega}\log |\X-\Y|\,d S_\X\;.\]
The function $N^{(\Omega)}$ is related to $R^{(\Omega)}$ by
\begin{equation}\label{defN}
N^{(\Omega)}(\X, \Y)=-(2\pi\mu_O)^{-1}\log|\X-\Y|-R^{(\Omega)}(\X, \Y)\;.
\end{equation}

\vspace{0.1in}\item The next model field is Green's function  for the transmission problem in the domain $C\bar{\omega}^{(j)} \cup \omega^{(j)}$, $j=1, \dots, M$. This function is denoted by $\cN^{(j)}$ and, for $\E\in C\bar{\omega}^{(j)} \cup \omega^{(j)}$, is subject to 
\begin{equation*}\label{trans3}
\mu_O \Delta_{\XI} \cN^{(j)}(\XI, \E)+\delta(\XI-\E)=0\;,\quad \XI \in C\bar{\omega}^{(j)}\;,
\end{equation*}
\begin{equation*}\label{trans4}
 \mu_{I_j} \Delta_{\XI} \cN^{(j)} (\XI, \E)+\delta(\XI-\E)=0\;, \quad \XI \in \omega^{(j)}\;,
\end{equation*}
where the  transmission conditions  across the interface of the inclusion are given as
\begin{equation}\label{trans5} 
\begin{array}{c}
\displaystyle{\mu_O  \frac{\partial \cN^{(j)}}{\partial n_{\XI}}(\XI, \E)\big|_{\XI \in \partial \omega^{(j)+}}=\mu_{I_j}  \frac{\partial \cN^{(j)}}{\partial n_{\XI}}(\XI, \E)\big|_{\XI \in \partial \omega^{(j)-}}\;,} \\
\\
\displaystyle{\cN^{(j)}(\XI, \E)\big|_{\XI \in \partial \omega^{(j)+}}= \cN^{(j)}(\XI, \E)\big|_{\XI \in \partial \omega^{(j)-}}}\;,
\end{array}
\end{equation}
and at infinity we will prescribe the condition
\begin{equation}\label{trans7a}
\cN^{(j)}(\XI, \E)= -(2\pi \mu_O)^{-1}\log|\XI|+c(\E)+O(|\XI|^{-1}), \quad \text{ as }\quad|\XI|\to \infty\;.
\end{equation}
\emph{Symmetry of $\cN^{(j)}$.} We choose
\begin{equation}\label{symcN1}
c(\E)\equiv 0\;,
\end{equation}
and then it can be shown that 
\[ \cN^{(j)}(\XI, \E)=\cN^{(j)}(\E, \XI)\;,\]
i.e. $\cN^{(j)}$ is symmetric.

The proof is analogous to the one of Proposition \ref{propsym}.

We set 
\[\cN^{(j,O)}(\XI, \E)=\cN^{(j)}(\XI, \E)\quad \text{for }\E\in C\bar{\omega}^{(j)}, \XI \in C\bar{\omega}^{(j)} \cup \omega^{(j)}\;,\]
\[\cN^{(j,I)}(\XI, \E)=\cN^{(j)}(\XI, \E)\quad \text{for }\E\in \omega^{(j)}, \XI \in C\bar{\omega}^{(j)} \cup \omega^{(j)}\;,\]
and
\[ c^{(O)}(\E)=c(\E) \text{ for } \E \in C\bar{\omega}^{(j)}\;,\]
\[c^{(I)}(\E)=c(\E) \text{ otherwise }.\]
Assume $\E\in C\bar{\omega}^{(j)}$, and let $B_R=\{\XI:|\XI|<R\}$ be a disk, with sufficiently large radius $R$, so that $\E \in B_R \backslash \bar{\omega}^{(j)}$. By applying Green's formula to $\cN^{(j)}(\bZ, \XI)$ and $\cN^{(j)}(\bZ, \E)$ in $B_R \backslash \bar{\omega}^{(j)}$ and $\omega^{(j)}$ we obtain
\begin{eqnarray*}
&&\mu_O \int_{B_R\backslash \bar{\omega}^{(j)}}\Big\{\cN^{(j)}(\bZ, \XI)\Delta_\bZ \cN^{(j,O)}(\bZ,\E)-\cN^{(j, O)}(\bZ, \E)\Delta_\bZ \cN^{(j)}(\bZ, \XI)\Big\}\,d\bZ\\
&&+\mu_{I_j}\int_{\omega^{(j)}}\Big\{\cN^{(j)}(\bZ, \XI)\Delta_\bZ \cN^{(j,O)}(\bZ,\E)-\cN^{(j)}(\bZ, \E)\Delta_\bZ \cN^{(j)}(\bZ, \XI)\Big\}\,d\bZ\\
&&=\mu_O \int_{\partial B_R}\Big\{\cN^{(j)}(\bZ, \XI)\frac{\partial\cN^{(j,O)}}{\partial n_{\bZ}}(\bZ,\E)-\cN^{(j, O)}(\bZ, \E)\frac{\partial \cN^{(j)}}{\partial n_{\bZ}}(\bZ, \XI)\Big\}\,dS_{\bZ}\;.
\end{eqnarray*}
The transmission conditions (\ref{trans5}) for $\cN^{(j)}$ imply that the integral over $\partial \omega^{(j)}$ is zero. When $\XI \in B_R \backslash \bar{\omega}^{(j)}$,
\[
\begin{array}{c}
\cN^{(j)}(\XI, \E)-\cN^{(j)}(\E, \XI)\\ \\
\displaystyle{= \mu_O\int_{\partial B_R}\Big\{\cN^{(j,O)}(\bZ, \XI)\frac{\partial\cN^{(j,O)}}{\partial n_{\bZ}}(\bZ,\E)-\cN^{(j, O)}(\bZ, \E)\frac{\partial \cN^{(j,O)}}{\partial n_{\bZ}}(\bZ, \XI)\Big\}\,dS_{\bZ}\;.}\end{array}\]
Taking the limit as $R\to \infty$ and employing (\ref{trans7a}), we deduce 
\[\begin{array}{c}
\displaystyle{\cN^{(j)}(\XI, \E)-\cN^{(j)}(\E, \XI)}\\ \\
\displaystyle{=\lim_{R\to\infty}\mu_O \int_{\partial B_R}\Big\{ \Big( \frac{1}{2\pi\mu_O}\log|\bZ|^{-1}+c^{(O)}(\XI)+O\left(\frac{1}{R}\right)\Big) }\\\\
\displaystyle{\times\Big( \frac{1}{2\pi\mu_O}\frac{\partial \log |\bZ|^{-1}}{\partial n_{\bZ}}+O\left(\frac{1}{R^2}\right)\Big)}\\ \\
\displaystyle{-\Big( \frac{1}{2\pi\mu_O}\log|\bZ|^{-1}+c^{(O)}(\E)+O\left(\frac{1}{R}\right)\Big) \Big( \frac{1}{2\pi\mu_O}\frac{\partial \log |\bZ|^{-1}}{\partial n_{\bZ}}+O\left(\frac{1}{R^2}\right)\Big)\Big\}dS_{\bZ}\;,}
\end{array}\]
which is equivalent to 
\begin{eqnarray*}
\cN^{(j)}(\XI, \E)-\cN^{(j)}(\E, \XI)&=&(c^{(O)}(\E)-c^{(O)}(\XI))\lim_{R\to\infty}\int_{\partial B_R}\frac{\partial }{\partial n_\bZ}((2\pi)^{-1}\log|\bZ|)dS_{\bZ}\\
&=& c^{(O)}(\E)-c^{(O)}(\XI)=0\;.
\end{eqnarray*}
Hence $\cN^{(j)}(\XI, \E)$ is symmetric for $\XI, \E \in C\bar{\omega}^{(j)}$. In a similar way, it can be shown that
\[\cN^{(j)}(\XI, \E)-\cN^{(j)}(\E, \XI)=c^{(I)}(\E)-c^{(O)}(\XI) \text{ for }\XI \in C \bar{\omega}^{(j)}, \E \in \omega^{(j)}\;,\]
\[\cN^{(j)}(\XI, \E)-\cN^{(j)}(\E, \XI)=c^{(O)}(\E)-c^{(I)}(\XI) \text{ for }\E \in C \bar{\omega}^{(j)}, \XI \in \omega^{(j)}\;,\]
\[\cN^{(j)}(\XI, \E)-\cN^{(j)}(\E, \XI)=c^{(I)}(\E)-c^{(I)}(\XI) \text{ for }\XI \in \omega^{(j)}, \E \in \omega^{(j)}\;,\]
and the condition (\ref{symcN1}) implies that the above right-hand sides are zero. Thus $\cN^{(j)}$ is symmetric.

\emph{Regular part of $\cN^{(j)}$. }Let the function $\cN^{(j)}$ have the form
\begin{equation*}
\cN^{(j)}(\XI, \E)=\chi_{C\bar{\omega}^{(j)}}(\E)\cN^{(j,O)}(\XI, \E)+\chi_{\omega^{(j)}}(\E)\cN^{(j,I)}(\XI, \E)\;,
\end{equation*}
where 
\begin{equation}\label{trans7}
\begin{array}{c}
\displaystyle{\cN^{(j,O)}(\XI, \E)=-(2\pi\mu_O)^{-1}\log|\XI-\E|-h_N^{(j,O)}(\XI, \E)}\;,\\
\displaystyle{\cN^{(j,I)}(\XI, \E)=-(2\pi\mu_{I_j})^{-1}\log|\XI-\E|-h_N^{(j,I)}(\XI, \E)}\;,
\end{array}
\end{equation}
and here $h_N^{(j,O)}$ and $h_N^{(j, I)}$ are the regular parts of $\cN^{(j, O)}$ and $\cN^{(j, I)}$, $j=1, \dots, M$, respectively. Moreover, we also set
\begin{eqnarray}\label{eqhnIOj}
&&\begin{array}{ll}h^{(j, O, O)}_N(\XI, \E)=h^{(j, O)}_N(\XI, \E)  \;, &\quad \text{ for } \XI, \E\in C\bar{\omega}^{(j)}\;,\\
 h_N^{(j, I, O)}(\XI, \E)=h_N^{(j,O)}(\XI, \E)\;, & \quad \text{ for } \XI \in \omega^{(j)}, \E \in C\bar{\omega}^{(j)}\;, \text{ and}
 \end{array}\nonumber\\ \\
 &&\begin{array}{ll}
  h_N^{(j, O, I)}(\XI, \E)=h_N^{(j, I)}(\XI, \E) \;, &\quad \text{ for } \XI\in C\bar{\omega}^{(j)}, \E \in \omega^{(j)}\;,\\
 h_N^{(j,I, I)}(\XI, \E)=h_N^{(j, I)}(\XI, \E)\;, & \quad \text{ for } \XI , \E \in \omega^{(j)}\;.
 \end{array}\nonumber
 \end{eqnarray}
Then, the above definitions for $h_N^{(j, O)}$ and $h_N^{(j, I)}$ lead to this representation for 
$\cN^{(j)}$
\begin{eqnarray*}
\cN^{(j)}(\XI, \E)&=&\chi_{C\bar{\omega}^{(j)}}(\XI) \chi_{C\bar{\omega}^{(j)}}(\E) \left\{-\frac{1}{2\pi\mu_O}\log|\XI-\E|-h_N^{(j, O, O)}(\XI, \E)\right\}\\
&&+\chi_{\omega^{(j)}}(\XI) \chi_{C\bar{\omega}^{(j)}}(\E) \left\{-\frac{1}{2\pi\mu_O}\log|\XI-\E|-h_N^{(j, I, O)}(\XI, \E)\right\}\\
&&+\chi_{C\bar{\omega}^{(j)}}(\XI) \chi_{\omega^{(j)}}(\E) \left\{-\frac{1}{2\pi\mu_{I_j}}\log|\XI-\E|-h_N^{(j, O, I)}(\XI, \E)\right\}\\
&&+\chi_{\omega^{(j)}}(\XI) \chi_{\omega^{(j)}}(\E) \left\{-\frac{1}{2\pi\mu_{I_j}}\log|\XI-\E|-h_N^{(j, I, I)}(\XI, \E)\right\}\;.
\end{eqnarray*} 
The symmetry of $\cN^{(j)}$, then implies the conditions
\begin{equation}\label{symhN}\left.\begin{array}{c}
\small{\displaystyle{
h_N^{(j, O, O)}(\XI, \E)=h_N^{(j, O, O)}(\E, \XI) \quad \text{ for }\XI, \E \in C\bar{\omega}^{(j)}\;,}}\\ \\
\small{\displaystyle{h_N^{(j, I, O)}(\XI, \E)=h_N^{(j,O, I)}(\E, \XI)
+\frac{1}{2\pi}\left\{\frac{1}{\mu_{I_j}}-\frac{1}{\mu_O}\right\}\log|\XI-\E| \quad \text{ for }\E \in C\bar{\omega}^{(j)}, \XI \in \omega^{(j)}\;, }}\\ \\
\small{\displaystyle{
\text{ and } \quad h_N^{(j, I, I)}(\XI, \E)=h_N^{(j, I, I)}(\E, \XI) \quad \text{ for } \XI, \E \in \omega^{(j)}\;.}}\end{array}\right\}
\end{equation} 
\item We also make use of  model solutions known as  the dipole fields $\cD_k^{(j)}$, $k=1, 2$, $j=1,\dots, M$,  which  play the role of the boundary layers in the asymptotic algorithm. Let $\cD^{(j)}=(\cD^{(j)}_1, \cD^{(j)}_2)^T$, where
\begin{equation*}
\cD^{(j)}(\XI)=\chi_{C\bar{\omega}^{(j)}}(\XI) \cD^{(j,O)}(\XI)+\chi_{\omega^{(j)}}(\XI) \cD^{(j,I)}(\XI)\;,
\end{equation*}
and the  vector functions $\cD^{(j,O)}$, $\cD^{(j,I)}$   solve the  problem
 \begin{equation*}\label{trans8}
\mu_O \Delta_{\XI} \mathcal{D}^{(j,O)}(\XI)=\Oj\;,\quad \XI \in C\bar{\omega}^{(j)}\;,
\end{equation*}
\begin{equation*}\label{trans9}
\mu_{I_j} \Delta_{\XI} \cD^{(j,I)}(\XI)=\Oj\;, \quad \XI \in \omega^{(j)}\;.
\end{equation*}
The  transmission conditions on the boundary of the inclusion $\omega^{(j)}$ are
\begin{equation}\label{trans10}
\mu_{I_j}  \frac{\partial \cD^{(j,I)}}{\partial n_{\XI}}(\XI)-\mu_O  \frac{\partial \cD^{(j,O)}}{\partial n_{\XI}}(\XI)=(\mu_{I_j}-\mu_O) \N^{(j)}\;, \text{ on }\partial \omega^{(j)}\;,
\end{equation}
\begin{equation*}\label{trans11}
 \cD^{(j,O)}(\XI)=\cD^{(j,I)}(\XI)\;, \text{ on } \partial \omega^{(j)}\;,
\end{equation*}
where in (\ref{trans10}), $\N^{(j)}$ is the unit normal to $\omega^{(j)}$. At infinity the vector function $\cD^{(j,O)}$ satisfies
\begin{equation}\label{trans12}
\cD^{(j,O)}(\XI) =O(|\XI|^{-1}), \quad \text{ as }\quad|\XI|\to \infty\;.
\end{equation}

\item We introduce the function $\zeta^{(j)}$ which is a solution of 
\begin{equation*}\label{trans12a1}
\Delta_{\XI} \zeta^{(j)}(\XI)=0\;, \quad \XI \in C\bar{\omega}^{(j)}\;, 
\end{equation*}
\begin{equation*}\label{trans12a2}
\zeta^{(j)}(\XI)=0\;, \quad \XI \in \partial \omega^{(j)}\;,
\end{equation*}
\begin{equation*}\label{trans12a3}
\zeta^{(j)}(\XI) = (2\pi\mu_O)^{-1}\log|\XI|+\zeta^{(j)}_\infty+O(|\XI|^{-1})\;, \quad \text { as }\quad |\XI|\to \infty\;, 
\end{equation*}
where $\zeta^{(j)}_\infty$ is a constant. 

\end{enumerate}

\section{An estimate for solutions to transmission problems for antiplane shear in unbounded domains}\label{transunbest}

The next result  plays an important  role in the asymptotic algorithm. It will allow us to obtain estimates for the boundary layer fields and derive the estimate for a solution of the transmission problem in a domain with multiple inclusions.
 
\begin{lemma}\label{lemunb}
Let $U^{(j)}$ be  a solution of the transmission problem
\begin{equation*}\label{unb1}
\mu_O \Delta U^{(j)}(\XI)=0\;, \quad \XI \in C\bar{\omega}^{(j)}\;,
\end{equation*}
\begin{equation*}\label{unb2}
\mu_{I_j} \Delta U^{(j)}(\XI)=0\;, \quad \XI  \in \omega^{(j)}\;,
\end{equation*}
\begin{equation*}\label{unb3}
U^{(j)}(\XI) \big|_{\XI \in \partial \omega^{(j)+}}=U^{(j)}(\XI)\big|_{\XI\in \partial \omega^{(j)-}}\;,
\end{equation*}
\begin{equation*}\label{unb4}
\mu_O \frac{\partial U^{(j)}}{\partial n}(\XI)\big|_{\XI \in \partial \omega^{(j)+}}-\mu_{I_j} \frac{\partial U^{(j)}}{\partial n}(\XI)\big|_{\XI \in \partial \omega^{(j)-}}=\varphi^{(j)}(\XI)\;,
\end{equation*}
\begin{equation*}\label{unb5}
U^{(j)}(\XI) \to 0 \quad \text{ as }\quad |\XI| \to \infty\;,
\end{equation*}
where $\varphi^{(j)} \in L_\infty(\partial \omega^{(j)})$, $\partial /\partial n$ is the normal derivative on the smooth boundary $\partial \omega^{(j)}$, outward with respect to $\omega^{(j)}$,  and
\begin{equation}\label{un6}
\int_{\partial \omega^{(j)}} \varphi^{(j)}(\XI)\, d S_{\XI}=0\;.
\end{equation}
We also assume that 
\begin{equation*}\label{un7}
\int_{\partial \omega^{(j)}} U^{(j)}(\XI) \frac{\partial \zeta^{(j)}}{\partial n}(\XI) \big|_{\XI \in \partial \omega^{(j)+}} dS_{\XI}=0\;,
\end{equation*}
where $\zeta^{(j)}$ is given as solution of Problem $4$ of Section $\ref{modfield}$.
Then 
\begin{equation}\label{un8}
\sup_{\XI \in C\bar{\omega}^{(j)} \cup \omega^{(j)}} \{ (|\XI|+1) |U^{(j)}(\XI)|\} \le\text{ \rm{const }} \| \varphi^{(j)} \|_{L_\infty(\partial \omega^{(j)})}\;,
\end{equation}
where the constant depends on $\mu_O$, $\mu_{I_j}$ and $\partial \omega^{(j)}$, for  $j=1, \dots, M$.
\end{lemma}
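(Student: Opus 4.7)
The plan is to reduce the weighted $L_\infty$ bound \eqref{un8} to an explicit integral representation of $U^{(j)}$ through the model Green's function $\cN^{(j)}$ introduced in Problem 2 of Section \ref{modfield}, after which the estimate is read off from the asymptotics of that kernel.

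First, I would integrate the governing equations over $C\bar\omega^{(j)}\cap B_R$ and over $\omega^{(j)}$ and apply the divergence theorem. Together with the continuity of $U^{(j)}$ across $\partial\omega^{(j)}$, the jump condition, and the compatibility condition \eqref{un6}, this yields $\int_{\partial B_R}\mu_O\,\partial_n U^{(j)}\,dS=0$. Since $U^{(j)}$ is harmonic outside the inclusion and vanishes at infinity, the vanishing of the total flux kills the logarithmic mode in the exterior harmonic expansion of $U^{(j)}$, and the decay kills the constant mode, so $U^{(j)}(\XI)=O(|\XI|^{-1})$ as $|\XI|\to\infty$. The orthogonality hypothesis involving $\zeta^{(j)}$ is, by Green's identity applied to $\zeta^{(j)}$ and $U^{(j)}$ in $C\bar\omega^{(j)}\cap B_R$, equivalent to the vanishing of that constant term at infinity; it is in this form that it enters the next step.

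Next I would apply Green's identity to the pair $(U^{(j)}(\bZ),\cN^{(j)}(\bZ,\XI))$ in $\omega^{(j)}$ with coefficient $\mu_{I_j}$ and in $C\bar\omega^{(j)}\cap B_R$ with coefficient $\mu_O$, and add the results. The transmission conditions \eqref{trans5} of $\cN^{(j)}$ together with those of $U^{(j)}$ force the interface contributions on $\partial\omega^{(j)}$ to collapse into a single boundary integral against $\varphi^{(j)}$. Using the normalization \eqref{symcN1}, the decay of $U^{(j)}$ from Step~1, and the orthogonality with $\zeta^{(j)}$, the contribution from $\partial B_R$ vanishes as $R\to\infty$ by a computation paralleling the symmetry proof of $\cN^{(j)}$ preceding this lemma. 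The outcome is
\begin{equation*}
U^{(j)}(\XI)\;=\;-\int_{\partial\omega^{(j)}}\cN^{(j)}(\XI,\E)\,\varphi^{(j)}(\E)\,dS_{\E}.
\end{equation*}
From this formula the bound follows in two regimes. For $\XI$ in a bounded neighbourhood of $\omega^{(j)}$, $\cN^{(j)}(\XI,\cdot)$ has only an integrable logarithmic singularity on $\partial\omega^{(j)}$, so $|U^{(j)}(\XI)|\le C\|\varphi^{(j)}\|_{L_\infty(\partial\omega^{(j)})}$. For large $|\XI|$, the expansion \eqref{trans7a} together with \eqref{symcN1} gives $\cN^{(j)}(\XI,\E)=-(2\pi\mu_O)^{-1}\log|\XI|+O(|\XI|^{-1})$ uniformly in $\E\in\partial\omega^{(j)}$; the log term annihilates itself under the integral thanks to \eqref{un6}, leaving $|U^{(j)}(\XI)|\le C|\XI|^{-1}\|\varphi^{(j)}\|_{L_\infty}$. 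Combining both regimes produces \eqref{un8}, with a constant depending only on $\mu_O$, $\mu_{I_j}$, and $\partial\omega^{(j)}$.

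The main obstacle is the passage $R\to\infty$ in Green's identity that produces the representation formula: the integrand on $\partial B_R$ couples the logarithmic growth of $\cN^{(j)}$ at infinity with the $O(|\XI|^{-1})$ decay of $U^{(j)}$, and one must track the combined effect of the normalization \eqref{symcN1}, the far-field behaviour of $U^{(j)}$, and the orthogonality with $\zeta^{(j)}$ so that all surface contributions at infinity vanish, leaving only the clean boundary integral over $\partial\omega^{(j)}$. This is essentially a refined variant of the symmetry-at-infinity calculation already carried out in Section \ref{modfield}.
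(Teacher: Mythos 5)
Your proposal is correct and follows essentially the same route as the paper: Green's identity with the model kernel $\cN^{(j)}$ yields the representation $U^{(j)}(\XI)=-\int_{\partial\omega^{(j)}}\cN^{(j)}(\XI,\E)\varphi^{(j)}(\E)\,dS_{\E}$, after which the far-field bound follows from the asymptotics of $\cN^{(j)}$ combined with the zero-mean condition on $\varphi^{(j)}$, and the near-field bound from the integrable logarithmic singularity (the paper phrases this last step via Cauchy--Schwarz in $L_2$, a cosmetic difference). No gaps.
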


\begin{proof} We note that transmission problems are studied in detail in \cite{CarstSteph} and \cite{CostSteph}, in the context of boundary integral equations and their solvability.

Let us first represent the solution $U^{(j)}$ by two functions $U^{(j, O)}$ and $U^{(j, I)}$, harmonic in the domains $C\bar{\omega}^{(j)}$ and $\omega^{(j)}$, respectively. These functions satisfy
\begin{eqnarray*}\label{bcUO1}
&&U^{(j, O)}(\XI)=U^{(j, I)}(\XI)\;, \quad \XI \in \partial \omega^{(j)}\;,\\
&&\mu_O\frac{\partial U^{(j, O)}}{\partial n}(\XI)-\mu_{I_j}\frac{\partial U^{(j, I)}}{\partial n}(\XI)=\varphi^{(j)}(\XI)\;, \quad \XI \in \partial\omega^{(j)}\;.
\end{eqnarray*}
For the function $U^{(j, O)}$, the condition
\begin{equation}\label{bcUO2}
U^{(j, O)}(\XI) \to 0 \quad \text{ as }\quad |\XI| \to \infty\;,
\end{equation} 
also holds.

Applying Green's formula to the functions $\cN^{(j)}$ (see Problem 2, Section \ref{modfield}), $U^{(j, O)}$ and $U^{(j, I)}$, one obtains 
\begin{equation}\label{intrep1}
U^{(j, O)}(\XI)= -\int_{\partial \omega^{(j)}} \cN^{(j)}(\XI, \E)\varphi^{(j)}(\E)\, dS_{\E}\;,
\end{equation}
for $\XI \in C\bar{\omega}^{(j)}$, and 
\begin{equation}\label{intrep2}
U^{(j, I)}(\XI)= -\int_{\partial \omega^{(j)}} \cN^{(j)}(\XI, \E)\varphi^{(j)}(\E)\, dS_{\E}\;,
\end{equation}
for $\XI \in \omega^{(j)}$.

First, let $|\XI| \ge 2$, then using the  asymptotics for the function $\cN^{(j)}$ at infinity, and  the condition (\ref{un6}), we deduce
\begin{eqnarray}\label{expanatinf1}
(1+|\XI|)|U^{(j, O)}(\XI)| &\le& \text{const }\Big((1+|\XI|)|\log|\XI||\Big|\int_{\partial \omega^{(j)}}\varphi^{(j)}(\E)dS_{\E}\Big|\nonumber\\
&&+\| \varphi^{(j)}\|_{L_1(\partial \omega^{(j)})}\Big)\nonumber\\
&\le&  \text{const}\| \varphi^{(j)}\|_{L_\infty(\partial \omega^{(j)})}\;.
\end{eqnarray}
Also by the Cauchy-Schwarz inequality and (\ref{intrep1})
\begin{equation}\label{expanatinf2}
|U^{(j, O)}(\XI)| \le \text{const }\| \varphi^{(j)}\|_{L_2(\partial \omega^{(j)})}\le \text{const}\| \varphi^{(j)}\|_{L_\infty(\partial \omega^{(j)})}\;, \quad \text{for} \quad \XI \in B_3 \backslash \bar{\omega}^{(j)}\;,
\end{equation}
 where $B_3=\{ \XI: |\XI|<3\}$.

Similarly the integral representation (\ref{intrep2}) gives 
\begin{equation}\label{expanatinf3}
|U^{(j, I)}(\XI)| \le \text{const }\| \varphi\|_{L_2(\partial \omega^{(j)})}\le \text{const}\| \varphi^{(j)}\|_{L_\infty(\partial \omega^{(j)})}\;, \quad \text{for}\quad \XI \in \omega^{(j)}\;.
\end{equation}
The combination of (\ref{expanatinf1}), (\ref{expanatinf2}) and (\ref{expanatinf3}), leads to (\ref{un8}).


\end{proof}

As an immediate corollary of Lemma \ref{lemunb}, we have an estimate for the dipole fields associated with the scaled inclusion $\omega^{(j)}$, $j=1, \dots, M$:

\begin{lemma}\label{transdip}
For the dipole fields $\cD^{(i)}_j$, $j=1,2$, $i=1, \dots, M$,  
\begin{equation*}\label{transdipeq}
\sup_{\XI \in C\bar{\omega}^{(i)} \cup \omega^{(i)}} \{ (|\XI|+1) |\cD^{(i)}_j(\XI)|\} \le \text{\rm{const} }\;, 
\end{equation*}
holds, where the constant in the right-hand side can depend on $\mu_O$, $\mu_{I_i}$ and $\omega^{(i)}$.
\end{lemma}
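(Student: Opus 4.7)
The plan is to invoke Lemma~\ref{lemunb} scalar-wise, setting $U^{(i)}:=\cD^{(i)}_j$ for each fixed component $j\in\{1,2\}$ and inclusion index $i\in\{1,\ldots,M\}$. Matching the dipole problem of Section~\ref{modfield} against the hypotheses of Lemma~\ref{lemunb}, the transmission condition (\ref{trans10}) identifies the jump datum as $\varphi^{(i)}(\XI)=(\mu_O-\mu_{I_i})\,\N^{(i)}_j(\XI)$; this belongs to $L_\infty(\partial\omega^{(i)})$ because $\partial\omega^{(i)}$ is smooth, and (\ref{trans12}) supplies the required decay $U^{(i)}\to 0$ at infinity.

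Two compatibility conditions still need to be verified. The first, (\ref{un6}), reduces to $\int_{\partial\omega^{(i)}}\N^{(i)}_j\,dS_\XI=0$, which is immediate from the divergence theorem applied to the constant vector field $\be_j$ over $\omega^{(i)}$. For the orthogonality with $\zeta^{(i)}$, I would apply Green's formula to $\cD^{(i,O)}_j$ and $\zeta^{(i)}$ in $B_R\setminus\bar\omega^{(i)}$; since both functions are harmonic there and $\zeta^{(i)}$ vanishes on $\partial\omega^{(i)}$, this yields
\begin{equation*}
\int_{\partial\omega^{(i)}} U^{(i)}\,\frac{\partial \zeta^{(i)}}{\partial n}\Big|_{+}\,dS_\XI = \int_{\partial B_R}\Big(\zeta^{(i)}\,\frac{\partial U^{(i)}}{\partial n}-U^{(i)}\,\frac{\partial \zeta^{(i)}}{\partial n}\Big)\,dS_\XI.
\end{equation*}
Plugging in the asymptotics $\zeta^{(i)}=(2\pi\mu_O)^{-1}\log|\XI|+\zeta^{(i)}_\infty+O(|\XI|^{-1})$ together with the multipole decay $U^{(i)}=O(|\XI|^{-1})$ (and its derivative estimate $O(|\XI|^{-2})$), the integrand on $\partial B_R$ is of order $O(R^{-2}\log R)$, and the integral is $O(R^{-1}\log R)$; it vanishes in the limit $R\to\infty$, giving the orthogonality.

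With all hypotheses in place, conclusion (\ref{un8}) of Lemma~\ref{lemunb} applied to each component $\cD^{(i)}_j$ produces the stated bound, the constant on the right absorbing $\|\N^{(i)}_j\|_{L_\infty(\partial\omega^{(i)})}$ together with $\mu_O$ and $\mu_{I_i}$. I expect the orthogonality verification with $\zeta^{(i)}$ to be the only mildly delicate step, since everything else is direct matching against the hypotheses; however, because it is governed entirely by the prescribed $O(|\XI|^{-1})$ decay of $\cD^{(i,O)}_j$, which beats the logarithmic growth of $\zeta^{(i)}$ in the $\partial B_R$ surface integrand, no real obstacle arises and the corollary is genuinely immediate.
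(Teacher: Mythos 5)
Your proposal is correct and follows exactly the route the paper intends: the paper states Lemma \ref{transdip} as an immediate corollary of Lemma \ref{lemunb} without writing out the verification, and your check of the hypotheses (boundedness of the jump datum $(\mu_O-\mu_{I_i})\N^{(i)}_j$, the mean-zero condition via the divergence theorem, and the orthogonality with $\zeta^{(i)}$ via Green's formula in $B_R\setminus\bar\omega^{(i)}$ using the $O(|\XI|^{-1})$ decay against the logarithmic growth of $\zeta^{(i)}$) is precisely the omitted detail. A sign in your $\partial B_R$ identity is reversed relative to the orientation convention, but since that integral vanishes in the limit the conclusion is unaffected.
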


\section{An estimate for the maximum modulus of solutions to transmission problems for antiplane shear in a domain with several small inclusions}\label{transestmultiplinc}

Here we  obtain an estimate for solutions to transmission problems for antiplane shear in domains with  small inclusions. The next lemma will be used in Section \ref{transmultot} incorporating  the remainder estimates produced by the approximations of Green's function in a perforated domain.

\begin{lemma}\label{multlemtransest}
Let $u$ be a function in $L_\infty(\bigcup_{l=1}^M \omega^{(l)}_\varepsilon\cup\Omega_\varepsilon)$ such that $ \nabla u$ is square integrable in a neighbourhood of $\partial \omega^{(i)}_\varepsilon$, $i=1, \dots, M$. Also, let $u$ be a solution of the transmission problem
\begin{equation}\label{multlemtr1}
\left.
\begin{array}{c}
\displaystyle{\mu_O \Delta u(\X)=0\;, \quad \X \in \Omega_\varepsilon\;,}\\ \\
\displaystyle{\mu_{I_i} \Delta u(\X)=0\;, \quad \X \in \omega^{(i)}_\varepsilon, i=1, \dots, M\;,}\\ \\
\displaystyle{\mu_O \frac{\partial u}{\partial n}(\X)=\psi(\X)\;, \quad \X \in \partial \Omega\;, }\\ \\
\displaystyle{\mu_O \frac{\partial u}{\partial n}(\X)\big|_{\X \in \partial \omega^{(i)+}_\varepsilon}-\mu_{I_i} \frac{\partial u}{\partial n}(\X) \big|_{\X \in \partial \omega^{(i)-}_\varepsilon}=\varphi^{(i)}_\varepsilon(\X)\;,}\\ \\
\displaystyle{u(\X)\big|_{\X \in \partial \omega_\varepsilon^{(i)+}} =u(\X) \big|_{\X \in \partial \omega^{(i)-}_\varepsilon}}\end{array}\right\}
\end{equation}
 where $\psi \in L_\infty(\partial \Omega)$, $\varphi^{(i)}_\varepsilon \in L_\infty (\partial \omega^{(i)}_\varepsilon)$, for $1 \le i \le M$,  
\begin{equation*}\label{multlemtr6}
\int_{\partial \Omega} \psi(\X)\, dS_\X=0\;, \quad \int_{\partial \omega^{(i)}_\varepsilon} \varphi^{(i)}_\varepsilon(\X)\, dS_\X=0\;,
\end{equation*}
and $\varphi_\varepsilon^{(i)}(\X)=\varepsilon^{-1}\varphi^{(i)}(\varepsilon^{-1}(\X-\Oj^{(i)}))$, $i=1, \dots, M$.
To provide uniqueness we also assume 
\begin{equation*}\label{multuniq}
\int_{\partial \Omega} u(\X)dS_\X=0\;.
\end{equation*}

Then there exists a positive constant $A$, independent of $\varepsilon$ and such that 
\begin{equation}\label{multlemtr7}
\| u\|_{L_\infty(\bigcup_{l=1}^M \omega^{(l)}_\varepsilon\cup\Omega_\varepsilon )} \le A \left\{ \| \psi\|_{L_\infty(\partial \Omega)}+\varepsilon\max_{1 \le k \le M}\| \varphi^{(k)}_\varepsilon\|_{L_\infty(\partial \omega_\varepsilon^{(k)})}\right\}\;.
\end{equation}

\end{lemma}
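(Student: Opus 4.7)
The plan is to peel off each data source with a suitable model solution and then control the remainder. I write
\begin{equation*}
u=v+\sum_{i=1}^M \widetilde{W}^{(i)}+r,
\end{equation*}
where $v$ handles the Neumann datum $\psi$ via the unperturbed Neumann problem in $\Omega$, and $\widetilde{W}^{(i)}(\X):=W^{(i)}(\varepsilon^{-1}(\X-\Oj^{(i)}))$ is a rescaled boundary layer absorbing the transmission jump on $\partial\omega_\varepsilon^{(i)}$.

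\noindent\textbf{Controlling the first two pieces.} Take $v$ to be the solution of $\mu_O\Delta v=0$ in $\Omega$, $\mu_O\partial v/\partial n=\psi$ on $\partial\Omega$, $\int_{\partial\Omega}v\,dS=0$; the representation via $N^{(\Omega)}$ yields $\|v\|_{L_\infty(\Omega)}+\|\nabla v\|_{L_\infty(\Omega)}\le C\|\psi\|_{L_\infty(\partial\Omega)}$. Let $W^{(i)}$ be the solution of the unbounded transmission problem defined through integral representations of the form (\ref{intrep1})--(\ref{intrep2}) against $\cN^{(i)}$, with jump datum equal to the pull-back
\begin{equation*}
\varepsilon\,\varphi^{(i)}_\varepsilon(\Oj^{(i)}+\varepsilon\XI)+\varepsilon(\mu_{I_i}-\mu_O)\,\partial v/\partial n(\Oj^{(i)}+\varepsilon\XI).
\end{equation*}
This datum has zero mean on $\partial\omega^{(i)}$ (the first term by hypothesis, the second because $v$ is harmonic in $\omega_\varepsilon^{(i)}$), and the orthogonality to $\partial\zeta^{(i)}/\partial n$ required by Lemma \ref{lemunb} follows from the construction via $\cN^{(i)}$. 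Lemma \ref{lemunb} then supplies
\begin{equation*}
(1+|\XI|)\,|W^{(i)}(\XI)|\le C\varepsilon\bigl(\|\varphi^{(i)}_\varepsilon\|_{L_\infty(\partial\omega_\varepsilon^{(i)})}+\|\psi\|_{L_\infty(\partial\Omega)}\bigr),
\end{equation*}
so that $|\widetilde{W}^{(i)}|\le C(\varepsilon\|\varphi^{(i)}_\varepsilon\|_\infty+\|\psi\|_\infty)$ everywhere, and, at distances $\ge 1$ from $\Oj^{(i)}$, both $|\widetilde{W}^{(i)}|$ and $|\nabla\widetilde{W}^{(i)}|$ are of order $\varepsilon^2(\|\varphi^{(i)}_\varepsilon\|_\infty+\|\psi\|_\infty)$.

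\noindent\textbf{The remainder and the main obstacle.} By construction $r$ is harmonic (with the appropriate shear modulus) in each sub-region; its Neumann datum on $\partial\Omega$ equals $-\sum_i\mu_O\,\partial\widetilde{W}^{(i)}/\partial n$, and its transmission jump across each $\partial\omega_\varepsilon^{(j)}$ reduces to contributions from the foreign layers $\widetilde{W}^{(k)}$, $k\ne j$, since the $v$-correction and the $\widetilde{W}^{(j)}$-contribution on $\partial\omega_\varepsilon^{(j)}$ cancel the original jump by design. Using $\mathrm{dist}(\Oj^{(i)},\partial\Omega)\ge 1$ and $\mathrm{dist}(\Oj^{(j)},\Oj^{(k)})\ge 1$, these residual data are of order $\varepsilon(\|\psi\|_\infty+\varepsilon\max_i\|\varphi^{(i)}_\varepsilon\|_\infty)$. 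The hard step is converting this into a pointwise bound on $r$ with a constant uniform in $\varepsilon$, without circular reasoning. I would close the loop by iteration: applying the same decomposition to $r$ itself, each pass shrinks the data by a further factor of $\varepsilon$, producing a convergent geometric series for $\varepsilon$ small enough. For $\varepsilon$ bounded away from zero the estimate reduces to the standard solvability and continuity theory for transmission problems on a fixed smooth domain (cf.\ \cite{CarstSteph,CostSteph}). Assembling the bounds on $v$, $\widetilde{W}^{(i)}$, and $r$ gives (\ref{multlemtr7}).
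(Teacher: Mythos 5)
Your proposal is correct and is essentially the paper's own argument: the paper likewise splits $u$ into the unperturbed Neumann solution $\bcN\psi$ plus rescaled boundary layers $\mathfrak{N}^{(j)}_\varepsilon g^{(j)}_\varepsilon$ built from Lemma \ref{lemunb}, and absorbs the $O(\varepsilon^2)$ cross-interactions by inverting $\mathbf{I}+\mathbf{S}_\varepsilon$ via a Neumann series — which is exactly what your iterated decomposition of the remainder $r$ amounts to. The only cosmetic difference is that the paper phrases the contraction at the level of the unknown jump densities $g^{(j)}_\varepsilon$ rather than re-decomposing $r$ at each pass.
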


\begin{proof} \emph{ a) The inverse operators to model problems in $\Omega$ and $C\bar{\omega}^{(j)}$, $j=1, \dots, M$. }  Let us introduce the operators 
\begin{equation}\label{lemtr8}
\bcN: \psi \to w\quad \text{ and } \quad \mathfrak{N}^{(j)}: \varphi^{(j)} \to v^{(j)}\;,
\end{equation}
which are the inverse operators of the problems 
\begin{equation}\label{lemtr9}
\left.
\begin{array}{c}
\displaystyle{\mu_O\Delta w(\X)=0\;, \quad \X \in \Omega\;,} \\ \\
\displaystyle{\mu_O \frac{\partial w}{\partial n}(\X)=\psi(\X)\;, \quad \X \in \partial \Omega\;,}\\ \\
\displaystyle{\int_{\partial \Omega} w(\X)\, dS_\X=0\;,}
\end{array}\right\}
\end{equation}
and
\begin{equation}\label{lemtr12}\left.
\begin{array}{c}
\displaystyle{\mu_O \Delta v^{(j)}(\XI)=0\;, \quad \XI \in C\bar{\omega}^{(j)}\;,}\\ \\
\displaystyle{\mu_{I_j} \Delta v^{(j)}(\XI)=0\;, \quad \XI \in \omega^{(j)}\;,}\\ \\
\displaystyle{\mu_O\frac{\partial v^{(j)}}{\partial n}(\XI)\big|_{\XI \in \partial \omega^{(j)+}}-\mu_{I_j} \frac{\partial v^{(j)}}{\partial n}(\XI)\big|_{\XI \in \partial \omega^{(j)-}}=\varphi^{(j)}(\XI)\;,}\\ \\
\displaystyle{v^{(j)}(\XI)\big|_{\XI \in \partial \omega^{(j)+}}=v^{(j)}(\XI)\big|_{\XI \in \partial \omega^{(j)-}}\;,}\\ \\
\displaystyle{v^{(j)}(\XI)\to 0\quad \text{ as }\quad |\XI| \to \infty\;,}
\end{array}\right\}
\end{equation}
where $\psi \in L_\infty(\partial \Omega)$, $\varphi \in L_\infty (\partial \omega^{(j)})$, $j=1, \dots, M$, also
\begin{equation*}\label{lemtr17}
\int_{\partial \omega^{(j)}}\varphi^{(j)}(\XI)\, dS_{\XI}=0\quad \text{ and } \quad \int_{\partial \Omega} \psi(\X)\, dS_\X=0\;.
\end{equation*}

In scaled coordinates $\XI_j=\varepsilon^{-1}(\X-\Oj^{(j)})$, $j=1, \dots, M$, the operator $\mathfrak{N}^{(j)}_\varepsilon$ is defined by
\begin{equation*}\label{lemtr18}
(\mathfrak{N}^{(j)}_\varepsilon \varphi^{(j)}_\varepsilon)(\X)=(\mathfrak{N}^{(j)} \varphi^{(j)})(\XI_j)\;,
\end{equation*}
where $\varphi^{(j)}_\varepsilon(\X)=\varepsilon^{-1} \varphi^{(j)}(\varepsilon^{-1}(\X-\Oj^{(j)}))$.

\vspace{0.1in}\emph{ b)  An estimate for solutions to the model Neumann problem in $\Omega$. } Let $N^{(\Omega)}$ denote the Neumann function $(\ref{defN})$ in $\Omega$.

Then an application of Green's formula to $N^{(\Omega)}(\X, \Y)$ and $w(\Y)$ yields the representation for $w$
\begin{equation}\label{lemtr22}
w(\X)= \int_{\partial \Omega} N^{(\Omega)}(\Y, \X) \psi(\Y)\, dS_\Y +\frac{1}{|\partial\Omega|}\int_{\partial \Omega} w(\Y)\, dS_\Y\;.
\end{equation}
The solution $w$ of the Neumann problem in $\Omega$ is subject to the orthogonality condition in problem  (\ref{lemtr9}), and hence the last term on the right-hand side of (\ref{lemtr22}) is zero, so that 
\begin{equation*}\label{lemtr23}
w(\X)=\int_{\partial \Omega} N^{(\Omega)}(\Y, \X) \psi(\Y)\, dS_\Y\;.
\end{equation*}
From this we obtain the estimate 
\begin{equation}\label{lemtr24}
\sup_{\Omega} |w| \le \text{const } \sup_{\partial \Omega} |\psi|\;.
\end{equation}

\vspace{0.1in}\emph{ c) The case of the homogeneous boundary condition on $\partial \Omega$. }
When the right-hand side of the Neumann condition on $\partial \Omega$ in (\ref{multlemtr1}) is zero we look for a solution of the form
\begin{equation}\label{multlemeq2}
u_1=\sum^M_{j=1} \mathfrak{N}^{(j)}_\varepsilon g^{(j)}_\varepsilon-\bcN\Big(\text{Tr}_{\partial \Omega} \,\mu_O \sum^M_{j=1} \frac{\partial}{\partial n}(\mathfrak{N}^{(j)}_\varepsilon g^{(j)}_\varepsilon)\Big)\;,
\end{equation}
where $g^{(j)}_\varepsilon$ is an unknown function defined on $\partial \omega^{(j)}_\varepsilon$ such that 
\begin{equation*}\label{multlemeq3}
\int_{\partial \omega^{(j)}} g^{(j)}(\XI_j)\, dS_{\XI_j}=0\;,
\end{equation*}
and $g^{(j)}(\XI_j)=\varepsilon g^{(j)}_\varepsilon(\X)$. The function $u_1$ is harmonic inside $\bigcup_{l=1}^M \omega^{(l)}_\varepsilon \cup \Omega_\varepsilon$, 
and is continuous across the boundaries of the small inclusions. On $\partial \Omega$, $u_1$ satisfies
\begin{equation*}\label{multlemeq4}
\mu_O \frac{\partial u_1}{\partial n}(\X)=0\;.
\end{equation*}
Computing the jump in tractions of $u_1$ on $\partial \omega^{(m)}_\varepsilon$, we obtain
\begin{equation*}\label{multlemeq5}
\varphi^{(m)}_\varepsilon(\X)=\mu_O \frac{\partial u_1}{\partial n}(\X) \big|_{\X \in \partial \omega^{(m)+}_\varepsilon}-\mu_{I_m} \frac{\partial u_1}{\partial n}(\X)\big|_{\X \in \partial \omega^{(m)-}_\varepsilon}=g^{(m)}_\varepsilon+(S^{(m)}_\varepsilon g_\varepsilon)(\X)\;.
\end{equation*}
where $g_\varepsilon(\X)=(g^{(1)}(\X), \dots, g^{(M)}(\X))^T$ and
\begin{eqnarray}\label{multlemeq6}
(S^{(m)}_\varepsilon g_\varepsilon)(\X)&=&(\mu_O-\mu_{I_m}) \Big[\frac{\partial}{\partial n}\Big\{\sum_{\substack{j \ne m\\ 1 \le j \le M}} \mathfrak{N}^{(j)}_\varepsilon g^{(j)}_\varepsilon \Big\}\big|_{\X \in \partial \omega^{(m)}_\varepsilon}\nonumber
\\&&- \frac{\partial}{\partial n}\Big\{\bcN \Big( \text{Tr}_{\partial \Omega}\, \mu_O \sum^M_{j=1} \frac{\partial}{\partial n}(\mathfrak{N}^{(j)}_\varepsilon g_\varepsilon^{(j)})\Big) \Big\}\big|_{\X \in \partial \omega^{(m)}_\varepsilon}\Big]\nonumber\\
\end{eqnarray}

Let $B^{(m)}$ denote a disk centered at $\Oj^{(m)}$ and containing $\omega^{(m)}_\varepsilon$. Using a local estimate for solutions of Laplace's equation, along with Lemma \ref{lemunb} and the definition of $g^{(j)}_\varepsilon$ above, we obtain
\begin{eqnarray}\label{multlemeq7}
\Big\|\frac{\partial}{\partial n}\Big\{\sum_{\substack{j \ne m\\ 1 \le j \le M}} \mathfrak{N}^{(j)}_\varepsilon g^{(j)}_\varepsilon \Big\}\Big\|_{L_\infty(\partial \omega^{(m)}_\varepsilon)}&\le & \text{const } \Big\|\sum_{\substack{j \ne m\\ 1 \le j \le M}} \mathfrak{N}^{(j)}_\varepsilon g^{(j)}_\varepsilon\Big\|_{L_\infty(B^{(m)})}\nonumber\\ 
&\le &  \text{const }\sum_{\substack{j \ne m\\ 1 \le j \le M}}\varepsilon^2\| g^{(j)}_\varepsilon\|_{L_\infty(\partial \omega^{(j)}_\varepsilon)}\;.
\end{eqnarray}
Then, from the  local estimate for harmonic functions, 
 we can also assert the inequality
\begin{eqnarray*}\label{multlemeq8}
&&\Big\|\frac{\partial}{\partial n}\Big\{\bcN \Big( \text{Tr}_{\partial \Omega}\, \mu_O \sum^M_{j=1} \frac{\partial}{\partial n}(\mathfrak{N}^{(j)}_\varepsilon g_\varepsilon^{(j)})\Big) \Big\}\Big\|_{L_\infty(\partial \omega^{(m)}_\varepsilon)} \\&\le&
\text{const } \Big\|\bcN \Big( \text{Tr}_{\partial \Omega}\, \mu_O \sum^M_{j=1} \frac{\partial}{\partial n}(\mathfrak{N}^{(j)}_\varepsilon g_\varepsilon^{(j)})\Big) \Big\|_{L_\infty(B^{(m)})} \\ &\le& \text{const }\sum^M_{j=1}\Big\|  \frac{\partial}{\partial n}(\mathfrak{N}^{(j)}_\varepsilon g_\varepsilon^{(j)}) \Big\|_{L_\infty(\partial \Omega)}  
 \le \text{const }\sum_{j =1}^M\varepsilon^2\| g^{(j)}_\varepsilon\|_{L_\infty(\partial \omega^{(j)}_\varepsilon)}\;,
\end{eqnarray*}
where in moving to the second inequality we used estimate (\ref{lemtr24}), and then Lemma \ref{lemunb} brings us to the last inequality. 

Then, the preceding estimate and (\ref{multlemeq6}), (\ref{multlemeq7})  lead to 
\begin{equation*}\label{multlemeq9}
\| S^{(m)}_\varepsilon g_\varepsilon\|_{L_\infty(\partial \omega^{(m)}_\varepsilon)} \le \text{const } \varepsilon^2 \sum^M_{j=1}\| g^{(j)}_\varepsilon\|_{L_\infty(\partial \omega^{(j)}_\varepsilon)}\;.
\end{equation*}
Hence, from the smallness of $S_\varepsilon^{(m)}$, we can write
\begin{equation*}\label{multlemeq10}
g_\varepsilon=(\mathbf{I}+\mathbf{S}_\varepsilon)^{-1}\varphi_\varepsilon\;,
\end{equation*}
where $g_\varepsilon(\X)=(g^{(1)}_\varepsilon(\X), \dots, g^{(M)}_\varepsilon(\X))^T$, $\varphi_\varepsilon(\X)=(\varphi^{(1)}_\varepsilon(\X), \dots, \varphi^{(M)}_\varepsilon(\X))^T$, $\mathbf{S}_\varepsilon$ is a matrix whose rows are $S_\varepsilon^{(1)}, \dots,S_\varepsilon^{(M)}$, and
\begin{equation}\label{multlemeq11}
\| g_\varepsilon^{(j)}\|_{L_\infty(\partial \omega^{(j)}_\varepsilon)} \le \text{const } \max_{1 \le k \le M}\| \varphi^{(k)}_\varepsilon\|_{L_\infty(\partial \omega^{(k)}_\varepsilon)}\;.
\end{equation}
From (\ref{multlemeq2}), together with (\ref{lemtr24}) and Lemma \ref{lemunb}, we obtain
\begin{equation*}\label{multlemeq12}
 \|u_1\|_{L_\infty(\bigcup_{l=1}^M \omega^{(l)}_\varepsilon \cup \Omega_\varepsilon)}\le \text{const } \sum^M_{j=1}\varepsilon \| g^{(j)}_\varepsilon\|_{L_\infty(\partial \omega^{(j)}_\varepsilon)}\;.
\end{equation*}
Now,  this and (\ref{multlemeq11}) give
\begin{equation}\label{multlemeq13}
\|u_1\|_{L_\infty(\bigcup_{l=1}^M \omega^{(l)}_\varepsilon \cup \Omega_\varepsilon)} \le \text{const }\varepsilon \max_{1 \le k \le M}\|\varphi_\varepsilon^{(k)}\|_{L_\infty(\partial \omega^{(k)}_\varepsilon)}\;.
\end{equation} 

\emph{ d) The case of continuous tractions on $\partial \omega^{(j)}_\varepsilon$, $j=1, \dots, M$. }  In this situation we look for the solution $u_2$ in the form
\begin{equation}\label{multlemeq14}
u_2=\bcN\psi+v\;.
\end{equation}
Then, $v$ is a solution of (\ref{multlemtr1}) with the boundary conditions
\begin{equation*}\label{multlemtr15}
\mu_O \frac{\partial v}{\partial n}(\X)=0\;, \text{ on }\partial \Omega\;, 
\end{equation*}
and
\begin{equation*}\label{multlemtr16}
v(\X)\big|_{\X \in \partial \omega^{(j)+}_\varepsilon}=v(\X)\big|_{\X \in \partial \omega^{(j)-}_\varepsilon}
\end{equation*}
\begin{equation*}\label{multlemtr17}
\mu_O\frac{\partial v}{\partial n}(\X)\big|_{\X \in \partial \omega^{(j)+}_\varepsilon}-\mu_{I_j}\frac{\partial v}{\partial n}(\X)\big|_{\X \in \partial \omega^{(j)-}_\varepsilon}=(\mu_{I_j}-\mu_O)\frac{\partial \bcN\psi}{\partial n}(\X)\big|_{\X \in \partial \omega^{(j)}_\varepsilon}
\end{equation*}
for $1 \le j \le M$, where the right-hand sides of the above traction conditions on $\partial \omega_\varepsilon^{(j)}$, $j=1, \dots, M$ are self balanced, so by part c) of the present proof
\begin{eqnarray*}\label{multlemtr18}
 \|v\|_{L_\infty(\bigcup_{l=1}^M \omega^{(l)}_\varepsilon \cup \Omega_\varepsilon)}&\le &\text{const }\varepsilon \max_{1 \le k\le M}\big\|\frac{\partial \bcN\psi}{\partial n}(\X)\big\|_{L_\infty(\partial \omega^{(k)}_\varepsilon)} \nonumber\\ 
&\le& \text{const }\varepsilon \max_{1 \le k\le M}\big\| \bcN\psi \big\|_{L_\infty(B^{(k)})}\nonumber\\ 
&\le &\text{const }\varepsilon \| \psi\|_{L_\infty(\partial \Omega)}\;.
\end{eqnarray*}
This inequality, (\ref{lemtr24}) and  (\ref{multlemeq14}), give 
\begin{equation}\label{multlemtr19}
 \|u_2\|_{L_\infty(\bigcup_{l=1}^M \omega^{(l)}_\varepsilon \cup \Omega_\varepsilon)} \le \text{const }\| \psi \|_{L_\infty(\partial \Omega)}\;.
\end{equation}
Finally, we obtain (\ref{multlemtr7}) through the combination of (\ref{multlemeq13}) and (\ref{multlemtr19}).
\end{proof}


\section{An asymptotic approximation  for the regular part 
 of the Green's function $\cN^{(j)}$ at infinity}\label{transregest}

In this section, we shall prove a result which concerns the asymptotics of $h_N^{(j, O)}$ and $h_N^{(j,I)}$, $j=1, \dots, M$,  (see (\ref{trans7})), at infinity.

\begin{lemma}\label{lemesthn}
For $|\XI|>2$. Then

a) 
\begin{equation*}
h_N^{(j,O)}(\XI, \E)=-\cD^{(j, O)}(\E) \cdot \nabla_{\XI}((2\pi \mu_O)^{-1}\log|\XI|^{-1})+O(|\XI|^{-2}(|\E|+1)^{-1})\;,
\end{equation*}
for  $\E\in C\bar{\omega}^{(j)}$,
 
b) 
\begin{eqnarray*}
h_N^{(j, I)}(\XI, \E)&=&-(\mu_{I_j}^{-1}-\mu_O^{-1})\{(2\pi)^{-1}\log|\XI|-\E \cdot \nabla_{\XI}((2\pi)^{-1}\log|\XI|)\}\nonumber\\
 &&-\cD^{(j, I)}(\E) \cdot \nabla_{\XI}((2\pi\mu_O)^{-1}\log|\XI|^{-1})+O(|\XI|^{-2})\;.
\end{eqnarray*}
for $\E\in \omega^{(j)}$.
\end{lemma}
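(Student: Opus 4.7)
The plan is to exploit the symmetry of $\cN^{(j)}$ to swap the roles of $\XI$ (large) and $\E$ (bounded), approximate the near field of a distant source using the dipole field, and bound the residual via Lemma~\ref{lemunb}. For part (a), the symmetry relation in (\ref{symhN}) gives $h_N^{(j,O)}(\XI,\E)=h_N^{(j,O)}(\E,\XI)$. For part (b), the analogous relation in (\ref{symhN}), after renaming variables, yields
\begin{equation*}
h_N^{(j,I)}(\XI,\E)=h_N^{(j,O)}(\E,\XI)-\frac{1}{2\pi}\Bigl(\frac{1}{\mu_{I_j}}-\frac{1}{\mu_O}\Bigr)\log|\XI-\E|,
\end{equation*}
reducing (b) to analyzing the same quantity $h_N^{(j,O)}(\E,\XI)$ but with $\E\in\omega^{(j)}$, and then Taylor-expanding the log correction.

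Next I split off the free-space contribution. Setting $G_0(\E,\XI):=-(2\pi\mu_O)^{-1}\log|\E-\XI|$ and $w(\E):=\cN^{(j)}(\E,\XI)-G_0(\E,\XI)$, so that $h_N^{(j,O)}(\E,\XI)=-w(\E)$, I verify from the defining problem of $\cN^{(j)}$ that $w$ is harmonic on each side of $\partial\omega^{(j)}$, continuous across it, decays to zero as $|\E|\to\infty$, and has flux jump $(\mu_{I_j}-\mu_O)\partial G_0/\partial n$ on $\partial\omega^{(j)}$. Since $|\E|\le 1$ there and $|\XI|>2$, Taylor expansion of $\nabla_\E G_0$ gives, uniformly on $\partial\omega^{(j)}$,
\begin{equation*}
\frac{\partial G_0}{\partial n}(\E,\XI)=F(\XI)\cdot\N^{(j)}+O(|\XI|^{-2}),\qquad F(\XI):=\frac{1}{2\pi\mu_O}\frac{\XI}{|\XI|^2}.
\end{equation*}
From (\ref{trans10}), $-F(\XI)\cdot\cD^{(j)}(\E)$ solves the homogeneous transmission problem with flux jump exactly $(\mu_{I_j}-\mu_O)F(\XI)\cdot\N^{(j)}$, so the residual $U(\E):=w(\E)+F(\XI)\cdot\cD^{(j)}(\E)$ enters Lemma~\ref{lemunb} with flux data of $L_\infty(\partial\omega^{(j)})$-norm $O(|\XI|^{-2})$; the balance $\int_{\partial\omega^{(j)}}\varphi\,dS=0$ follows because $G_0$ is harmonic in $\omega^{(j)}$ and $\int_{\partial\omega^{(j)}}\N^{(j)}\,dS=0$. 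Lemma~\ref{lemunb} then delivers $(|\E|+1)|U(\E)|\le C|\XI|^{-2}$.

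For part (a), restricting $\E$ to $C\bar{\omega}^{(j)}$ produces $h_N^{(j,O)}(\E,\XI)=F(\XI)\cdot\cD^{(j,O)}(\E)+O(|\XI|^{-2}(|\E|+1)^{-1})$; rewriting $F(\XI)=-\nabla_\XI((2\pi\mu_O)^{-1}\log|\XI|^{-1})$ and using the symmetry of the first paragraph gives the claimed expansion. For part (b), restricting instead to $\E\in\omega^{(j)}$ (so $|\E|+1\le 2$) produces $h_N^{(j,O)}(\E,\XI)=F(\XI)\cdot\cD^{(j,I)}(\E)+O(|\XI|^{-2})$; substituting into the identity of paragraph one and expanding $\log|\XI-\E|=\log|\XI|-\E\cdot\nabla_\XI\log|\XI|+O(|\XI|^{-2})$ (valid since $|\E|$ is bounded and $|\XI|>2$) yields the three terms claimed. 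The main technical hurdles are verifying the hypotheses of Lemma~\ref{lemunb} for $U$ — notably the orthogonality against $\partial\zeta^{(j)}/\partial n|_+$ and the uniform $L_\infty$-control of the flux residual on $\partial\omega^{(j)}$ — together with the uniform Taylor estimate of $\partial G_0/\partial n$ as $\XI$ varies over $|\XI|>2$; once these are secured, the rest is algebraic bookkeeping of $\log$-gradients.
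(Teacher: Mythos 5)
Your argument is correct, and it reaches the same two structural pillars as the paper --- the symmetry relations (\ref{symhN}) to transfer the expansion between the four components of the regular part, and Lemma \ref{lemunb} applied to a residual transmission problem in the $\E$-variable to control the remainder --- but it identifies the leading dipole coefficient by a genuinely different and more self-contained route. The paper first invokes Lemma 2 of Kondratiev--Oleinik \cite{9} to assert that $h_N^{(j,O,O)}(\XI,\E)$ has the form $\mathcal{C}^{(j,O)}(\E)\cdot\nabla_\XI((2\pi)^{-1}\log|\XI|^{-1})+O(|\XI|^{-2}(1+|\E|)^{-1})$ with an undetermined coefficient, and then evaluates $\mathcal{C}^{(j,O)}(\E)=-\cD^{(j,O)}(\E)/\mu_O$ by pairing $h_N^{(j)}$ against the auxiliary field $\Upsilon^{(j)}(\XI)=\XI-\cD^{(j)}(\XI)$ in Green's formula over $B_R\setminus\bar\omega^{(j)}$ and $\omega^{(j)}$ and letting $R\to\infty$; only afterwards does it estimate the remainder via Lemma \ref{lemunb}. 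You instead subtract the candidate corrector $-F(\XI)\cdot\cD^{(j)}(\E)$ directly, observe via (\ref{trans10}) that its flux jump cancels the leading part of the data of $w$, and let Lemma \ref{lemunb} certify the coefficient and the remainder in a single stroke; this avoids the external citation and the $\Upsilon^{(j)}$ computation entirely, at the cost of having to check the hypotheses of Lemma \ref{lemunb} for $U$. Those hypotheses do all hold: the balance condition is as you say, and the orthogonality against $\partial\zeta^{(j)}/\partial n\big|_{+}$, which you flag but do not verify, is automatic once $U\to 0$ at infinity --- applying Green's formula to $U$ and $\zeta^{(j)}$ in $B_R\setminus\bar\omega^{(j)}$, using $\zeta^{(j)}=0$ on $\partial\omega^{(j)}$ and the decay $U=O(|\E|^{-1})$, $\nabla U=O(|\E|^{-2})$ (there is no logarithmic term since $\int_{\partial\omega^{(j)}}\varphi\,dS=0$), one finds
\begin{equation*}
\int_{\partial\omega^{(j)}}U\,\frac{\partial\zeta^{(j)}}{\partial n}\Big|_{+}dS=\lim_{R\to\infty}\int_{\partial B_R}\Bigl(U\frac{\partial\zeta^{(j)}}{\partial n}-\zeta^{(j)}\frac{\partial U}{\partial n}\Bigr)dS=0\,.
\end{equation*}
With that detail supplied, your derivation of parts a) and b), including the uniform Taylor estimates and the final bookkeeping of the logarithmic gradients, is complete and yields exactly the stated remainders.
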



\begin{proof}  First we study the asymptotic behaviour of the functions $h^{(j, O, O)}_N$, $h_N^{(j, I, O)}$, $h_N^{(j, O, I)}$ introduced in Problem 2 of Section \ref{modfield}, in the neighbourhood of infinity. We show that for $|\XI|>2$,

a) 
\begin{equation}\label{hnest}
h_N^{(j,O, O)}(\XI, \E)=-\cD^{(j, O)}(\E) \cdot \nabla_{\XI}((2\pi \mu_O)^{-1}\log|\XI|^{-1})+O(|\XI|^{-2}(|\E|+1)^{-1})\;,
\end{equation}
 for $\E\in C\bar{\omega}^{(j)}$,

b) 
\begin{eqnarray}\label{hnest2}
h_N^{(j, I, O)}(\E, \XI)=-\cD^{(j, I)}(\E) \cdot \nabla_{\XI}((2\pi \mu_O)^{-1}\log|\XI|^{-1})+O(|\XI|^{-2})\;,
\end{eqnarray}
for $\E\in \omega^{(j)}$,

c) \begin{eqnarray}\label{hncoroest1v1}
 h_N^{(j, O, I)}(\XI, \E)&=&-(\mu_{I_j}^{-1}-\mu_O^{-1})\{(2\pi)^{-1}\log|\XI|-\E \cdot \nabla_{\XI}((2\pi)^{-1}\log|\XI|)\}\nonumber\\
 &&-\cD^{(j, I)}(\E) \cdot \nabla_{\XI}((2\pi\mu_O)^{-1}\log|\XI|^{-1})+O(|\XI|^{-2})\;,
\end{eqnarray}
for $\E \in \omega^{(j)}$.

$i)$ \emph{ Auxiliary functions $h^{(j)}_N$ and $\Upsilon^{(j)}$. } Let us introduce the function $h_N^{(j)}$ as
\begin{equation}\label{hneq}
h_N^{(j)}(\XI, \E)=\chi_{C\bar{\omega}^{(j)}}(\E)h_N^{(j,O)}(\XI, \E)+\chi_{\omega^{(j)}}(\E)h_N^{(j, I)}(\XI, \E)\;.
\end{equation}
From Problem 2 of Section \ref{modfield}, when $\E \in C \bar{\omega}^{(j)}\cup \omega^{(j)}$, $h_N^{(j)}$  is a solution of the transmission
problem
\begin{equation*}\label{regest1}
\mu_O \Delta_{\XI} h_N^{(j)}(\XI, \E)=0\;,\quad \XI \in C\bar{\omega}^{(j)}\;,
\end{equation*}
\begin{equation*}\label{regest2}
\mu_{I_j} \Delta_{\XI} h_N^{(j)}(\XI, \E)=0\;, \quad \XI \in \omega^{(j)}\;,
\end{equation*}
\begin{eqnarray}\label{regest3}
&&\mu_{I_j}  \frac{\partial h_N^{(j)}}{\partial n_{\XI}}(\XI, \E)\big|_{\XI \in \partial \omega^{(j)-}}-\mu_O  \frac{\partial h_N^{(j)}}{\partial n_{\XI}}(\XI, \E)\big|_{\XI \in \partial \omega^{(j)+}}\nonumber\\
&=&-\frac{(\mu_{I_j}-\mu_O)}{2\pi}\left(\frac{\chi_{C\bar{\omega}^{(j)}}(\E)}{\mu_O}+\frac{\chi_{\omega^{(j)}}(\E)}{\mu_{I_j}}\right)\frac{\partial}{\partial n_{\XI}}(\log|\XI-\E|)\;, 
\end{eqnarray}
\begin{equation}\label{regest4}
 h_N^{(j)}(\XI, \E)\big|_{\XI \in \partial \omega^{(j)+}}=h_N^{(j)}(\XI, \E)\big|_{\XI \in \partial \omega^{(j)-}}\;,
\end{equation}
\begin{equation*}\label{regest5}
h_N^{(j)}(\XI, \E) = -(\mu_{I_j}^{-1}-\mu_O^{-1})\chi_{\omega^{(j)}}(\E)(2\pi)^{-1}\log|\XI|+O(|\XI|^{-1}), \quad \text{ as }\quad|\XI|\to \infty\;.
\end{equation*}

 We introduce one more auxiliary vector function $\Upsilon^{(j)}(\XI)=\{ \Upsilon_i^{(j)}(\XI)\}^2_{i=1}$, defined by
\begin{equation}\label{regest6}
\Upsilon^{(j)}(\XI)=\XI -\cD^{(j)}(\XI)\;.
\end{equation}
It solves the transmission problem
\begin{equation}\label{regest7}
\left. \begin{array}{c}
\displaystyle{\mu_O \Delta_{\XI} \Upsilon^{(j)}(\XI)=\Oj\;,\quad \XI \in C\bar{\omega}^{(j)}\;,}\\ \\
\displaystyle{\mu_{I_j} \Delta_{\XI} \Upsilon^{(j)}(\XI)=\Oj\;, \quad \XI \in \omega^{(j)}\;,}\\ \\
\displaystyle{\mu_{I_j}  \frac{\partial \Upsilon^{(j)}}{\partial n_{\XI}}(\XI)\big|_{\XI \in \partial \omega^{(j)-}}=\mu_O  \frac{\partial \Upsilon^{(j)}}{\partial n_{\XI}}(\XI)\big|_{\XI \in \partial \omega^{(j)+}}\;, }\\ \\
\displaystyle{\Upsilon^{(j)}(\XI)\big|_{\XI \in \partial \omega^{(j)+}}=\Upsilon^{(j)}(\XI)\big|_{\XI \in \partial \omega^{(j)-}}\;,}\\ \\
\displaystyle{\Upsilon^{(j)}(\XI)= \XI+O(|\XI|^{-1}), \quad \text{ as }\quad|\XI|\to \infty\;,}\end{array}\right\}
\end{equation}
which is consistent with Problem 3 of Section \ref{modfield}. One can also represent $\Upsilon^{(j)}$, as 
\[ \Upsilon^{(j)}(\XI)=\chi_{C\bar{\omega}^{(j)}}(\XI)\Upsilon^{(j,O)}(\XI)+\chi_{\omega^{(j)}}(\XI)\Upsilon^{(j,I)}(\XI)\;.\]

\vspace{0.1in} $ii)$ \emph{The asymptotics of $h_N^{( j, O, O)}$ at infinity. } 
 For $|\XI|>2$, $\E \in C\bar{\omega}^{( j)}$, we have from Lemma 2 in \cite{9}, that the function $h_N^{( j,O,O)}$ defined in (\ref{hneq}), has the asymptotic representation
\begin{equation}\label{regest18}
h_N^{( j, O, O)}(\XI, \E)=\mathcal{C}^{(j,O)}(\E)\cdot \nabla_{\XI}((2\pi )^{-1}\log|\XI|^{-1})+r^{( j, O, O)}_N(\XI, \E)\;,
\end{equation}
where the remainder $r^{(j, O, O)}_N(\XI, \E)$
satisfies 
\begin{equation*}
|r_N^{(j, O, O)}(\XI, \E)| \le \text{const }(1+|\E|)^{-1} |\XI|^{-2}\quad \text{ for } |\XI|>2, \E \in C\bar{\omega}^{(j)}\;.
\end{equation*}
The vector function $\mathcal{C}^{(j,O)}(\E)=\{\mathcal{C}^{(j,O)}_i(\E)\}_{i=1}^2$ is evaluated below.

\vspace{0.1in}\emph{Evaluation of $\mathcal{C}^{(j,O)}(\E)$. }Let $B_R$ be a disk centered at the origin with a sufficiently large radius $R$. We apply Green's formula to $h_N^{(j)}$ and $\Upsilon_i^{(j)}$ in the domain  $B_R \backslash \bar{\omega}^{(j)} \cup \omega^{(j)}$ 
\begin{eqnarray}
0&=&\mu_O \int_{B_R\backslash \bar{\omega}^{(j)}}\left\{ \Upsilon_i^{(j)}(\XI) \Delta_{\XI} h_N^{(j)}(\XI, \E)-h_N^{(j)}(\XI, \E) \Delta_{\XI} \Upsilon_i^{(j)}(\XI)\right\}d\XI\nonumber\\
&&+\mu_{I_j} \int_{\omega^{(j)}}\left\{ \Upsilon_i^{(j)}(\XI) \Delta_{\XI} h_N^{(j)}(\XI, \E)-h_N^{(j)}(\XI, \E) \Delta_{\XI} \Upsilon_i^{(j)}(\XI)\right\}d\XI\nonumber\\ 
&=&\mu_O \int_{\partial B_R}\left\{ \Upsilon_i^{(j)}(\XI) \frac{\partial h_N^{(j)}}{\partial n_{\XI}}(\XI, \E)-h_N^{(j)}(\XI, \E) \frac{\partial \Upsilon_i^{(j)}}{\partial n_{\XI}}(\XI)\right\}dS_{\XI}\nonumber\\ 
&&+\int_{\partial \omega^{(j)}}\left[ \Upsilon_i^{(j)}(\XI)\left\{\mu_{I_j}\frac{\partial h_N^{(j)}}{\partial n_{\XI}}(\XI, \E)\big|_{\partial \omega^{(j)-}}-\mu_O\frac{\partial h_N^{(j)}}{\partial n_{\XI}}(\XI, \E)\big|_{\partial \omega^{(j)+}}\right\}\right.\nonumber\\ 
&&\left. -h_N^{(j)}(\XI, \E)\left\{\mu_{I_j}\frac{\partial \Upsilon_i^{(j)}}{\partial n_{\XI}}(\XI)\big|_{\partial \omega^{(j)-}}-\mu_O\frac{\partial \Upsilon_i^{(j)}}{\partial n_{\XI}}(\XI)\big|_{\partial \omega^{(j)+}}\right\}\right]dS_{\XI}\;, \label{regest12}
\end{eqnarray}
where while  combining the integrals over $\partial \omega^{(j)+}$ and $\partial \omega^{(j)-}$ we have used the continuity conditions  for the functions $h_N^{(j)}$ and $\Upsilon_i^{(j)}$, respectively (see (\ref{regest4}) and problem (\ref{regest7})). 
With the use of the transmission  conditions for $\Upsilon_i^{(j)}$ and $h_N^{(j)}$, when $\E \in C\bar{\omega}^{(j)}$,  this equality becomes
\begin{eqnarray}
0&=&\mu_O\int_{\partial B_R} \left\{ \Upsilon_i^{(j)}(\XI) \frac{\partial h_N^{(j)}}{\partial n_{\XI}}(\XI, \E)-h_N^{(j)}(\XI, \E) \frac{\partial \Upsilon_i^{(j)}}{\partial n_{\XI}}(\XI)\right\}dS_{\XI}\nonumber\\ \nonumber\\ 
&&-\frac{(\mu_{I_j}-\mu_O)}{\mu_O}\int_{\partial \omega^{(j)}} \Upsilon_i^{(j)}(\XI)\frac{\partial}{\partial n_{\XI}}((2\pi)^{-1}\log|\XI-\E|)\, dS_{\XI}\;,\label{regest13}
\end{eqnarray}
for $\E\in C\bar{\omega}^{(j)}$.
The last integral, owing to (\ref{regest6}), is equal to 
\begin{eqnarray}
&& \int_{\partial \omega^{(j)}} \Upsilon_i^{(j)}(\XI)\frac{\partial}{\partial n_{\XI}}((2\pi)^{-1}\log|\XI-\E|)\, dS_{\XI}\nonumber\\
&=& \int_{\partial \omega^{(j)}}\left\{ \xi_i \frac{\partial}{\partial n_{\XI}}((2\pi)^{-1}\log|\XI-\E|)-\cD^{(j)}_i(\XI)\frac{\partial}{\partial n_{\XI}}((2\pi)^{-1}\log|\XI-\E|)\right\}dS_{\XI}\nonumber\\
&&=  \int_{\partial \omega^{(j)}} \left\{(2\pi)^{-1}\log|\XI-\E| \frac{\partial \xi_i}{\partial n_{\XI}}-\cD_i^{(j)}(\XI) \frac{\partial}{\partial n_{\XI}}((2\pi)^{-1}\log|\XI-\E|)\right\}dS_{\XI}\;,\label{regest14}
\end{eqnarray}
where $\E\in C\bar{\omega}^{(j)}$.
Now, using the jump in tractions for the dipole fields, stated in Problem 3 of Section \ref{modfield}, 
we can write
\begin{eqnarray*}
&&\frac{(\mu_{I_j}-\mu_O)}{\mu_O} \int_{\partial \omega^{(j)}} \Upsilon_i^{(j)}(\XI)\frac{\partial}{\partial n_{\XI}}((2\pi)^{-1}\log|\XI-\E|^{-1})\, dS_{\XI}\nonumber\\
&=&\frac{1}{\mu_O}\int_{\partial \omega^{(j)}} \left\{ (2\pi)^{-1} \log |\XI-\E|^{-1} \left[\mu_{I_j}\frac{\partial \cD^{(j,I)}_i}{\partial n_{\XI}}(\XI) -\mu_O \frac{\partial \cD_i^{(j,O)}}{\partial n_{\XI}}(\XI)\right]\right.\nonumber\\ 
&&\left.- (\mu_{I_j}-\mu_O) \cD_i^{(j)}(\XI)  \frac{\partial}{\partial n_{\XI}}((2\pi)^{-1}\log|\XI-\E|^{-1}) \right\}dS_{\XI}\;.
\end{eqnarray*}
Using the continuity of the displacements for the dipole fields (see Problem 3, Section \ref{modfield}), we write the above right-hand side as 
\[\begin{array}{c}
\small{\displaystyle{\frac{\mu_{I_j}}{\mu_O}\int_{\partial \omega^{(j)}} \left\{ (2\pi)^{-1} \log |\XI-\E|^{-1} \frac{\partial \cD^{(j,I)}_i}{\partial n_{\XI}}(\XI)-  \cD^{(j,I)}_i(\XI)  \frac{\partial}{\partial n_{\XI}}((2\pi)^{-1}\log|\XI-\E|^{-1})\right\} dS_{\XI}}}\nonumber\\ 
\small{\displaystyle{-\int_{\partial \omega^{(j)}} \left\{ (2\pi)^{-1} \log |\XI-\E|^{-1} \frac{\partial \cD^{(j,O)}_i}{\partial n_{\XI}}(\XI)-  \cD^{(j,O)}_i(\XI)  \frac{\partial}{\partial n_{\XI}}((2\pi)^{-1}\log|\XI-\E|^{-1})\right\} dS_{\XI}\;.}}\label{regest15}
\end{array}\]
Then, upon applying Green's formula to $\cD^{(j,O)}_i$, $\cD^{(j,I)}_i$   in $B_R \backslash \bar{\omega}^{(j)}$, $\omega^{(j)}$, respectively, with the fundamental solution of $-\Delta$, and using the definition of $\Upsilon^{(j)}$, we  obtain the above in the form
\[\begin{array}{c}
  \displaystyle{\cD^{(j,O)}_i(\E)-\int_{\partial B_R} \Big\{ (2\pi)^{-1} \log |\XI-\E|^{-1} \frac{\partial \cD^{(j,O)}_i}{\partial n_{\XI}}(\XI)}
\\ \\\displaystyle{-  \cD^{(j,O)}_i(\XI)  \frac{\partial}{\partial n_{\XI}}((2\pi)^{-1}\log|\XI-\E|^{-1})\Big\} dS_{\XI}}
\end{array}\]
 where $\E \in C\bar{\omega}^{(j)}$ and by (\ref{trans12}) the integral over $\partial B_R$ decays as $R \to \infty$ like $O(|\log R|/R)$. Therefore, we have 
\begin{equation*}\label{regest16}
\frac{(\mu_{I_j}-\mu_O)}{\mu_O} \int_{\partial \omega^{(j)}} \Upsilon^{(j)}_i(\XI)\frac{\partial}{\partial n_{\XI}}((2\pi)^{-1}\log|\XI-\E|^{-1})\, dS_{\XI}
= \cD^{(j,O)}_i(\E)
\end{equation*}
for $\E \in C\bar{\omega}^{(j)}$.
Equation (\ref{regest13}), with substitution of the preceding equality, leads to 
\begin{equation}\label{regest17}
-\cD^{(j,O)}_i(\E)=\mu_O\int_{\partial B_R} \left\{ \Upsilon_i^{(j)}(\XI) \frac{\partial h_N^{(j)}}{\partial n_{\XI}}(\XI, \E)-h_N^{(j)}(\XI, \E) \frac{\partial \Upsilon_i^{(j)}}{\partial n_{\XI}}(\XI)\right\}dS_{\XI}\;.
\end{equation}
We now aim to determine the vector function $\mathcal{C}^{(j,O)}(\E)=\{\mathcal{C}^{(j, O)}_i(\E)\}_{i=1}^2$ in (\ref{regest18}). Taking the limit $R \to \infty$ in (\ref{regest17}) and using (\ref{regest18}) (where $h_N^{(j, O, O)}(\XI, \E)=h^{(j)}_N(\XI, \E)$  for $\XI,\E \in  C\bar{\omega}^{(j)}$),   we have
\begin{eqnarray*}
- \cD^{(j,O)}_i(\E)&=& \lim_{R \to \infty} \mu_O\mathcal{C}^{(j,O)}(\E)\cdot\int_{\partial B_R} \Big\{\xi_i \frac{\partial}{\partial n_{\XI}}( \nabla_{\XI}((2\pi )^{-1}\log|\XI|^{-1})) \nonumber\\ \nonumber\\
&&-
 \nabla_{\XI}((2\pi )^{-1}\log|\XI|^{-1}) \frac{\partial \xi_i}{\partial n_{\XI}}\Big\}\, dS_{\XI}\nonumber\\ \nonumber\\
&=&\mu_O\mathcal{C}^{(j, O)}_i(\E)\;.
\end{eqnarray*}
Thus, we have derived that
\[ \mathcal{C}^{(j,O)}(\E)=-\frac{\cD^{(j, O)}(\E)}{\mu_O} \;, \]
which corresponds to the leading order term of the function $h^{(j,O,O)}_N$, stated in the current lemma in (\ref{hnest}).

\vspace{0.1in}$iii)$ \emph{The asymptotics of $h^{(j, I, O)}$ at infinity. } We have already shown that for $|\XI| \ge 2$, $\E \in C\bar{\omega}^{(j)}$
\begin{equation}\label{hnOO1} h_N^{(j, O, O)} (\XI, \E)=-\cD^{(j, O)}(\E) \cdot \nabla_{\XI} ((2\pi\mu_O)^{-1}\log|\XI|^{-1})+r_N^{(j, O, O)}(\XI, \E)\;.
\end{equation}
In order to deduce the leading order term of $h^{(j, I, O)}$ at infinity, we recall the relations (\ref{symhN}). First, since $h^{(j, O, O)}_N$ is symmetric for $\XI, \E \in C\bar{\omega}^{(j)}$ we have the above asymptotic representation also holds for $h_N^{(j, O,O)}(\E, \XI)$. Next we allow $\E$ to approach the boundary of the inclusion $ \omega^{(j)}$. For $\E \in \partial \omega^{(j)}$ we have 
\[\cD^{(j, O)}(\E)=\cD^{(j, I)}(\E)\;,\]
\[ h_N^{(j, O, O)}(\E, \XI)=h_N^{(j, I, O)}(\E, \XI).\]
Therefore, allowing $\E \in  \omega^{(j)}$, for $|\XI| > 2$, we arrive at 
\begin{equation}\label{eqIOr} h_N^{(j, I, O)} ( \E, \XI)=-\cD^{(j, I)}(\E) \cdot \nabla_{\XI} ((2\pi\mu_O)^{-1}\log|\XI|^{-1})+r_N^{(j, O, I)}(\XI, \E)\;,
\end{equation}
where $r^{(j, O, I)}_N$ is the remainder term and subject to its smallness the leading order part of (\ref{hnest2}) has been formally deduced.

\vspace{0.1in}\emph{Remainder estimate. }Consider the function
\begin{equation*}
r_N^{(j)}(\XI, \E)= \chi_{C\bar{\omega}^{(j)}}(\E) r^{(j, O, O)}_N(\XI, \E)+\chi_{\omega^{(j)}}(\E) r^{(j, O, I)}_N(\XI, \E)
\end{equation*}
which by (\ref{hnOO1}) and (\ref{eqIOr}) is 
\begin{eqnarray*}
r_N^{(j)}(\XI, \E)&=&\chi_{C\bar{\omega}^{(j)}}(\E)\{h_N^{(j, O, O)} (\E, \XI)+\cD^{(j, O)}(\E) \cdot \nabla_{\XI} ((2\pi\mu_O)^{-1}\log|\XI|^{-1})\}\\
&&+\chi_{\omega^{(j)}}(\E)\{h_N^{(j, I, O)} (\E, \XI)+\cD^{(j, I)}(\E) \cdot \nabla_{\XI} ((2\pi\mu_O)^{-1}\log|\XI|^{-1})\}\;.
\end{eqnarray*}
Let $|\XI|>2$ and  write the problem for $r_N^{(j)}$ with respect to $\E$ as follows
\begin{equation*}\label{regest20}
\mu_O \Delta_{\E} r_N^{(j)}(\XI, \E)=0\;,\quad \E \in C\bar{\omega}^{(j)}\;,
\end{equation*}
\begin{equation*}\label{regest21}
\mu_{I_j} \Delta_{\E} r_N^{(j)}(\XI, \E)=0\;, \quad \E \in \omega^{(j)}\;,
\end{equation*}
\begin{eqnarray}\label{regest22}
&&\mu_{I_j}  \frac{\partial r_N^{(j)}}{\partial n_{\E}}(\XI, \E)\big|_{\E \in \partial \omega^{(j)-}}-\mu_O  \frac{\partial r_N^{(j)}}{\partial n_{\E}}(\XI, \E)\big|_{\E \in \partial \omega^{(j)+}}\nonumber\\
&&=\frac{(\mu_{I_j}-\mu_O)}{\mu_O}\Big\{\frac{\partial}{\partial n_{\XI}}((2\pi)^{-1}\log|\XI-\E|)
-\frac{\partial}{\partial n_{\XI}}((2\pi)^{-1}\log|\XI|)\Big\}\;, 
\end{eqnarray}
\begin{equation*}\label{regest23}
 r_N^{(j)}(\XI, \E)\big|_{\E \in \partial \omega^{(j)+}}=r_N^{(j)}(\XI, \E)\big|_{\E \in \partial \omega^{(j)-}}\;,
\end{equation*}
\begin{equation*}\label{regest24}
r_N^{(j)}(\XI, \E) \to \Oj, \quad \text{ as }\quad|\E|\to \infty\;.
\end{equation*}
 We have that the right-hand side of (\ref{regest22}) is self balanced and
\begin{eqnarray*}
&&\left|\frac{(\mu_{I_j}-\mu_O)}{\mu_O}\left\{\frac{\partial}{\partial n_{\XI}}((2\pi)^{-1}\log|\XI-\E|)-\frac{\partial}{\partial n_{\XI}}((2\pi)^{-1}\log|\XI|)\right\}\right|\;, \nonumber\\
&\le& \text{const } |\E| |\XI|^{-2} \le \text{const } |\XI|^{-2}\;,\label{hnre}
\end{eqnarray*}
where $\E\in \partial \omega^{(j)}$, $|\E| \le 1$ and $|\XI|>2$. Now an application of Lemma \ref{lemunb}, leads to the estimate for $r_N^{(j)}$
\begin{equation*}\label{estrn}
|r_N^{(j)}(\XI, \E)| \le \text{const } |\XI|^{-2}(|\E|+1)^{-1}\;, 
\end{equation*}
for $|\XI|>2$, $\E \in C \bar{\omega}^{(j)}\cup \omega^{(j)}$. 


\emph{iv) The asymptotics of $h_N^{(j, O, I)}$. }We once again refer to (\ref{symhN}), for the relation
\[h_N^{(j,O, I)}(\XI, \E)=h_N^{(j, I, O)}(\E, \XI)-\frac{1}{2\pi}\left\{\frac{1}{\mu_{I_j}}-\frac{1}{\mu_O}\right\}\log|\XI-\E| \quad \text{ for }\XI \in C\bar{\omega}^{(j)}, \E \in \omega^{(j)}\;.\]
For $|\XI|>2$, $ \E\in \omega^{(j)}$, this can be rewritten as
\[h_N^{(j,O, I)}(\XI, \E)=h_N^{(j, I, O)}(\E, \XI)-\frac{1}{2\pi}\left\{\frac{1}{\mu_{I_j}}-\frac{1}{\mu_O}\right\}\{\log|\XI|-\E \cdot \nabla_{\XI}(\log|\XI|)\} +\left(\frac{1}{|\XI|^{2}}\right)\;.\]
By combining this with (\ref{hnest2}) we obtain (\ref{hncoroest1v1}).
\end{proof}
The proof of Lemma \ref{lemesthn} is then completed by applying (\ref{eqhnIOj}), (\ref{hnest}) and (\ref{hncoroest1v1}).
\section{Uniform approximation of $N_\varepsilon$ in a domain with multiple inclusions}\label{transmultot}


The aim of the current section is to present the uniform asymptotic approximation for $N_\varepsilon$ in a domain with several inclusions.

\begin{theorem}\label{thmtransmult}
The  approximation of Green's function  for the 
transmission problem  of antiplane shear in $\bigcup_l \omega^{(l)}_\varepsilon\cup \Omega_\varepsilon\subset \mathbb{R}^2$, is given by
\boxedeqn{\begin{array}{c}
\small{\displaystyle{N_\varepsilon(\X, \Y)=N^{(\Omega)}(\X, \Y)+\sum_{j=1}^M \cN^{(j)}(\XI_j, \E_j)+M(2\pi\mu_O)^{-1} \log(\varepsilon^{-1}|\X-\Y|)}}\\ \\
\small{\displaystyle{+\varepsilon\sum^M_{j=1}\big\{ \cD^{(j)}(\XI_j) \cdot \nabla_\X R^{(\Omega)}(\Oj^{(j)}, \Y)+ \cD^{(j)}(\E_j) \cdot \nabla_\Y R^{(\Omega)}(\X, \Oj^{(j)})\big\}+O(\varepsilon^2)}}
\end{array}\label{transmultres1}}
uniformly for $\X, \Y \in\bigcup_l \omega^{(l)}_\varepsilon\cup \Omega_\varepsilon$.
\end{theorem}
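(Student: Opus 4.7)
The strategy is to introduce the explicit right-hand side of (\ref{transmultres1}) as an ansatz $H_\varepsilon(\X,\Y)$ and to verify that the difference $r_\varepsilon := N_\varepsilon - H_\varepsilon$ solves a homogeneous transmission problem of precisely the type treated in Lemma \ref{multlemtransest}, with Neumann data on $\partial\Omega$ of order $\varepsilon^2$ and jump-in-tractions data on each $\partial\omega_\varepsilon^{(j)}$ whose $L_\infty$-norm, after the $\varepsilon$-rescaling demanded in that lemma, is also $O(\varepsilon^2)$. Applying Lemma \ref{multlemtransest} then yields $\|r_\varepsilon\|_{L_\infty}\le C\varepsilon^2$, which is the desired uniform bound.

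The first step is to verify that $r_\varepsilon(\cdot,\Y)$ is free of singularities inside each subdomain, apart from the correct Dirac mass at $\X=\Y$ inherited from $N_\varepsilon$. When $\X,\Y$ lie in the same subdomain, each $\cN^{(j)}(\XI_j,\E_j)$ contributes a logarithmic singularity at $\XI_j=\E_j$ of the form $-(2\pi\mu)^{-1}\log|\X-\Y|+(2\pi\mu)^{-1}\log\varepsilon$ with $\mu=\mu_O$ or $\mu_{I_j}$; summing over $j$ and combining with the singularity of $N^{(\Omega)}$ and with the explicit correction $M(2\pi\mu_O)^{-1}\log(\varepsilon^{-1}|\X-\Y|)$, the spurious singularities cancel and only the physical singularity corresponding to the correct shear modulus remains. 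The dipole boundary-layer terms are smooth in $\X$ and $\Y$ and do not disturb this cancellation.

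Next I would check the Neumann condition on $\partial\Omega$. Here $N^{(\Omega)}$ already delivers the exact boundary value $-|\partial\Omega|^{-1}$. For $\X\in\partial\Omega$ one has $|\XI_j|\gtrsim\varepsilon^{-1}$, so Lemma \ref{lemesthn} lets me expand $\cN^{(j)}(\XI_j,\E_j)$ into its leading logarithmic part plus an $\varepsilon$-order dipole-like contribution plus an $O(\varepsilon^2)$ remainder. The leading logarithm combines with the explicit $M(2\pi\mu_O)^{-1}\log(\varepsilon^{-1}|\X-\Y|)$ after a one-term Taylor expansion of $\log|\X-\Y|$ about $\Oj^{(j)}$, and the resulting first-order correction is exactly what is absorbed by the boundary-layer term $\varepsilon\cD^{(j)}(\E_j)\cdot\nabla_\Y R^{(\Omega)}(\X,\Oj^{(j)})$ present in $H_\varepsilon$. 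The transmission conditions on $\partial\omega_\varepsilon^{(j)}$ are verified in the mirror-image way: $\cN^{(j)}(\XI_j,\E_j)$ satisfies them identically by construction, and for the ``outside'' terms $N^{(\Omega)}+\sum_{k\ne j}\cN^{(k)}$ I would Taylor-expand in $\X$ about $\Oj^{(j)}$; the leading-order jump in the normal derivative, produced by the mismatch $(\mu_{I_j}-\mu_O)$ applied to the gradient $\nabla_\X R^{(\Omega)}(\Oj^{(j)},\Y)$, is cancelled precisely by the dipole term $\varepsilon\cD^{(j)}(\XI_j)\cdot\nabla_\X R^{(\Omega)}(\Oj^{(j)},\Y)$ thanks to the transmission condition (\ref{trans10}) for $\cD^{(j)}$. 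All higher-order Taylor remainders and all cross-interactions between distinct inclusions are of size $\varepsilon^2$, using the far-field decay $|\cD^{(k)}(\XI_k)|=O(\varepsilon)$ on $\partial\omega_\varepsilon^{(j)}$ for $k\ne j$ guaranteed by Lemma \ref{transdip}.

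Once the data estimates are in place, I would adjust $H_\varepsilon$ by a constant (if needed) to enforce the mean-zero condition (\ref{transnotmultintro7}), verify the global flux-balance conditions required by Lemma \ref{multlemtransest}, and conclude. The main obstacle is the bookkeeping in the preceding paragraph: one has to track every $O(\varepsilon)$ contribution and identify the boundary-layer dipole term that absorbs it. In particular, the argument must be run separately for the two cases $\Y\in\Omega_\varepsilon$ and $\Y\in\omega_\varepsilon^{(k)}$, because the expansion of $\cN^{(j)}$ at infinity provided by Lemma \ref{lemesthn} has different structure according to whether $\E\in C\bar\omega^{(j)}$ or $\E\in\omega^{(j)}$, and correspondingly the matching with the regular part of the Neumann function $N^{(\Omega)}$ proceeds along different symmetry branches controlled by (\ref{symhN}).
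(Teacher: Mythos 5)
Your proposal is correct and follows essentially the same route as the paper: the authors likewise split into the cases $\Y\in\Omega_\varepsilon$ and $\Y\in\omega^{(j)}_\varepsilon$, write the transmission problem satisfied by the remainder, estimate the discrepancies in the exterior Neumann data and in the traction jumps on $\partial\omega^{(i)}_\varepsilon$ by means of Lemma \ref{lemesthn}, Lemma \ref{transdip}, Taylor expansions of $R^{(\Omega)}$ and the dipole transmission condition (\ref{trans10}), and then conclude with Lemma \ref{multlemtransest}. The only cosmetic difference is that the paper first motivates the ansatz through a constructive ``formal asymptotic algorithm'' before carrying out exactly the remainder analysis you describe.
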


\begin{proof}
\emph{Formal asymptotic algorithm. }First, we give a plausible argument concerning the representation of $N_\varepsilon$. We propose the function $N_\varepsilon$ to  be given in the form
\begin{equation}\label{tranmultfa11}
N_\varepsilon(\X, \Y)=-\frac{1}{2\pi} \left( \frac{\chi_{\Omega_\varepsilon}(\Y)}{\mu_O}+\sum^M_{l=1} \frac{\chi_{\omega_\varepsilon^{(l)}}(\Y)}{\mu_{I_l}}\right)\log|\X-\Y|-R_\varepsilon(\X, \Y)\;,
\end{equation}
where for $\Y \in \bigcup_{l=1}^M \omega_\varepsilon^{(l)} \cup\Omega_\varepsilon$,  $R_\varepsilon$ is a solution of 
\begin{equation}\label{tranmultfa12}
\mu_O \Delta_\X R_\varepsilon(\X, \Y)=0\;, \quad \X \in \Omega_\varepsilon\;, 
\end{equation} 
\begin{equation}\label{tranmultfa13}
\mu_{I_i} \Delta_\X R_\varepsilon(\X, \Y)=0\;, \quad \X \in \omega_\varepsilon^{(i)}\;, i=1,\dots, M \;, 
\end{equation} 
\begin{equation}\label{tranmultfa14}
 \mu_O \frac{\partial R_\varepsilon}{\partial n_\X}(\X, \Y)=-\frac{\mu_O}{2\pi} \left( \frac{\chi_{\Omega_\varepsilon}(\Y)}{\mu_O}+\sum^M_{l=1} \frac{\chi_{\omega_\varepsilon^{(l)}}(\Y)}{\mu_{I_l}}\right)\frac{\partial(\log|\X-\Y|)}{\partial n_{\X}}+\frac{1}{|\partial \Omega|}\;, \quad  \X \in \partial \Omega\;,
\end{equation}
and for $i=1, \dots, M$, $R_\varepsilon$ satisfies the transmission conditions 
\begin{eqnarray}\label{tranmultfa15}
&& \mu_{I_i} \frac{\partial R_\varepsilon}{\partial n_{\X}}(\X, \Y)\big|_{\X \in \partial \omega^{(i)-}_\varepsilon}-\mu_O  \frac{\partial R_\varepsilon}{\partial n_{\X}}(\X, \Y)\big|_{\X \in \partial \omega^{(i)+}_\varepsilon}\nonumber\\
&=&-\frac{\mu_{I_i}-\mu_O}{2\pi} \left( \frac{\chi_{\Omega_\varepsilon}(\Y)}{\mu_O}+\sum^M_{l=1} \frac{\chi_{\omega_\varepsilon^{(l)}}(\Y)}{\mu_{I_l}}\right)\frac{\partial(\log|\X-\Y|)}{\partial n_{\X}}\;,
\end{eqnarray}
\begin{equation}\label{tranmultfa16}  
R_\varepsilon(\X, \Y)\big|_{\X \in \partial \omega^{(i)+}_\varepsilon}= R_\varepsilon(\X, \Y)\big|_{\X \in \partial \omega^{(i)-}_\varepsilon}\;.
\end{equation}
In the above  problem  we have that $R_\varepsilon$ is subject to the orthogonality condition
\begin{equation}\label{orthReps}
\int_{\partial \Omega}R_\varepsilon(\X, \Y)dS_\X=-\frac{1}{2\pi} \left( \frac{\chi_{\Omega_\varepsilon}(\Y)}{\mu_O}+\sum^M_{l=1} \frac{\chi_{\omega_\varepsilon^{(l)}}(\Y)}{\mu_{I_l}}\right)\int_{\partial \Omega}\log|\X-\Y|\, dS_\X\;.
\end{equation}

\emph{First order approximation for $R_\varepsilon$.}
If we allow $\Y$ to be located inside one of the $M$ inclusions, then we assume $\Y \in \omega^{(m)}_\varepsilon$, where $m$ is fixed, $1 \le m \le M$. We first rewrite the boundary conditions for $R_\varepsilon$.

The boundary condition for $R_\varepsilon$ when $\X \in \partial \Omega$, $\Y \in\Omega_\varepsilon \cup \omega^{(m)}_\varepsilon$, is equivalent to 
\begin{equation*}
\begin{array}{c}
\displaystyle{ \mu_O \frac{\partial R_\varepsilon}{\partial n_\X}(\X, \Y)=\chi_{\Omega_\varepsilon}(\Y)\left\{-\frac{1}{2\pi} \frac{\partial(\log|\X-\Y|)}{\partial n_{\X}}+\frac{1}{|\partial \Omega|}\right\}}\\ \\
 \displaystyle{+\chi_{\omega^{(m)}_\varepsilon}(\Y)\left\{-\frac{\mu_O}{2\pi} \frac{\partial}{\partial n_{\X}}\left[\left\{ \frac{1}{\mu_{I_m}}-\frac{1}{\mu_O}\right\}\log|\X-\Y|+\frac{1}{\mu_O}\log|\X-\Y|\right]+\frac{1}{|\partial \Omega|}\right\}\;.}
 \end{array}
\end{equation*}
Using scaled variables we can also rewrite  the transmission conditions (\ref{tranmultfa15}) on $\partial \omega_\varepsilon^{(q)}$, $q \ne m$, as
\[\begin{array}{c}
\displaystyle{ \mu_{I_q} \frac{\partial R_\varepsilon}{\partial n_{\X}}(\X, \Y)\big|_{\X \in \partial \omega^{(q)-}_\varepsilon}-\mu_O  \frac{\partial R_\varepsilon}{\partial n_{\X}}(\X, \Y)\big|_{\X \in \partial \omega^{(q)+}_\varepsilon}}\\ \\
\displaystyle{ =-\chi_{\Omega_\varepsilon}(\Y)\frac{\mu_{I_q}-\mu_O}{2\pi\mu_O} \frac{\partial(\log|\XI_q-\E_q|)}{\partial n_{\X}}}\\ \\
 \displaystyle{+\chi_{\omega^{(m)}_\varepsilon}(\Y)\Big\{-\frac{\mu_{I_q}-\mu_O}{2\pi} \frac{\partial}{\partial n_{\X}}\left[\left(\frac{1}{\mu_{I_m}}-\frac{1}{\mu_O}\right)\log|\X-\Y|+\frac{1}{\mu_O}\log|\XI_q-\E_q|\right]\Big\}}\;,
\end{array}\]
where $\Y \in \Omega_\varepsilon \cup \omega^{(m)}_\varepsilon$.
Finally,
the transmission condition (\ref{tranmultfa15}) on the $m^{th}$ inclusion  becomes
\begin{eqnarray*}
&&\mu_{I_m} \frac{\partial R_\varepsilon}{\partial n_{\X}}(\X, \Y)\big|_{\X \in \partial \omega^{(m)-}_\varepsilon}-\mu_O  \frac{\partial R_\varepsilon}{\partial n_{\X}}(\X, \Y)\big|_{\X \in \partial \omega^{(m)+}_\varepsilon}\\
&=&-\frac{\mu_{I_m}-\mu_O}{2\pi} \left(\frac{\chi_{\Omega_\varepsilon}(\Y)}{\mu_O}+\frac{\chi_{\omega_\varepsilon^{(m)}}(\Y)}{\mu_{I_m}}\right)\frac{\partial(\log|\XI_m-\E_m|)}{\partial n_{\X}}\;,
\end{eqnarray*}
for $\Y \in \Omega_\varepsilon \cup \omega^{(m)}_\varepsilon$.
We therefore have  the representation
\begin{equation}\begin{array}{c}\label{initReps}
\displaystyle{R_\varepsilon(\X, \Y)=\chi_{\Omega_\varepsilon}(\Y)\Big\{ R^{(\Omega)}(\X, \Y)+\sum_{k=1}^M h^{(k, O)}_N(\XI_k, \E_k)\Big\}}\\ 
\\ \displaystyle{ +\chi_{\omega^{(m)}_\varepsilon}(\Y)\Big\{ R^{(\Omega)}(\X, \Y)+h^{(m, I)}_N(\XI_m, \E_m)}\\\\ \displaystyle{+\sum_{\substack{k \ne m\\ 1 \le k \le M}}h^{(k, O)}_N(\XI_k, \E_k)\Big\}+Z_\varepsilon(\X, \Y)}\;.
\end{array}\end{equation}

\emph{The construction  of boundary layer terms.}
The function $Z_\varepsilon$, given in (\ref{initReps}), is harmonic for $\X \in \bigcup_{l=1}^M \omega_\varepsilon^{(l)} \cup\Omega_\varepsilon$, $\Y \in \Omega_\varepsilon \cup \omega^{(m)}_\varepsilon$ and is continuous across the boundaries of the small inclusions. The normal derivative of $Z_\varepsilon$ on the exterior boundary is
\[\begin{array}{c}
\displaystyle{\mu_O \frac{\partial Z_\varepsilon}{\partial n_\X}(\X, \Y)=-\mu_O\chi_{\Omega_\varepsilon}(\Y)\sum_{k=1}^M \frac{\partial h^{(k, O)}_N}{\partial n_\X}(\XI_k, \E_k)}\\ \\
\displaystyle{-\mu_O\chi_{\omega^{(m)}_\varepsilon}(\Y)\frac{\partial}{\partial n_{\X}}\Big[\frac{1}{2\pi}\left( \frac{1}{\mu_{I_m}}-\frac{1}{\mu_O}\right)\log|\X-\Y|}\\ \\
\displaystyle{+h_N^{(m, I)}(\XI_m, \E_m)+\sum_{\substack{k \ne m\\ 1 \le k \le M}}h_N^{(k, O)}(\XI_k, \E_k)\Big]\;,}
\end{array}\]
 for $\X \in \partial \Omega$,  and the traction transmission conditions for $Z_\varepsilon$ on the small inclusions are 
 \begin{eqnarray*}
&&\mu_{I_m} \frac{\partial Z_\varepsilon}{\partial n_{\X}}(\X, \Y)\big|_{\X \in \partial \omega^{(m)-}_\varepsilon}-\mu_O  \frac{\partial Z_\varepsilon}{\partial n_{\X}}(\X, \Y)\big|_{\X \in \partial \omega^{(m)+}_\varepsilon}
\\ &=&-(\mu_{I_m}-\mu_O)(\chi_{\Omega_\varepsilon}(\Y)+\chi_{\omega_\varepsilon^{(m)}}(\Y)) \frac{\partial}{\partial n_{\X}}\Big\{R^{(\Omega)}(\X, \Y)+ \sum_{\substack{k \ne m\\ 1 \le k \le M}} h_N^{(k, O)}(\XI_k, \E_k)\Big\}\;,
\end{eqnarray*}
 and 
 \[ \begin{array}{c}
 \displaystyle{\mu_{I_q} \frac{\partial Z_\varepsilon}{\partial n_{\X}}(\X, \Y)\big|_{\X \in \partial \omega^{(q)-}_\varepsilon}-\mu_O  \frac{\partial Z_\varepsilon}{\partial n_{\X}}(\X, \Y)\big|_{\X \in \partial \omega^{(q)+}_\varepsilon}}
 \\ \\
 \\\displaystyle{=-(\mu_{I_q}-\mu_O)\Big\{\chi_{\Omega_\varepsilon}(\Y)\frac{\partial}{\partial n_{\X}}\Big[R^{(\Omega)}(\X, \Y)
 + \sum_{\substack{k \ne q\\ 1 \le k \le M}} h_N^{(k, O)}(\XI_k, \E_k)\Big]}\\
\displaystyle{+\chi_{\omega^{(m)}_\varepsilon}(\Y) \frac{\partial}{\partial n_{\X}}\Big[\frac{1}{2\pi}\left(\frac{1}{\mu_{I_m}}-\frac{1}{\mu_O}\right)\log|\X-\Y|+R^{(\Omega)}(\X,\Y)}\\ \\ \displaystyle{+h^{(m, I)}_N(\XI_m, \E_m)+\sum_{\substack{k\ne m\\ k \ne q\\ 1 \le k \le M}}h^{(k, O)}_N(\XI_k, \E_k) \Big]\Big\}\;,}
 \end{array}\]
 where $1 \le q \le M$, $q \ne m$ and $\Y \in \Omega_\varepsilon \cup  \omega_\varepsilon^{(m)}$.
By expanding the first order derivatives of the  logarithmic term near $\chi_{\omega^{(m)}_\varepsilon}(\Y)$, about $\Y=\Oj^{(m)}$ up to $O(\varepsilon^2)$, in the exterior boundary condition and the traction condition on $\partial \omega_\varepsilon^{(q)}$, $q \ne m$, we obtain
\[\begin{array}{c}
\displaystyle{\mu_O \frac{\partial Z_\varepsilon}{\partial n_\X}(\X, \Y)=-\mu_O\chi_{\Omega_\varepsilon}(\Y)\sum_{k=1}^M \frac{\partial h^{(k, O)}_N}{\partial n_\X}(\XI_k, \E_k)}
\\\\ \displaystyle{-\mu_O\chi_{\omega^{(m)}_\varepsilon}(\Y)\Big\{\frac{\partial}{\partial n_{\X}}\Big[\frac{1}{2\pi}\left( \frac{1}{\mu_{I_m}}-\frac{1}{\mu_O}\right)\Big(\log|\X-\Oj^{(m)}|}\\ \\ \displaystyle{-\frac{(\Y-\Oj^{(m)})\cdot (\X-\Oj^{(m)})}{|\X-\Oj^{(m)}|^{2}}
\Big)+h_N^{(m, I)}(\XI_m, \E_m)+\sum_{\substack{k \ne m\\ 1 \le k \le M}}h_N^{(k, O)}(\XI_k, \E_k)\Big]+O(\varepsilon^2)\Big\}\;,} 
\end{array}\]
 for  $\X \in \partial \Omega$, and
\[ \begin{array}{c}
 \displaystyle{\mu_{I_q} \frac{\partial Z_\varepsilon}{\partial n_{\X}}(\X, \Y)\big|_{\X \in \partial \omega^{(q)-}_\varepsilon}-\mu_O  \frac{\partial Z_\varepsilon}{\partial n_{\X}}(\X, \Y)\big|_{\X \in \partial \omega^{(q)+}_\varepsilon}}\\  \\
 \displaystyle{=-(\mu_{I_q}-\mu_O)\Big[\chi_{\Omega_\varepsilon}(\Y)\frac{\partial}{\partial n_{\X}}\Big\{R^{(\Omega)}(\X,\Y)+ \sum_{\substack{k \ne q\\ 1 \le k \le M}} h_N^{(k, O)}(\XI_k, \E_k)\Big\}}\\ \\  \displaystyle{+\chi_{\omega^{(m)}_\varepsilon}(\Y)\Big\{ \frac{\partial}{\partial n_{\X}}\Big[\frac{1}{2\pi}\left(\frac{1}{\mu_{I_m}}-\frac{1}{\mu_O}\right)\Big(\log|\X-\Oj^{(m)}| -\frac{(\Y-\Oj^{(m)})\cdot (\X-\Oj^{(m)})}{|\X-\Oj^{(m)}|^{2}}\Big)}\\ \\
 \displaystyle{+R^{(\Omega)}(\X,\Y)+h^{(m, I)}_N(\XI_m, \E_m)+\sum_{\substack{k\ne m\\ k \ne q\\ 1 \le k \le M}}h^{(k, O)}_N(\XI_k, \E_k) \Big]+O(\varepsilon^2)\Big\}\Big]\;,}
 \end{array}\]
 where $\Y \in \Omega_\varepsilon \cup \omega_\varepsilon^{(m)}$.
 Next we apply Lemma \ref{lemesthn}, in order to rewrite the above boundary condition on $\partial \Omega$ for $Z_\varepsilon$ as
 \[\begin{array}{c}\displaystyle{\mu_O \frac{\partial Z_\varepsilon}{\partial n_\X}(\X, \Y)=-\mu_O\chi_{\Omega_\varepsilon}(\Y)\sum_{k=1}^M \frac{\partial }{\partial n_\X}\Big\{\frac{\varepsilon \cD^{(k, O)}(\E_k)}{2\pi\mu_O} \cdot \frac{\X-\Oj^{(k)}}{|\X-\Oj^{(k)}|^2}
 }\\ 
 \displaystyle{+O\Big( \sum_{k=1}^M \varepsilon^3(|\Y-\Oj^{(k)}|+\varepsilon)^{-1}\Big)\Big\}+\chi_{\omega^{(m)}_\varepsilon}(\Y)\Big\{ -\mu_O\frac{\partial}{\partial n_{\X}}\Big[\frac{\varepsilon\cD^{(m,I)}(\E_m) }{2\pi\mu_O}\cdot \frac{\X-\Oj^{(m)}}{|\X-\Oj^{(m)}|^2}}\\ \\
 \displaystyle{+(\mu_{I_m}^{-1}-\mu_O^{-1})\log\varepsilon+\sum_{\substack{k \ne m\\ 1 \le k \le M}}\frac{\varepsilon \cD^{(k, O)}(\E_k)}{2\pi\mu_O} \cdot \frac{\X-\Oj^{(k)}}{|\X-\Oj^{(k)}|^2}\Big]
 +O(\varepsilon^2)\Big\}\;,}
\end{array}\]
for $\X \in\partial \Omega$, $ \Y \in \Omega_\varepsilon \cup \omega_\varepsilon^{(m)}$. The same lemma  in combination with the Taylor expansion about $\X=\Oj^{(m)}$ of the first order derivatives for the function $R^{(\Omega)}$ leads to 
\[\begin{array}{c}
\displaystyle{\mu_{I_m} \frac{\partial Z_\varepsilon}{\partial n_{\X}}(\X, \Y)\big|_{\X \in \partial \omega^{(m)-}_\varepsilon}-\mu_O  \frac{\partial Z_\varepsilon}{\partial n_{\X}}(\X, \Y)\big|_{\X \in \partial \omega^{(m)+}_\varepsilon}}
\\ \\
\displaystyle{=-(\chi_{\Omega_\varepsilon}(\Y)+\chi_{\omega^{(m)}_\varepsilon}(\Y))(\mu_{I_m}-\mu_O) \N^{(m)} \cdot \nabla_\X R^{(\Omega)}(\Oj^{(m)}, \Y)+O(\varepsilon)\;,\quad \Y \in \Omega_\varepsilon \cup \omega_\varepsilon^{(m)}\;,}
\end{array}\]
where  $\N^{(m)}$ is the unit outward normal to the inclusion $\omega^{(m)}_\varepsilon$.
Similarly, on the $q^{th}$ inclusion, $1 \le q \le M$, $q \ne m$, we derive
\[\begin{array}{c}\displaystyle{\mu_{I_q} \frac{\partial Z_\varepsilon}{\partial n_{\X}}(\X, \Y)\big|_{\X \in \partial \omega^{(q)-}_\varepsilon}-\mu_O  \frac{\partial Z_\varepsilon}{\partial n_{\X}}(\X, \Y)\big|_{\X \in \partial \omega^{(q)+}_\varepsilon}
}\\ \\
\displaystyle{=-(\chi_{\Omega_\varepsilon}(\Y)+\chi_{\omega^{(m)}_\varepsilon}(\Y))(\mu_{I_q}-\mu_O)\N^{(q)} \cdot \nabla_\X R^{(\Omega)}(\Oj^{(q)}, \Y)+O(\varepsilon)}\;, \quad \Y \in \Omega_\varepsilon \cup \omega_\varepsilon^{(m)}\;.\end{array}\]
Then, we approximate $Z_\varepsilon$ by
\begin{equation}\begin{array}{c}
\label{multZeps}
\displaystyle{Z_\varepsilon(\X, \Y)=-(\chi_{\Omega_\varepsilon}(\Y)+\chi_{\omega_\varepsilon^{(m)}}(\Y))  \sum_{\substack{k\ne m \\ 1 \le k \le M}} \varepsilon\{ \cD^{(k, O)}(\E_k) \cdot \nabla_\Y R^{(\Omega)}(\X, \Oj^{(k)})}\\  \\\displaystyle{+\cD^{(k)}(\XI_k)\cdot \nabla_\X R^{(\Omega)}(\Oj^{(k)}, \Y)\}
-\varepsilon\{\chi_{\Omega_\varepsilon}(\Y) \cD^{(m, O)}(\E_m) \cdot \nabla_\Y R^{(\Omega)}(\X, \Oj^{(m)})}\\ \\\displaystyle{+(\chi_{\Omega_\varepsilon}(\Y)+\chi_{\omega^{(m)}_\varepsilon}(\Y))\cD^{(m)}(\XI_m) \cdot \nabla_\X R^{(\Omega)}(\Oj^{(m)}, \Y)\}}
\\ \\\displaystyle{-\chi_{\omega_\varepsilon^{(m)}}(\Y)\{\varepsilon \cD^{(m, I)}(\E_m) \cdot \nabla_\Y R^{(\Omega)}(\X, \Oj^{(m)})+(\mu_{I_m}^{-1}-\mu_O^{-1})\log\varepsilon\}-r^{(m)}_{\varepsilon}(\X, \Y)}\;.\end{array}
\end{equation}

\vspace{0.1in}\emph{Combined formula for $N_\varepsilon$.}
The substitution of (\ref{initReps}), (\ref{multZeps}) into (\ref{tranmultfa11}), for $\Y \in \Omega_\varepsilon \cup \omega^{(m)}_\varepsilon$,  yields
\begin{equation}\begin{array}{c}
\displaystyle{N_\varepsilon(\X, \Y)=\chi_{\Omega_\varepsilon}(\Y)\left\{-(2\pi\mu_O)^{-1} \log|\X-\Y|-R^{(\Omega)}(\X, \Y)-\sum_{k=1}^M h_N^{(k,O)}(\XI_k, \E_k)\right.}\\ \\
\displaystyle{\left.+\varepsilon\sum^M_{k=1}\big\{ \cD^{(k)}(\XI_k) \cdot \nabla_\X R^{(\Omega)}(\Oj^{(k)}, \Y)+ \cD^{(k,O)}(\E_k) \cdot \nabla_\Y R^{(\Omega)}(\X, \Oj^{(k)})\big\}\right\}}
\\ \\
\displaystyle{+\chi_{\omega_\varepsilon^{(m)}}(\Y)\Bigg\{-(2\pi\mu_{I_m})^{-1}\log|\X-\Y|-R^{(\Omega)}(\X,\Y)-h_N^{(m,I)}(\XI_m, \E_m)}\\
\\
\displaystyle{-\sum_{\substack{k \ne m\\ 1 \le k \le M}}h^{(k, O)}_N(\XI_k, \E_k)+(\mu_{I_m}^{-1}-\mu_O^{-1})(2\pi)^{-1}\log \varepsilon
+\varepsilon\, \cD^{(m)}(\XI_m)\cdot \nabla_\X R^{(\Omega)}(\Oj^{(m)}, \Y)}\\
\\
\displaystyle{+\varepsilon \cD^{(m,I)}(\E_m) \cdot \nabla_\Y R^{(\Omega)}(\X, \Oj^{(m)})+\varepsilon \sum_{\substack{k\ne m \\ 1 \le k \le M}}\cD^{(k)}(\XI_k)\cdot \nabla_\X R^{(\Omega)}(\Oj^{(k)}, \Y)}\\ \\
\displaystyle{+\cD^{(k, O)}(\E_k) \cdot \nabla_\Y R^{(\Omega)}(\X, \Oj^{(k)})\Bigg\}+r^{(m)}_\varepsilon(\X, \Y)\;.}\end{array}\label{finNe}
\end{equation}
Then, by the definitions of $N^{(\Omega)}$, $\cN^{(m, O)}$ and $\cN^{(m, I)}$, this is equivalent to
\begin{equation*}\begin{array}{c}
\displaystyle{N_\varepsilon(\X, \Y)=\chi_{\Omega_\varepsilon}(\Y)\left\{N^{(\Omega)}(\X, \Y)+\sum_{k=1}^M \cN^{(k,O)}(\XI_k, \E_k)+M(2\pi\mu_O)^{-1} \log(\varepsilon^{-1}|\X-\Y|)\right.}\\ \\
\displaystyle{\left.+\varepsilon\sum^M_{k=1}\big\{ \cD^{(k)}(\XI_k) \cdot \nabla_\X R^{(\Omega)}(\Oj^{(k)}, \Y)+ \cD^{(k,O)}(\E_k) \cdot \nabla_\Y R^{(\Omega)}(\X, \Oj^{(k)})\big\}\right\}}
\\ \\
\displaystyle{+\sum^M_{j=1}\chi_{\omega_\varepsilon^{(j)}}(\Y)\Bigg\{N^{(\Omega)}(\X,\Y)+\cN^{(j,I)}(\XI_j, \E_j)+\sum_{\substack{k \ne j\\ 1 \le k \le M}}\cN^{(k, O)}(\XI_k, \E_k)}\\ \displaystyle{+M(2\pi\mu_O)^{-1} \log(\varepsilon^{-1}|\X-\Y|)+\varepsilon\, \sum_{k=1}^M\cD^{(k)}(\XI_k)\cdot \nabla_\X R^{(\Omega)}(\Oj^{(k)}, \Y)}\\ \\ \displaystyle{+\varepsilon \cD^{(j,I)}(\E_j) \cdot \nabla_\Y R^{(\Omega)}(\X, \Oj^{(j)})+\varepsilon \sum_{\substack{k\ne j \\ 1 \le k \le M}}\cD^{(k, O)}(\E_k) \cdot \nabla_\Y R^{(\Omega)}(\X, \Oj^{(k)})\Bigg\}+r_\varepsilon(\X, \Y)\;.}\end{array}\label{finNe2}
\end{equation*}

\vspace{0.1in}\emph{Remainder estimates for the approximation of $N_\varepsilon$.}
We represent $r_\varepsilon$ as
\begin{equation}\label{reprepsmult}
r_\varepsilon(\X, \Y)=\chi_{\Omega_\varepsilon}(\Y)\mathfrak{M}_\varepsilon(\X, \Y)+\sum_{j=1}^M \chi_{\omega_\varepsilon^{(p)}}(\Y) \mathfrak{h}^{(j)}_\varepsilon(\X, \Y)\;,
\end{equation}
where $\mathfrak{M}_\varepsilon$ and $\mathfrak{h}^{(j)}_\varepsilon$, $j=1, \dots, M$ are defined below.
In what follows, we estimate $\mathfrak{M}_\varepsilon$, and $\mathfrak{h}^{(j)}_\varepsilon$, $j=1, \dots, M$, in order to estimate $r_\varepsilon$.

\vspace{0.1in}\emph{Remainder estimate for the function $\mathfrak{M}_\varepsilon(\X,\Y)$.}
First, let $\Y \in \Omega_\varepsilon$.
According to (\ref{finNe}), the function $\mathfrak{M}_\varepsilon$ is a solution of 
\begin{equation*}\label{remfmeps1}
\mu_O \Delta_\X \mathfrak{M}_\varepsilon(\X, \Y)=0,\, \quad \X \in \Omega_\varepsilon\;, 
\end{equation*}
\begin{equation*}\label{remfmeps2}
\mu_{I_i} \Delta_\X \mathfrak{M}_\varepsilon(\X, \Y)=0,\, \quad \X \in \omega^{(i)}_\varepsilon, i=1, \dots, M\;, 
\end{equation*}
and on the exterior contour is subject to 
\begin{equation}\label{remfmeps3}
\begin{array}{c}\displaystyle{\mu_O\frac{\partial \mathfrak{M}_\varepsilon}{\partial n_\X}(\X, \Y)=-\varepsilon \mu_O \sum^M_{j=1} \frac{\partial \cD^{(j,O)}}{\partial n_\X}(\XI_j) \cdot \nabla_\X R^{(\Omega)}(\Oj^{(j)}, \Y)}\\ \\ \displaystyle{+\mu_O\sum_{j=1}^M\left\{\frac{\partial h_N^{(j,O)}}{\partial n_\X}(\XI_j, \E_j)-\varepsilon \cD^{(j,O)}(\E_j)\cdot \frac{\partial}{\partial n_\X} \nabla_\Y R^{(\Omega)}(\X, \Oj^{(j)})\right\}, \, \quad \X \in\partial \Omega\;,} 
\end{array}
\end{equation}
with transmission conditions on $\partial \omega_\varepsilon^{(i)}$ of the form
\begin{equation}\begin{array}{c}\label{remfmeps4}
\displaystyle{\mu_{I_i} \frac{\partial \mathfrak{M}_\varepsilon}{\partial n_\X}(\X, \Y) \big|_{\X \in \partial \omega_\varepsilon^{(i)-}}-\mu_O\frac{\partial \mathfrak{M}_\varepsilon}{\partial n_\X}(\X, \Y) \big|_{\X \in \partial \omega_\varepsilon^{(i)+}}}\\ \\
\displaystyle{=(\mu_{I_i}-\mu_O) \frac{\partial R^{(\Omega)}}{\partial n_\X}(\X, \Y)+(\mu_{I_i}-\mu_O) \sum_{\substack{j\ne i\\ 1 \le j \le M}}\frac{\partial h^{(j,O)}_N}{\partial n_\X}(\XI_j, \E_j)}\\ 
\displaystyle{-\varepsilon\sum_{j=1}^M\left\{\mu_{I_i} \frac{\partial \cD^{(j)}}{\partial n_\X}(\XI_j)\big|_{\X\in \partial \omega^{(i)-}_\varepsilon}-\mu_O\frac{\partial \cD^{(j)}}{\partial n_\X}(\XI_j)\big|_{\X\in \partial \omega^{(i)+}_\varepsilon}\right\} \cdot \nabla_\X R^{(\Omega)}(\Oj^{(j)}, \Y)\;,}\\
\displaystyle{-(\mu_{I_i}-\mu_O)\varepsilon \sum^{M}_{j=1} \cD^{(j,O)}(\E_j)\cdot \nabla_\Y\frac{\partial R^{(\Omega)}}{\partial n_\X}(\X, \Oj^{(j)})\;,}
\end{array}
\end{equation}
\begin{equation*}\label{remfmeps5}
 \mathfrak{M}_\varepsilon(\X, \Y)\big|_{\X \in \partial \omega_\varepsilon^{(i)-}}=\mathfrak{M}_\varepsilon(\X, \Y)\big|_{\X \in \partial \omega_\varepsilon^{(i)+}}\;,
\end{equation*}
for $1\le i \le M$, where $\Y \in \Omega_\varepsilon$ and 
\begin{equation}\label{orthM}
 \int_{\partial \Omega}\mathfrak{M}_\varepsilon(\X, \Y)dS_\X=0\;.
\end{equation}
Before estimating the boundary conditions we observe that
\begin{eqnarray}\label{trem61e}
&&\int_{\partial \Omega}  \mu_O \frac{\partial \mathfrak{M}_\varepsilon}{\partial n_\X}(\X, \Y)\, dS_\X=0\;, \\ 
&& \int_{\partial \omega_\varepsilon^{(i)}} \left(\mu_{I_i}  \frac{\partial \mathfrak{M}_\varepsilon}{\partial n_{\X}}(\X, \Y)\big|_{\X \in \partial \omega^{(i)-}_\varepsilon}-\mu_O  \frac{\partial \mathfrak{M}_\varepsilon}{\partial n_{\X}}(\X, \Y)\big|_{\X \in \partial \omega^{(i)+}_\varepsilon}\right)\, dS_\X=0\;,\label{trem61f}
\end{eqnarray}
for $i=1, \dots, M$.

\vspace{0.1in}\emph{Estimation of the right-hand side of $(\ref{remfmeps3})$ on $\partial\Omega$.}
Since $\X \in \partial \Omega$, $|\X-\Oj^{(j)}| \ge 1$ for $j=1, \dots, M$, and by Lemma \ref{transdip}, the estimate 
\begin{equation*}
\varepsilon\mu_O\Big|   \sum^M_{j=1} \frac{\partial \cD^{(j,O)}}{\partial n_\X}(\XI_j) \cdot \nabla_\X R^{(\Omega)}(\Oj^{(j)}, \Y)\Big|\le \text{const }\varepsilon^2\;, 
\end{equation*}
 holds for  $\X \in \partial \Omega, \Y \in \Omega_\varepsilon$. Using Lemma \ref{lemesthn}a), one has for $\X \in \partial \Omega$,
\begin{eqnarray*}\label{remfheps7}
&&\mu_O\Big|\frac{\partial h_N^{(j,O)}}{\partial n_\X}(\XI_j, \E_j)-\varepsilon \cD^{(j,O)}(\E_j)\cdot \frac{\partial}{\partial n_\X} \nabla_\Y R^{(\Omega)}(\X, \Oj^{(j)})\Big| \nonumber\\
&=&\mu_O\Big|\frac{\partial h_N^{(j,O)}}{\partial n_\X}(\XI_j, \E_j)-\frac{\varepsilon \cD^{(j,O)}(\E_j)}{2\pi \mu_O}\cdot \frac{\partial}{\partial n_\X} \Big(\frac{\X-\Oj^{(j)}}{|\X-\Oj^{(j)}|^2}\Big)\Big|\nonumber\\ 
&\le &\text{const }\varepsilon^3|\X-\Oj^{(j)}|^{-2}(|\Y-\Oj^{(j)}|+\varepsilon)^{-1}\le \text{const } \varepsilon^3(|\Y-\Oj^{(j)}|+\varepsilon)^{-1}
\end{eqnarray*}
where we have also made use of the boundary condition $(\ref{trans2})$ for $R^{(\Omega)}$.
The previous two inequalities then give 
\begin{equation}\label{remfmeps6}
\mu_O\Big| \frac{\partial \mathfrak{M}_\varepsilon}{\partial n_\X}(\X, \Y)\Big|\le \text{const }\varepsilon^2\, \quad \X \in \partial \Omega, \Y \in \Omega_\varepsilon\;.
\end{equation}

\vspace{0.1in}\emph{Estimate for the  right-hand side of $(\ref{remfmeps4})$ on $\partial \omega^{(i)}_\varepsilon$, $i=1, \dots, M$.}
The regular part $R^{(\Omega)}$ of $N^{(\Omega)}$ is smooth for $\X , \Y \in \Omega$, and we can expand the first order derivatives of this function about the centre of the small inclusion $\omega^{(i)}_\varepsilon$ to obtain
\begin{eqnarray}\label{remfmeps7}
&&\small{\Big| (\mu_{I_i}-\mu_O) \frac{\partial R^{(\Omega)}}{\partial n_\X} (\X, \Y)
- \varepsilon \Big\{\mu_{I_i} \frac{\partial \cD^{(i,I)}}{\partial n_\X}(\XI_i)\big|_{\X\in \partial \omega^{(i)-}_\varepsilon}}\nonumber\\&&\small{-\mu_O\frac{\partial \cD^{(i,O)}}{\partial n_\X}(\XI_i)\big|_{\X\in \partial \omega^{(i)+}_\varepsilon}\Big\}\cdot \nabla_\X R^{(\Omega)}(\Oj^{(i)}, \Y)\Big| \nonumber}\\
&=&\small{\Big|(\mu_{I_i}-\mu_O) \N^{(i)}\cdot (\nabla_\X R^{(\Omega)}(\X, \Y)-\nabla_\X R^{(\Omega)}(\Oj^{(i)}, \Y))\Big|
\nonumber}\\&\le& \small{\text{const } \varepsilon\, \quad \X \in \partial \omega^{(i)}_\varepsilon, \Y \in \Omega_\varepsilon\;.}
\end{eqnarray}
With the use of Lemma \ref{transdip}, we have
\begin{eqnarray}\label{remfmeps8}
&&\small{\nonumber\varepsilon\Big|\sum_{\substack{j\ne i\\ 1 \le j \le M}}\left\{\mu_{I_i} \frac{\partial \cD^{(j,O)}}{\partial n_\X}(\XI_j)\big|_{\X\in \partial \omega^{(i)-}_\varepsilon}-\mu_O\frac{\partial \cD^{(j,O)}}{\partial n_\X}(\XI_j)\big|_{\X\in \partial \omega^{(i)+}_\varepsilon}\right\} \cdot \nabla_\X R^{(\Omega)}(\Oj^{(j)}, \Y)\Big|}\\\nonumber \\
&&\small{\le  \text{const } \varepsilon^2, \quad \text{ for }\X \in \partial \omega^{(i)}_\varepsilon, \Y \in \Omega_\varepsilon}\;,\nonumber\\ 
\end{eqnarray}
and the same lemma also gives
\begin{eqnarray}\label{essumdE}
&& \Big|\varepsilon \sum^{M}_{j=1} \cD^{(j,O)}(\E_j) \cdot \frac{\partial \nabla_\Y R^{(\Omega)}}{\partial n_\X}(\X, \Oj^{(j)})\Big|\nonumber\\
&\le& \text{const }\sum_{j=1}^M\varepsilon^2 (|\Y-\Oj^{(j)}|+\varepsilon)^{-1}\;,\quad \text{ for } \X \in \partial \omega^{(i)}_\varepsilon, \Y \in \Omega_\varepsilon\;.
\end{eqnarray}
Due to Lemma \ref{lemesthn}a), 
\begin{equation*}\label{estdh}
\Big| \sum_{\substack{j\ne i\\ 1 \le j \le M}}\frac{\partial h^{(j,O)}_N}{\partial n_\X}(\XI, \E)\Big|\le \text{const} \sum_{\substack{j \ne i\\ 1 \le j \le M}}\varepsilon^2(|\Y-\Oj^{(j)}|+\varepsilon)^{-1},\quad \text{ for }\X \in \partial \omega^{(i)}_\varepsilon, \Y \in \Omega_\varepsilon\;.
\end{equation*}
Therefore, this estimate  with (\ref{remfmeps7}), (\ref{remfmeps8}) and (\ref{essumdE}) lead to 
\begin{eqnarray*}\label{remfmeps9}
&&\Big|\mu_{I_i}\frac{\partial \mathfrak{M}_\varepsilon}{\partial n_\X}(\X, \Y) \big|_{\X \in \partial \omega_\varepsilon^{(i)-}}-\mu_O\frac{\partial \mathfrak{M}_\varepsilon}{\partial n_\X}(\X, \Y) \big|_{\X \in \partial \omega_\varepsilon^{(i)+}}\Big| \nonumber\\
&\le&\text{const }\varepsilon, \quad \X \in \partial\omega^{(i)}_\varepsilon, i=1, \dots, M, \Y \in \Omega_\varepsilon\;.
\end{eqnarray*}
Then, by Lemma \ref{multlemtransest} and the preceding estimate with (\ref{orthM})--(\ref{remfmeps6}), we obtain
\begin{equation}\label{remfmeps10}
|\mathfrak{M}_\varepsilon(\X, \Y)| \le \text{const } \varepsilon^2,\quad  \X \in \bigcup^M_{l=1} \omega_\varepsilon^{(l)} \cup \Omega_\varepsilon, \Y \in \Omega_\varepsilon\;.
\end{equation}

\vspace{0.1in}\emph{Remainder estimate for $\mathfrak{h}^{(j)}_\varepsilon$, $j=1, \dots, M$.}
Let $\Y \in \omega_\varepsilon^{(j)}$, where $j$ is fixed, $1 \le j \le M$. Then
by (\ref{transmultres1}), the remainder term $\mathfrak{h}^{(j)}_\varepsilon$  solves
\begin{equation*}\label{piresp1}
\mu_O \Delta_\X \mathfrak{h}^{(j)}_\varepsilon(\X, \Y)=0,\, \quad \X \in \Omega_\varepsilon\;, 
\end{equation*}
\begin{equation*}\label{piresp2}
\mu_{I_i} \Delta_\X \mathfrak{h}^{(j)}_\varepsilon(\X, \Y)=0,\, \quad \X \in \omega^{(i)}_\varepsilon, i=1, \dots, M\;, 
\end{equation*}
and satisfies on the exterior boundary
\begin{equation}
\small{\begin{array}{c}\label{piresp3}
\displaystyle{\mu_O\frac{\partial \mathfrak{h}^{(j)}_\varepsilon}{\partial n_\X}(\X, \Y)=-\mu_O\Bigg\{\frac{\partial \cN^{(j,I)}}{\partial n_\X}(\XI_j, \E_j)+\frac{\partial}{\partial n_\X}((2\pi\mu_O)^{-1}\log(\varepsilon^{-1}|\X-\Y|))}\\ \\ \displaystyle{-\sum_{\substack{k \ne m \\ 1 \le k \le M}}\frac{\partial  h_N^{(k, O)}}{\partial n_\X}(\XI_k, \E_k)+\varepsilon \frac{\partial\cD^{(j,O)}}{\partial n_\X}(\XI_j)\cdot\nabla_\X R^{(\Omega)}(\Oj^{(j)}, \Y)}\\ \\
\displaystyle{+\varepsilon \cD^{(j,I)}(\E_j) \cdot \frac{\partial}{\partial n_\X} \nabla_\Y R^{(\Omega)}(\X, \Oj^{(j)})+\varepsilon \sum_{\substack{k \ne j\\ 1 \le k \le M}}\Big[ \frac{\partial \cD^{(k,O)}}{\partial n_\X}(\XI_k) \cdot \nabla_\X R^{(\Omega)}(\Oj^{(k)}, \Y)}\\ \\
\displaystyle{+\cD^{(k, O)}(\E_k) \cdot \frac{\partial}{\partial n_\X}\nabla_\Y R^{(\Omega)}(\X, \Oj^{(j)})\Big]\Bigg\}, \, \quad \X \in\partial \Omega\;.} 
\end{array}}
\end{equation}
Also, on the boundary of the $i^{th}$ inclusion, $\mathfrak{h}^{(j)}_\varepsilon$ is subject to  
\begin{equation}\small{\begin{array}{c}\label{piresp4}
\displaystyle{\mu_{I_i} \frac{\partial \mathfrak{h}^{(j)}_\varepsilon}{\partial n_\X}(\X, \Y) \big|_{\X \in \partial \omega_\varepsilon^{(i)-}}-\mu_O\frac{\partial \mathfrak{h}^{(j)}_\varepsilon}{\partial n_\X}(\X, \Y) \big|_{\X \in \partial \omega_\varepsilon^{(i)+}}}\\ \\ \displaystyle{
=(\mu_{I_i}-\mu_O) \Bigg[\frac{\partial R^{(\Omega)}}{\partial n_\X} (\X, \Y)-\varepsilon  \cD^{(j,I)}(\E_j) \cdot \frac{\partial }{\partial n_\X}\nabla_\Y R^{(\Omega)}(\X, \Oj^{(j)})}\\ \\ \displaystyle{-\varepsilon \sum_{\substack{k \ne j \\ 1 \le k \le M}}\cD^{(k, O)}(\E_k)\cdot \frac{\partial}{\partial n_\X}\nabla_\Y R^{(\Omega)}(\X, \Oj^{(k)})
+\sum_{\substack{k \ne j\\ 1 \le k \le M}}\Big((1-\delta_{ik}) \frac{\partial h_N^{(k,O)}}{\partial n_\X}(\XI_k, \E_k)} \\ \\\displaystyle{-\delta_{ik} (2\pi\mu_O)^{-1} \frac{\partial}{\partial n_\X}(\log(\varepsilon^{-1}|\X-\Y|))\Big)
-(1-\delta_{ij}) \frac{\partial \cN^{(j,I)}}{\partial n_\X}(\XI_j, \E_j)\Bigg]}\\ \\ 
\displaystyle{-\varepsilon \left\{\mu_{I_i} \frac{\partial \cD^{(j)}}{\partial n_\X}(\XI_j)\big|_{\X \in \partial \omega^{(i)-}_\varepsilon}-\mu_O \frac{\partial \cD^{(j)}}{\partial n_\X}(\XI_j) \big|_{\X \in \partial \omega^{(i)+}_\varepsilon}\right\}\cdot \nabla R^{(\Omega)}(\Oj^{(j)}, \Y)}\\
\\ 
\displaystyle{-\varepsilon \sum_{\substack{k \ne j\\ 1 \le k \le M}}\left\{\mu_{I_i} \frac{\partial \cD^{(k)}}{\partial n_\X}(\XI_k)\big|_{\X \in \partial \omega^{(i)-}_\varepsilon}-\mu_O \frac{\partial \cD^{(k)}}{\partial n_\X}(\XI_k) \big|_{\X \in \partial \omega^{(i)+}_\varepsilon}\right\}\cdot \nabla_\X R^{(\Omega)}(\Oj^{(k)}, \Y)\;,} 
\end{array}}
\end{equation}
\begin{equation*}\label{piresp5}
 \mathfrak{h}^{(j)}_\varepsilon(\X, \Y)\big|_{\X \in \partial \omega_\varepsilon^{(i)-}}=\mathfrak{h}^{(j)}_\varepsilon(\X, \Y)\big|_{\X \in \partial \omega_\varepsilon^{(i)+}}\;,
\end{equation*}
for $i=1, \dots, M$, where $\Y \in \omega^{(j)}_\varepsilon$
and
\begin{equation}\label{orthhj}
 \int_{\partial \Omega}\mathfrak{h}^{(j)}_\varepsilon(\X, \Y)dS_\X=0\;.
 \end{equation}
From this problem we can see
\begin{eqnarray}\label{trem61i}
&&\int_{\partial \Omega}  \mu_O \frac{\partial \mathfrak{h}^{(j)}_\varepsilon}{\partial n_\X}(\X, \Y)\, dS_\X=0\;, \\ 
&& \int_{\partial \omega_\varepsilon^{(j)}} \left(\mu_{I_j}  \frac{\partial \mathfrak{h}^{(j)}_\varepsilon}{\partial n_{\X}}(\X, \Y)\big|_{\X \in \partial \omega^{(j)-}_\varepsilon}-\mu_O  \frac{\partial \mathfrak{h}^{(j)}_\varepsilon}{\partial n_{\X}}(\X, \Y)\big|_{\X \in \partial \omega^{(j)+}_\varepsilon}\right)\, dS_\X=0\;,\label{trem61j}
\end{eqnarray}
for $i=1, \dots, M$. 

Before  estimating the discrepancies in the boundary conditions, we note for $\Y \in \omega^{(j)}_\varepsilon$,  by Lemma \ref{transdip} 
\begin{equation}\label{innerest1}
\Big|\sum_{\substack{k \ne j \\ 1 \le k \le M}}\cD^{(k, O)}(\E_k)\cdot \frac{\partial}{\partial n_\X}\nabla_\Y R^{(\Omega)}(\X, \Oj^{(k)})\Big| \le \text{const }\varepsilon,
\quad\text{ holds for }\X \in \partial \Omega_\varepsilon\;,
\end{equation}
whereas the same lemma in conjunction with Lemma \ref{lemesthn}a), leads to 
\begin{equation}\label{innerest2}
\Big|\frac{\partial h_N^{(k,O)}}{\partial n_\X}(\XI_k, \E_k)\Big| \le \text{const }\varepsilon^2\;,\quad\text{ for } \X \in 
\bigcup_{\substack{i \ne k\\ 1 \le i \le M}} \partial \omega^{(i)}_\varepsilon\cup \partial \Omega\;.
\end{equation}

\vspace{0.1in}\emph{Estimate of the right-hand side of $(\ref{piresp3})$ on $\partial \Omega$.}
The definition of $\cN^{(j, I)}$ in Section \ref{modfield} and  the boundary condition (\ref{trans2}) for $R^{(\Omega)}$, give us
\begin{eqnarray*}
&&\small{\mu_O\Big| \frac{\partial}{\partial n_\X}\Big\{\cN^{(j,I)}(\XI_j, \E_j)+(2\pi\mu_O)^{-1}\log(\varepsilon^{-1}|\X-\Y|)+\varepsilon \cD^{(j,I)}(\E_j) \cdot  \nabla_\Y R^{(\Omega)}(\X, \Oj^{(j)})\Big\}\Big|\nonumber}\\ 
&=&\small{\mu_O\Big|\frac{\partial}{\partial n_\X}\{-(2\pi)^{-1}(\mu_{I_j}^{-1}-\mu_O^{-1})\log|\X-\Y|-h_N^{(j, I)}(\XI_j, \E_j)\nonumber}\\&&\small{+\varepsilon(2\pi \mu_O)^{-1} \cD^{(j,I)}(\E_j) \cdot (\X-\Oj^{(j)})|\X-\Oj^{(j)}|^{-2}\Big|\quad \text{ for } \X \in \partial \Omega, \Y \in \omega^{(j)}_\varepsilon\;.}
\end{eqnarray*}
Using the asymptotics of $h_N^{(j,I)}$ at infinity contained in Lemma \ref{lemesthn}b), we obtain
\begin{eqnarray}\label{piresp6}
&&\small{\mu_O\Big| \frac{\partial}{\partial n_\X}\left\{\cN^{(j,I)}(\XI_j, \E_j)+(2\pi\mu_O)^{-1}\log(\varepsilon^{-1}|\X-\Y|)+\varepsilon \cD^{(j,I)}(\E_j) \cdot  \nabla_\Y R^{(\Omega)}(\X, \Oj^{(j)})\right\}\Big|}\nonumber\\
&\le& \small{\text{const }\varepsilon^2\;, \quad \text{ for } \X \in \partial \Omega, \Y \in \omega^{(j)}_\varepsilon\;.}\nonumber\\
\end{eqnarray}
Then from Lemma \ref{transdip} we have
\begin{eqnarray*}\label{piresp7}
&&\varepsilon \mu_O \Big| \frac{\partial}{\partial n_\X} \Big\{\cD^{(j,O)}(\XI_j)\cdot\nabla_\X R^{(\Omega)}(\Oj^{(j)}, \Y)+ \sum_{\substack{k \ne j\\ 1 \le k \le M}}  \cD^{(k,O)}(\XI_k) \cdot \nabla_\X N^{(\Omega)}(\Oj^{(k)}, \Y) \Big\}\Big|\nonumber \\ 
& \le &\text{const } \varepsilon^2, \quad \X \in \partial \Omega, \Y \in \omega^{(j)}_\varepsilon\;.
\end{eqnarray*}
Therefore, this  along with (\ref{innerest1}), (\ref{innerest2}) and  (\ref{piresp6})  yield
\begin{equation}\label{piresp8}
\mu_O\Big|\frac{\partial \mathfrak{h}^{(j)}_\varepsilon}{\partial n_\X}(\X, \Y)\Big| \le \text{const } \varepsilon^2\;,\quad \X \in \partial \Omega, \Y \in \omega^{(j)}_\varepsilon\;.
\end{equation}

\vspace{0.1in}\emph{Estimate for the right-hand side of $(\ref{piresp4})$ on $\partial \omega^{(i)}_\varepsilon$, $i=1, \dots, M$.} 
Consider first the situation when $i=j$. Since $R^{(\Omega)}$ is smooth for $\X, \Y \in \Omega$, we can take the  Taylor expansion of its first order derivatives about $\X=\Oj^{(j)}$ to obtain
\begin{eqnarray}\label{piresp9}
&&\Big| (\mu_{I_j}-\mu_O) \frac{\partial R^{(\Omega)}}{\partial n_\X} (\X, \Y) 
\nonumber-\varepsilon \Big\{\mu_{I_j} \frac{\partial \cD^{(j)}}{\partial n_\X}(\XI_j)\big|_{\X \in \partial \omega^{(j)-}_\varepsilon}\\ &&-\mu_O \frac{\partial \cD^{(j)}}{\partial n_\X}(\XI_j) \big|_{\X \in \partial \omega^{(j)+}}\Big\}\cdot \nabla R^{(\Omega)}(\Oj^{(j)}, \Y)\Big|\nonumber\\
&=&|\mu_{I_j}-\mu_O| | \N^{(j)} \cdot \nabla_\X R^{(\Omega)}(\X, \Y)-\N^{(j)}\cdot \nabla_\X R^{(\Omega)}(\Oj^{(j)}, \Y)\Big|\nonumber\\ 
& \le & \text{const } \varepsilon, \quad \X \in \partial \omega^{(j)}_\varepsilon, \Y \in \omega^{(j)}_\varepsilon\;,
\end{eqnarray}
where the boundary condition (\ref{trans10}) for the dipole fields of the inclusion $\partial \omega^{(i)}_\varepsilon$ was also used. Then, owing to Lemma \ref{transdip} 
for $i=j$ one has
\begin{eqnarray*}\label{piresp10}
&&\small{\Big| \varepsilon (\mu_{I_j}-\mu_O) \cD^{(j,I)}(\E_j) \cdot \frac{\partial }{\partial n_\X}\nabla_\Y R^{(\Omega)}(\X, \Oj^{(j)})\nonumber} \\ 
&&\small{-\varepsilon \sum_{\substack{k \ne j\\ 1 \le k \le M}}\left\{\mu_{I_j} \frac{\partial \cD^{(k,O)}}{\partial n_\X}(\XI_k)\big|_{\X \in \partial \omega^{(j)-}_\varepsilon}-\mu_O \frac{\partial \cD^{(k,O)}}{\partial n_\X}(\XI_k) \big|_{\X \in \partial \omega^{(j)+}}\right\}\cdot \nabla_\X R^{(\Omega)}(\Oj^{(k)}, \Y)\Big|}
\nonumber \\ 
&\le& \small{\text{const } \varepsilon, \quad \X \in \partial \omega^{(j)}_\varepsilon, \Y \in \omega^{(j)}_\varepsilon\;.}
\end{eqnarray*}
This and (\ref{innerest1}), (\ref{innerest2}), (\ref{piresp9}) lead to the estimate
\begin{equation}\label{piresp10a}
\Big| \mu_{I_j} \frac{\partial \mathfrak{h}^{(j)}_\varepsilon}{\partial n_\X}(\X, \Y) \big|_{\X \in \partial \omega_\varepsilon^{(j)-}}-\mu_O\frac{\partial \mathfrak{h}^{(j)}_\varepsilon}{\partial n_\X}(\X, \Y) \big|_{\X \in \partial \omega_\varepsilon^{(j)+}}\Big| 
\le \text{const }\varepsilon, \quad\Y \in \omega^{(j)}_\varepsilon\;.
\end{equation}
It remains to consider the case $i \ne j$. In this situation, we require Lemma \ref{lemesthn}b) to obtain
\begin{eqnarray}\label{piresp11}
&&
\small{\Big|  \frac{\partial \cN^{(j,I)}}{\partial n_\X}(\XI_j, \E_j) 
+(2\pi\mu_O)^{-1}\frac{\partial}{\partial n_\X}(\log(\varepsilon^{-1}|\X-\Y|))}\nonumber\\
&& \small{+\varepsilon \cD^{(j,I)}(\E_j) \cdot \frac{\partial }{\partial n_\X}\nabla_\Y R^{(\Omega)}(\X, \Oj^{(j)})\Big|}\nonumber\\ 
&=&\small{\varepsilon 
\Big|\cD^{(j,I)}(\E_j) \cdot \Big\{\frac{\partial }{\partial n_\X}\nabla_\X ((2\pi\mu_O)^{-1}\log|\X-\Oj^{(j)}|^{-1})- \frac{\partial }{\partial n_\X}\nabla_\Y R^{(\Omega)}(\X, \Oj^{(j)})\Big\}\Big|}\nonumber\\ 
&\le&\small{\text{const } \varepsilon\;, \quad \X \in \partial \omega^{(i)}_\varepsilon, \Y \in \omega^{(j)}_\varepsilon, i \ne j\;.}\nonumber\\
\end{eqnarray}
The boundary conditions for the dipole fields (\ref{trans10}) give
\begin{eqnarray}\label{piresp12}
&&\Big|(\mu_{I_i}-\mu_O) \frac{\partial R^{(\Omega)}}{\partial n_\X} (\X, \Y)\nonumber\\
&&-\varepsilon\left\{\mu_{I_i} \frac{\partial \cD^{(i)}}{\partial n_\X}(\XI_i)\big|_{\X \in \partial \omega^{(i)-}_\varepsilon}-\mu_O \frac{\partial \cD^{(i)}}{\partial n_\X}(\XI_i) \big|_{\X \in \partial \omega^{(i)+}_\varepsilon}\right\}\cdot \nabla_\X R^{(\Omega)}(\Oj^{(i)}, \Y)\Big|\nonumber\\ 
&\le&|\mu_{I_i}-\mu_O| | \N^{(i)}\cdot  \nabla_\X R^{(\Omega)}(\X, \Y)-\nabla_\X R^{(\Omega)}(\Oj^{(i)}, \Y)|\nonumber\\
&\le&  \text{const } \varepsilon, \quad \X \in \partial \omega^{(i)}_\varepsilon,\Y \in \omega^{(j)}_\varepsilon, i \ne j\;. \nonumber \\
\end{eqnarray}
Then,
Lemma \ref{transdip}, allows one to deduce
\begin{equation*}\label{piresp13}\begin{array}{c}
\displaystyle{\varepsilon\Big|\left\{\mu_{I_i} \frac{\partial \cD^{(j,O)}}{\partial n_\X}(\XI_j)\big|_{\X \in \partial \omega^{(i)-}_\varepsilon}-\mu_O \frac{\partial \cD^{(j,O)}}{\partial n_\X}(\XI_j) \big|_{\X \in \partial \omega^{(i)+}_\varepsilon}\right\}\cdot \nabla_\Y R^{(\Omega)}(\Oj^{(j)}, \Y)}\\  \\
\displaystyle{+\sum_{\substack{k \ne j\\ k \ne i\\ 1 \le k \le M}}\left\{\mu_{I_i} \frac{\partial \cD^{(k,O)}}{\partial n_\X}(\XI_k)\big|_{\X \in \partial \omega^{(i)-}_\varepsilon}-\mu_O \frac{\partial \cD^{(k,O)}}{\partial n_\X}(\XI_k) \big|_{\X \in \partial \omega^{(i)+}_\varepsilon}\right\}\cdot \nabla_\X R^{(\Omega)}(\Oj^{(k)}, \Y) \Big|} \\ \\
\displaystyle{\le \text{const } \varepsilon^2, \quad \X \in \partial \omega^{(i)}_\varepsilon, \Y \in \omega^{(j)}_\varepsilon, i \ne j\;.}
\end{array}
\end{equation*}
Thus, from the preceding inequality and (\ref{innerest1}), (\ref{innerest2}), (\ref{piresp11}), (\ref{piresp12}),  we have 
\begin{eqnarray*}\label{piresp14}
\Big| \mu_{I_i} \frac{\partial \mathfrak{h}^{(j)}_\varepsilon}{\partial n_\X}(\X, \Y) \big|_{\X \in \partial \omega_\varepsilon^{(i)-}}-\mu_O\frac{\partial \mathfrak{h}^{(j)}_\varepsilon}{\partial n_\X}(\X, \Y) \big|_{\X \in \partial \omega_\varepsilon^{(i)+}}\Big|\le \text{const }\varepsilon, \quad \Y \in \omega^{(j)}_\varepsilon, i \ne j\;.
\end{eqnarray*}
We can conclude from this estimate, the conditions (\ref{orthhj})--(\ref{trem61j}), inequalities (\ref{piresp8}),   (\ref{piresp10a}),  and Lemma \ref{multlemtransest} that
\begin{equation*}\label{piresp15}
\Big| \mathfrak{h}^{(j)}_\varepsilon(\X, \Y) \Big| \le \text{const } \varepsilon, \quad \X \in \bigcup^M_{l=1}\omega^{(l)}_\varepsilon \cup \Omega_\varepsilon, \Y \in \omega^{(j)}_\varepsilon, j=1, \dots, M\;.
\end{equation*}
Finally, the above estimate for $\mathfrak{h}^{(j)}_\varepsilon$, $j=1, \dots, M$ and (\ref{remfmeps10}), combined with  (\ref{reprepsmult})  complete the proof.
\end{proof}

\section{Numerical simulations}\label{antiplnumsim}

In the current section, we implement the asymptotic formulae derived in Section \ref{transmultot} for Green's function $N_\varepsilon$ in numerical simulations. 
The numerical computations are carried out for the regular part $R_\varepsilon$ of Green's function in $\bigcup_l \omega^{(l)}_\varepsilon \cup\Omega_\varepsilon$ for the transmission problem. In other words, let this regular part be given by the formula
\begin{equation}\label{nsant1}
R_\varepsilon(\X, \Y)=-\frac{1}{2\pi} \left( \frac{\chi_{\Omega_\varepsilon}(\Y)}{\mu_O}+\sum^M_{l=1} \frac{\chi_{\omega_\varepsilon^{(l)}}(\Y)}{\mu_{I_l}}\right)\log|\X-\Y|-N_\varepsilon(\X, \Y)\;.
\end{equation}

Then in accordance with the boundary value problem for $N_\varepsilon$ given in Section \ref{transnotmult}, the function $R_\varepsilon$, which we choose to consider for our numerical schemes, when $\Y \in \Omega_\varepsilon$  is a solution of the problem (\ref{tranmultfa12})--(\ref{orthReps}).

\subsection{Asymptotic formulae for $R_\varepsilon$}

From formula (\ref{nsant1}) and Theorem \ref{thmtransmult}, we can immediately state the asymptotic formulae for the regular part $R_\varepsilon$ that will be used in the examples below. When $\Y \in \Omega_\varepsilon$ we have  from (\ref{transmultres1}), that $R_\varepsilon$ admits the asymptotic representation
\begin{equation}\begin{array}{c}
\displaystyle{R_\varepsilon(\X, \Y)=R^{(\Omega)}(\X, \Y)+\sum_{j=1}^M h_N^{(j,O)}(\XI_j, \E_j)}\\ \\
\displaystyle{-\varepsilon\sum^M_{j=1}\big\{ \cD^{(j)}(\XI_j) \cdot \nabla_\X R^{(\Omega)}(\Oj^{(j)}, \Y)+ \cD^{(j,O)}(\E_j) \cdot \nabla_\Y R^{(\Omega)}(\X, \Oj^{(j)})\big\}+O(\varepsilon^2)}\;.
\end{array}\label{regptransmultpo}
\end{equation}
When $\Y \in \omega_\varepsilon^{(j)}$ where $j$ is fixed, $j=1,\dots, M$,  $R_\varepsilon$ has the form
\begin{equation}\label{regptransmultpi}
\begin{array}{c}
\displaystyle{R_\varepsilon(\X, \Y)= R^{(\Omega)}(\X, \Y)+h_N^{(j,I)}(\XI_j, \E_j)+\sum_{\substack{k \ne j\\ 1 \le k \le M }}h_N^{(k, O)}(\XI_k, \E_k)}\\ \\
\displaystyle{+(2\pi)^{-1}(\mu_O^{-1}-\mu_{I_j}^{-1})\log \varepsilon-\varepsilon\cD^{(j,I)}(\E_j) \cdot \nabla_\Y R^{(\Omega)}(\X, \Oj^{(j)})}\\\\
\displaystyle{-\varepsilon\sum_{\substack{k \ne j\\ 1 \le k \le M}}\cD^{(k,O)}(\E_k) \cdot \nabla_\Y R^{(\Omega)}(\X, \Oj^{(k)})-\varepsilon\sum^M_{p=1}\cD^{(p)}(\XI_p) \cdot \nabla_\X R^{(\Omega)}(\Oj^{(p)}, \Y) +O(\varepsilon^2)\;.}
\end{array}
\end{equation}

Before proceeding with  examples where these formulae will be implemented, we  first discuss the numerical settings.

\subsection{Numerical settings: Description of the geometry and physical parameters}

Let $\Omega$ be a disk of radius 150m, with centre at the origin, and occupied by a material with shear modulus $\mu_O=5.6\times 10^{10}\text{Nm}^{-2}$, which is that of Cast Iron. We set the number of inclusions  $M=6$, and assume that the $\omega_\varepsilon^{(j)}$, $j=1, \dots, 6$, are circular. We summarize the data corresponding to the inclusions in Table 1.
\begin{table}[htbp]\centering
\label{tabincdata}
\scalebox{0.8}{\begin{tabular}{|c|c|c|c|c|}
 \hline\hline Inclusion & Centre & Radius (m)& Shear Modulus ($\times 10^{10}\text{Nm}^{-2}$) & Material\\
\hline\hline $\omega_\varepsilon^{(1)}$   &(-90m, 40m)&     27      & 2.6316&        Aluminum \\
\hline $\omega_\varepsilon^{(2)}$   &(-50m,-50m)&     24      & 4.0741&        Copper \\
\hline $\omega_\varepsilon^{(3)}$   &(-30m, 10m)&     9      & 7.7519&       Iron \\
\hline $\omega_\varepsilon^{(4)}$   &(20m, 70m)&     19.5      & 7.5188&    High Strength Alloy Steel    \\
\hline $\omega_\varepsilon^{(5)}$   &(50m, 0m)&     22.5      & 8.0078&     Steel AISI 4348 \\
\hline $\omega_\varepsilon^{(6)}$   &(70m, -80m)&     15     & 9.0496&       Nimonic Alloy 90 \\
\hline
\end{tabular}}
\caption{Data for the inclusions $\omega_\varepsilon^{(j)}$, $j=1, \dots, 6$.}
\end{table}

\subsection{Model solutions used in the numerical simulations}

\subsection*{Neumann's function for the disk $\Omega$}

In our examples, we need the Neumann function $N^{(\Omega)}$ for a disk of radius $R$ ($R=150$m for our demonstrations), which is given by 
\begin{equation*}\label{nsant7}
N^{(\Omega)}(\X, \Y )=-\frac{1}{2\pi\mu_O}\log|\X-\Y|-\frac{1}{2\pi\mu_O} \log\left(\left|\X-\frac{R^2}{|\Y|^2}\Y\right||\Y|\right)\;,
\end{equation*} 
and the regular part $R^{(\Omega)}$ of this function is defined by the formula
\begin{equation*}\label{nsant8}
R^{(\Omega)}(\X, \Y)=(2\pi\mu_O)^{-1}\log |\X-\Y|-N^{(\Omega)}(\X, \Y)\;.
\end{equation*}

\subsection*{The regular part of Green's function for the transmission problem in a plane with an inclusion at the origin}

Now, we state the form of the regular part $h_N^{(j)}$ of the Green's function for the transmission problem in the infinite plane with a circular inclusion at the origin of radius $a_j$. The solution is constructed using the involution procedure which is discussed in \cite{HonHer}. The representation of this function, is dependent on the position of the point $\E_j$. When $\E_j \in C \bar{\omega}^{(j)}$
\begin{equation*}\label{nsant9}
h^{(j,O)}(\XI_j, \E_j)=-\frac{1}{2\pi \mu_O} \frac{\mu_{I_j}-\mu_O}{\mu_{I_j}+\mu_O}\log\left(\frac{1}{|\XI_j|} \left| \XI_j-\frac{a_j^2}{|\E_j|^2}\E_j\right|\right)\quad \text{ for }\quad \XI_j \in C\bar{\omega}^{(j)}\;, 
\end{equation*}
and
\begin{equation*}\label{nsant10}
h^{(j,O)}(\XI_j, \E_j)=\frac{1}{2\pi \mu_O} \frac{\mu_{I_j}-\mu_O}{\mu_{I_j}+\mu_O}(\log|\XI_j-\E_j|^{-1}+\log|\E_j|)\;, \quad \XI_j \in \omega^{(j)}\;.
\end{equation*}

For the case $\E_j \in \omega^{(j)}$, $\XI_j \in C\bar{\omega}^{(j)}$
\begin{equation*}\label{nsant11}
h^{(j,I)}(\XI_j, \E_j)=\frac{1}{2\pi } \frac{\mu_{I_j}-\mu_O}{\mu_{I_j}+\mu_O}\left(\frac{1}{\mu_{I_j}}\log|\XI_j-\E_j|+\frac{1}{\mu_O} \log|\XI_j|\right)\;,
\end{equation*}
and for $\E_j, \XI_j \in \omega^{(j)}$ we have
\begin{equation*}\label{nsant12}
h^{(j,I)}(\XI_j, \E_j)=\frac{1}{2\pi} \frac{\mu_{I_j}-\mu_O}{\mu_{I_j}+\mu_O} \left(\frac{1}{\mu_{I_j}}\log\left(\frac{|\E_j|}{a_j}\left|\XI_j-\frac{a_j^2}{|\E_j|^2}\E_j\right|\right)+\frac{1}{\mu_O}\log a_j \right)\;.
\end{equation*}

\subsection*{The dipole fields for the circular inclusion in the infinite plane}

Here, we give the vector function $\cD^{(j)}$ whose components are the dipole fields for the circular inclusion of radius $a_j$ in the infinite plane and
\[ \cD^{(j)}(\XI_j)=\chi_{C\bar{\omega}^{(j)}}(\XI_j)\cD^{(j, O)}(\XI_j)+\chi_{\omega^{(j)}}(\XI_j)\cD^{(j, I)}(\XI_j)\;.\] 

We have
\begin{equation*}\label{nsant13}
\cD^{(j,O)}(\XI_j)=\frac{\mu_{I_j}-\mu_O}{\mu_{I_j}+\mu_O}\frac{a_j^2\, \XI_j}{|\XI_j|^2}\;, \quad \text{ for } \XI_j \in C\bar{\omega}^{(j)}\;,
\end{equation*}
and
\begin{equation*}\label{nsant14}
\cD^{(j,I)}(\XI_j)=\frac{\mu_{I_j}-\mu_O}{\mu_{I_j}+\mu_O} \XI_j\;, \quad \text{ for } \XI_j \in \omega^{(j)}\;.
\end{equation*}

\subsection{Example 1}
\subsubsection*{The case of the force applied outside the inclusions}

For our first example, we look at the case when $\Y \in \Omega_\varepsilon$. We therefore base our computations on the asymptotic formula (\ref{regptransmultpo}) when comparing with those of COMSOL.
The coordinates of the point force are given as $\Y =(-10\text{m}, -80\text{m})$. We plot the modulus of the gradient of the regular part $R_\varepsilon$ in Figure \ref{inclfigpo}a) according to the analytical formulae (\ref{regptransmultpo}). Figure \ref{inclfigpo}b) is the same quantity computed using the method of finite elements in COMSOL. Both figures are very similar, the maximum absolute error between these computations is $7.666 \times 10^{-16}$ occurring on the exterior boundary.

\begin{figure}[htbp]
\begin{minipage}[b]{0.495\linewidth}
\centering
\includegraphics[width=\textwidth]{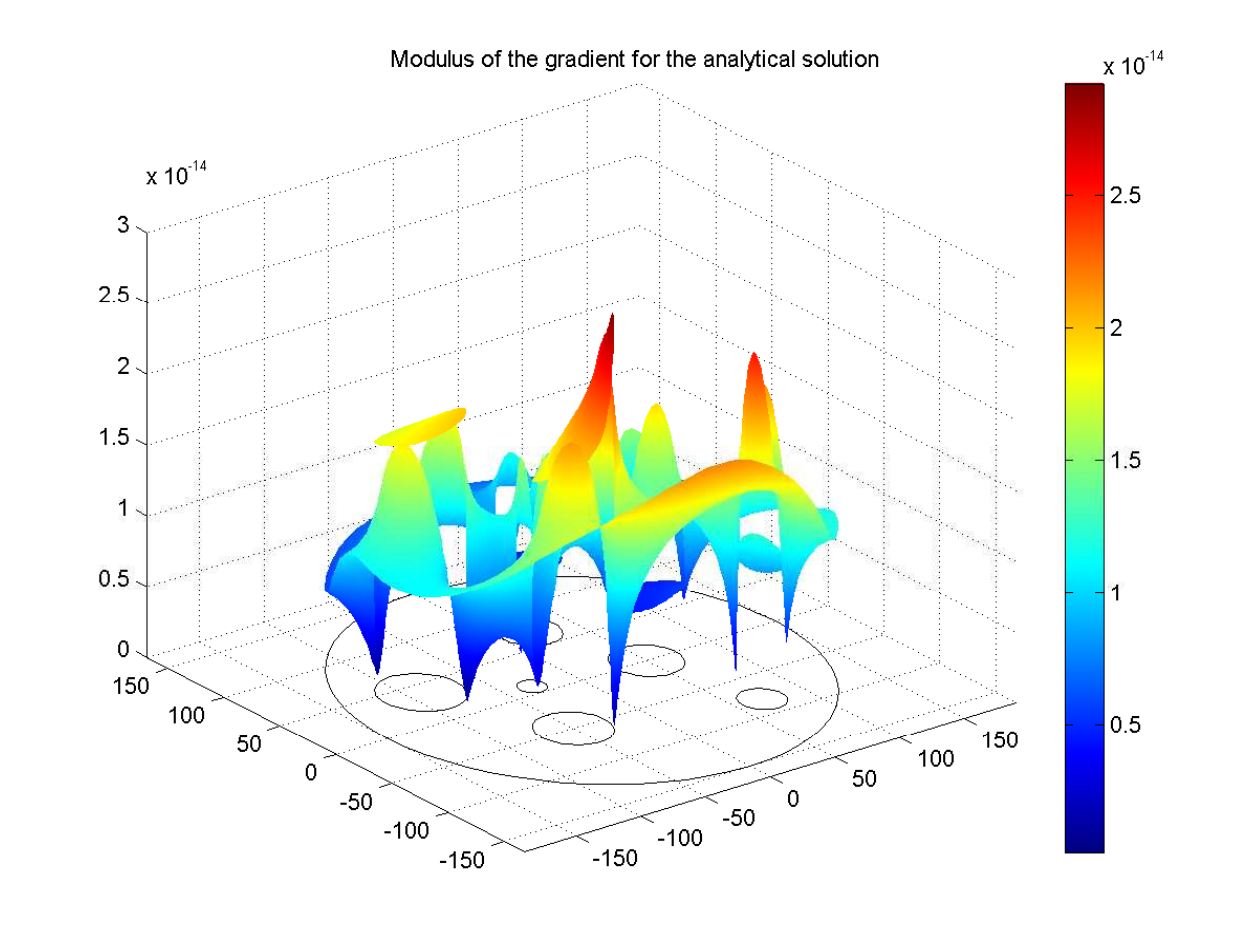}

a)
\end{minipage}
\begin{minipage}[b]{0.495\linewidth}
\centering
\includegraphics[width=\textwidth]{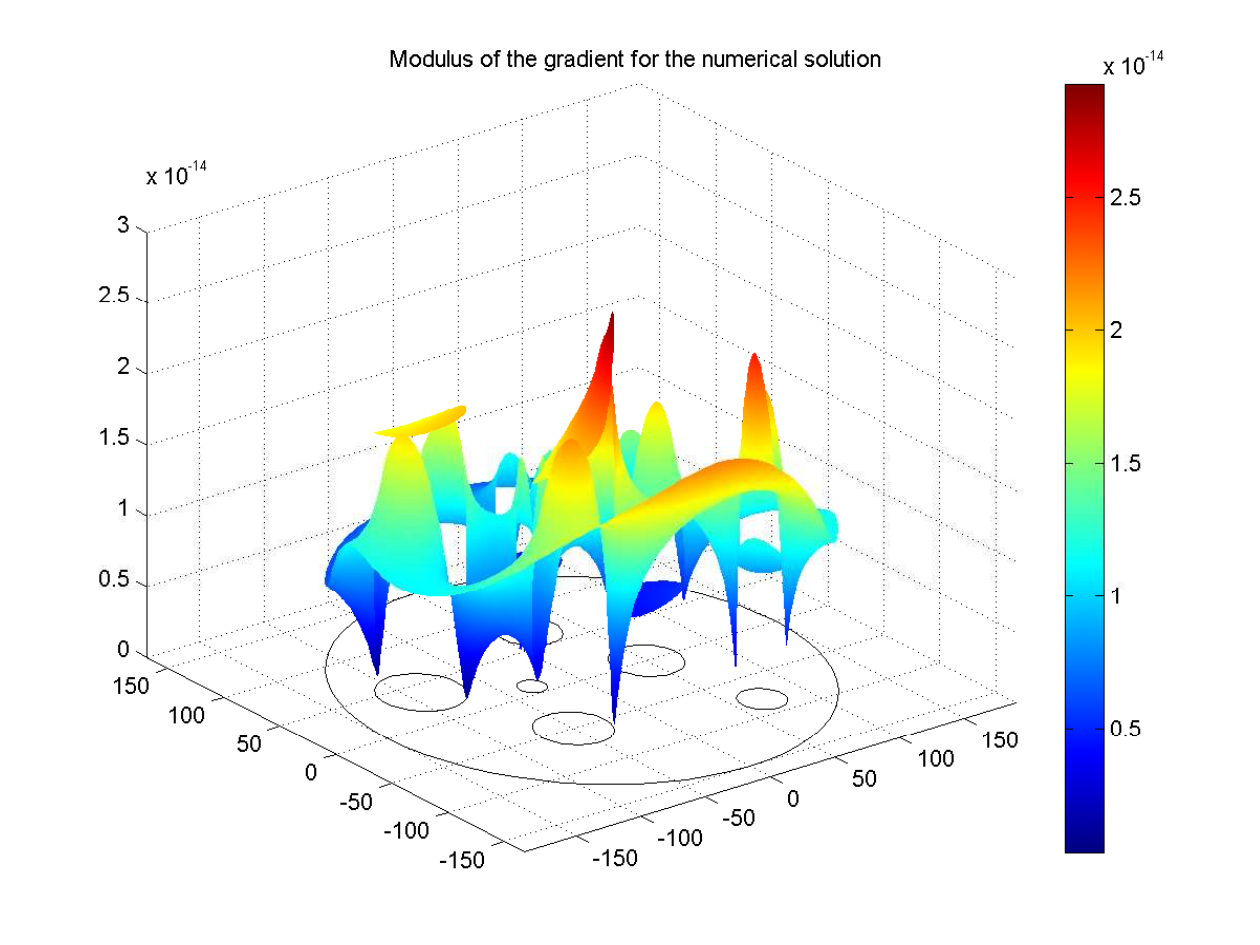}

b)
\end{minipage}
\caption{a) Computations based on  asymptotic formula (\ref{regptransmultpo}) and  b) Numerical solution  for the absolute value of the gradient of the regular part of Green's function for the transmission problem. Here $\Y=(-10\text{m}, -80\text{m})$ and the mesh contains 44784 elements. The plots are practically indistinguishable.}
\label{inclfigpo}
\end{figure}




\subsection{Example 2}
\subsubsection*{The case of the force positioned inside an inclusion}

In the second example, we aim to compare the computations produced by formula (\ref{regptransmultpi}) with those generated by COMSOL. Now the point force is assumed to be situated at $\Y =(60\text{m}, 0\text{m})$, in the inclusion $\omega_\varepsilon^{(5)}$, containing the Steel AISI 4340. Figure \ref{inclfigpi}a), gives the surface plot for the modulus of the gradient of the regular part, provided by formula (\ref{regptransmultpi}). The numerical solution given in COMSOL is shown in Figure \ref{inclfigpi}b). The maximum absolute error here is $7.98 \times 10^{-16}$, which occurs on the boundary of the inclusion $\omega^{(1)}_\varepsilon$.  We conclude that the asymptotic formulae and numerical computations are in a good agreement with each other.

\begin{figure}[htbp]
\begin{minipage}[b]{0.495\linewidth}
\centering
\includegraphics[width=\textwidth]{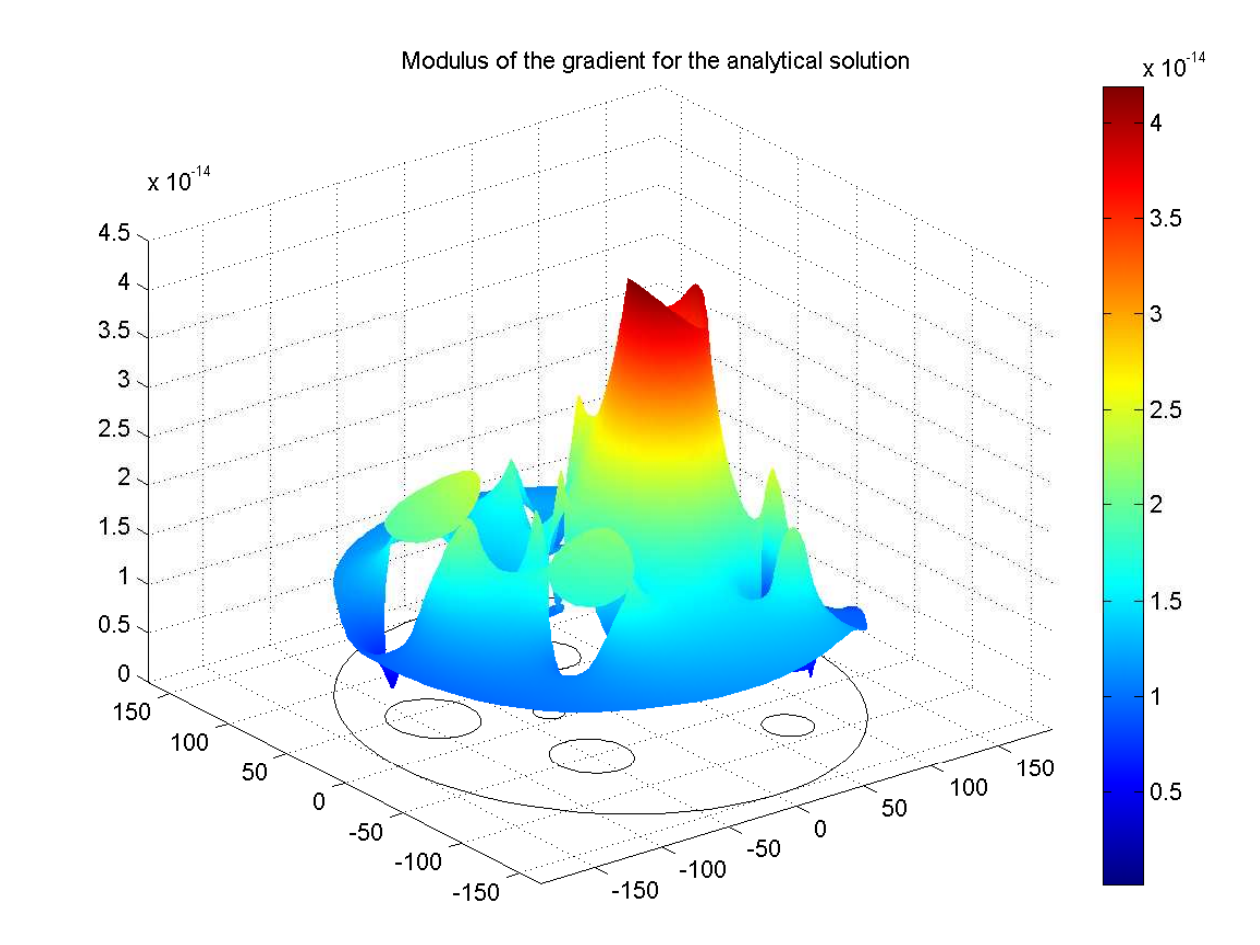}

a)
\end{minipage}
\begin{minipage}[b]{0.495\linewidth}
\centering
\includegraphics[width=\textwidth]{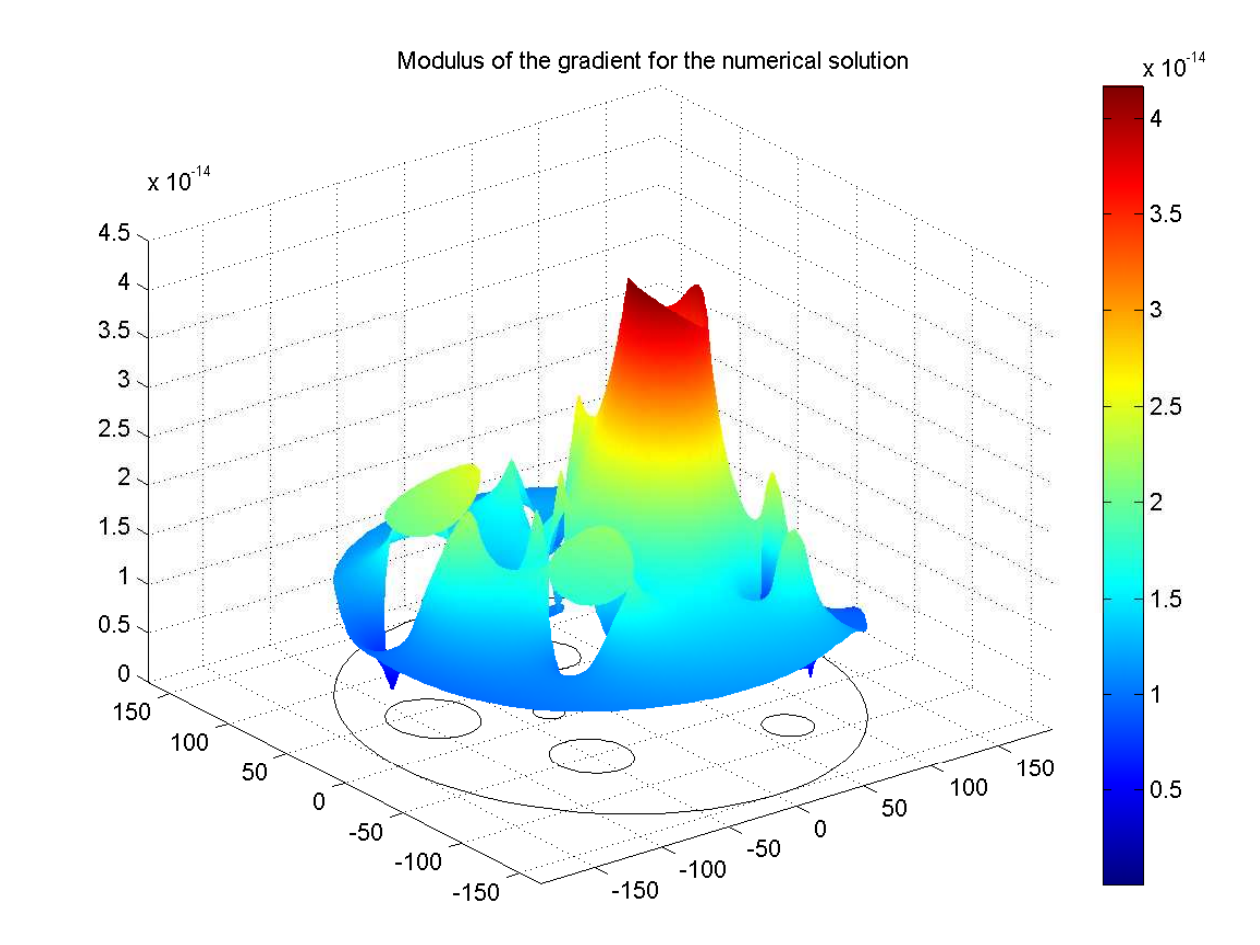}

b)
\end{minipage}
\caption{a) Computations based on  asymptotic formula (\ref{regptransmultpi}) and  b) Numerical solution for the absolute value of the gradient of the regular part of Green's function for the transmission problem. Here $\Y=(60\text{m}, 0\text{m})$ and lies in the Steel AISI 4340 inclusion. The mesh contains 44784 elements. There is once again a good similarity between the surface plots.}
\label{inclfigpi}

\end{figure}

\bibliographystyle{amsalpha}

\end{document}